\newtheorem{theorem}{Theorem}
\newtheorem{lemma}{Lemma}
\newtheorem{definition}{Definition}
\patchcmd{\@ssect@ltx}{\addcontentsline{toc}{#1}{\protect\numberline{}#8}}{}{}{}
\renewcommand{\fnum@figure}{\textbf{Figure \thefigure}}
\begin{document}

\title{On the Sample Complexity of Quantum Boltzmann Machine Learning}

\author{Luuk Coopmans}
\email{luukcoopmans2@gmail.com}
\affiliation{Quantinuum, Partnership House, Carlisle Place, London SW1P 1BX, United Kingdom}

\author{Marcello Benedetti}
\email{marcello.benedetti@quantinuum.com}
\affiliation{Quantinuum, Partnership House, Carlisle Place, London SW1P 1BX, United Kingdom}

\date{December 17, 2024}

\begin{abstract}
Quantum Boltzmann machines (QBMs) are machine-learning models for both classical and quantum data. We give an operational definition of QBM learning in terms of the difference in expectation values between the model and target, taking into account the polynomial size of the data set.
By using the relative entropy as a loss function this problem can be solved without encountering barren plateaus.
We prove that a solution can be obtained with stochastic gradient descent using at most a polynomial number of Gibbs states. 
We also prove that pre-training on a subset of the QBM parameters can only lower the sample complexity bounds. In particular, we give pre-training strategies based on mean-field, Gaussian Fermionic, and geometrically local Hamiltonians. We verify these models and our theoretical findings numerically on a quantum and a classical data set. 
Our results establish that QBMs are promising machine learning models.
\end{abstract}

\maketitle

\section*{Introduction}

Machine learning (ML) research has developed into a mature discipline with applications that impact many different aspects of society. Neural network and deep learning architectures have been deployed for tasks such as facial recognition, recommendation systems, time series modeling, and for analyzing highly complex data in science. In addition, unsupervised learning and generative modeling techniques are widely used for text, image, and speech generation tasks, which many people encounter regularly via interaction with chatbots and virtual assistants. Thus, the development of new machine learning models and algorithms can have significant consequences for a wide range of industries, and more generally, society as a whole~\cite{Bommasani_2021}.

Recently, researchers in quantum information science have asked the question of whether quantum algorithms can offer advantages over conventional machine learning algorithms. This has led to the development of quantum algorithms for gradient descent, classification, generative modeling, reinforcement learning, as well as many other tasks~\cite{Dunjko_2018, Ciliberto_2018, Benedetti_2019, Lamata_2020, Cerezo_2022}. However, one cannot straightforwardly generalize results from the conventional ML realm to the quantum ML realm. One must carefully reconsider the data encoding, training complexity, and sampling in the quantum machine learning (QML) setting. For example, it is yet unclear how to efficiently embed large data sets into quantum states so that a genuine quantum speedup is achieved~\cite{Aaronson_2015, Hoefler_2023}. Furthermore, as quantum states prepared on quantum devices can only be accessed via sampling, one cannot estimate properties with arbitrary precision. This gives rise to new problems, such as barren plateaus~\cite{McClean_2018, Grant2019initialization, Cerezo_2021, Marrero_2021, CerveroMartin2023, rudolph2023trainability}, that make the training of certain QML models challenging or even practically impossible.

In this work, we show that a particular quantum generative model, the quantum Boltzmann machine~\cite{Amin_2018, Benedetti_2017, Kieferova_2017, Kappen_2020}~(QBM) without hidden units, does not suffer from these issues, and can be trained with polynomial sample complexity on future fault-tolerant quantum computers. QBMs are physics-inspired ML models that generalize the classical Boltzmann machines to quantum Hamiltonian ans{\"a}tze. The Hamiltonian ansatz is defined on a graph where each vertex represents a qubit and each edge represents an interaction. The task is to learn the strengths of the interactions, such that samples from the quantum model mimic samples from the target data set. Quantum generative models of this kind could find use in ML for science problems, by learning approximate descriptions of the experimental data. QBMs could also play an important role as components of larger QML models~\cite{Benedetti_2018, Khoshaman_2019, Crawford_2019, Wilson_2021, Wang_2021}. This is similar to how classical Boltzmann machines can provide good weight initializations for the training of deep neural networks~\cite{Hinton_2006}. One advantage of using a QBM over a classical Boltzmann machine is that it is more expressive since the Hamiltonian ansatz can contain more general non-commuting terms. This means that in some settings the QBM outperforms its classical counterpart, even for classical target data~\cite{Kappen_2020}. Here we focus on QBMs without hidden units since their inclusion leads to additional challenges~\cite{Wiebe_2019, Kieferova_2021, Marrero_2021}, and there exists no evidence that they are beneficial.

In order to obtain practically relevant results, we begin by providing an operational definition of QBM learning. Instead of focusing on an information-theoretic measure, we assess the QBM learning performance by the difference in expectation values between the target and the model. We require that the data set and model parameters can be efficiently stored in classical memory. This means that the number of training data points is polynomial in the number of features if the target is classical, or in the number of particles if the target is quantum. Thus, statistics computed from such data set have polynomial precision. We then employ stochastic gradient descent methods~\cite{Khaled2020, Garrigos2023} in combination with shadow tomography~\cite{Aaronson_2017, Huang_2021, Rouze_2023} to prove that QBM learning can be solved with polynomially many evaluations of the QBM model. Each evaluation of the model requires the preparation of one \emph{Gibbs state} and therefore in this paper we refer to the sample complexity as the required number of Gibbs state preparations. We also prove that the required number of Gibbs samples is reduced by pre-training on a subset of the parameters. This means that classically pre-training a simpler QBM model can potentially reduce the (quantum) training complexity. For instance, we show that one can analytically pre-train a mean-field QBM and a Gaussian Fermionic QBM in closed form. In addition, we show that one can pre-train geometrically local QBMs with gradient descent, which has some improved performance guarantees. To the best of our knowledge, this is the first time these exactly solvable models are used for either training or pre-training QBMs.

It is important to note that the time complexity of preparing and estimating properties of Gibbs states at finite temperature is largely unknown, but can be exponential in the worst case~\cite{Bravyi_2022}. For cases where the Gibbs state preparation is intractable, the polynomial sample complexity of QBM learning becomes irrelevant as obtaining one sample already takes exponential time. On the other hand, the polynomial sample complexity is a prerequisite for efficient learning which many other QML models do not share. In this work we are only concerned with the sample complexity of the QBM learning problem, and we do not discuss any specific Gibbs sampling implementation. We shall mention in passing that Gibbs states satisfying certain locality constraints can be efficiently simulated by classical means, e.g., using tensor networks~\cite{Alhambra_2021,Kuwahara2021}. Furthermore, generic Gibbs states can be prepared and sampled on a quantum computer by a variety of methods (e.g., see~\cite{Poulin_2009, Temme_2011, kastoryano2016quantum, Chowdhury_2016, anschuetz2019realizing, Holmes_2022, chifang2023quantum, zhang2023dissipative, Coopmans_2023}), potentially with a quadratic speedup over classical algorithms. There also exist hybrid quantum-classical variational algorithms for Gibbs state preparation~\cite{Wu_2019, chowdhury2020variational, Liu_2021, zhao2018bias, consiglio2023variational, huijgen2023training}, but there are currently many open issues regarding their trainability~\cite{McClean_2018, Cerezo_2021, Anschuetz_2022, rudolph2023trainability}. In practice QBM learning allows for great flexibility in model design and can be combined with almost any Gibbs sampler.

Throughout the paper we compare QBM learning and a related, but crucially different, problem called Hamiltonian learning~\cite{Wiebe_2014, franca2020, Anshu_2021, Haah_2024, Rouze_2023, Onorati_2023}. Learning the Hamiltonian of a physical system in thermal equilibrium can be thought of as performing quantum state tomography. This has useful applications, for example in the characterization of quantum devices, but requires prior knowledge about the target system and suffers an unfavorable scaling with respect to the temperature~\cite{Anshu_2021}. In contrast, the aim of QBM learning is to produce a generative model of the data at hand, with the Hamiltonian playing the role of an ansatz.

Our setup and theoretical results are summarized in Fig.~\ref{fig:intuition}. We provide classical numerical simulation results in Figs.~\ref{fig:noiseless_results} and~\ref{fig:noisy_results} which confirm our analytical findings. We then conclude the paper with a discussion of the generalization capability of QBMs and avenues for future work.

\begin{figure*}[t]
    \centering    
    \includegraphics[width=1.0\textwidth]{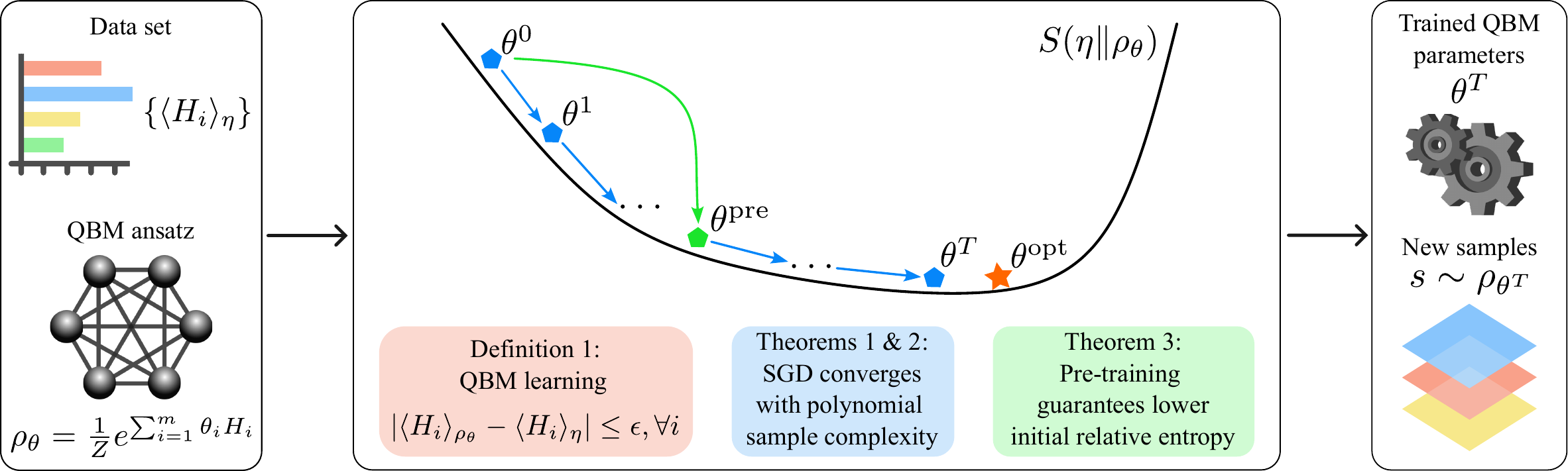}
    \caption{Summary of the results. The quantum Boltzmann machine (QBM) learning algorithm takes as input a data set of size polynomial in the number of features/qubits, and an ansatz with parameters $\theta$ and a set of $m$ Hermitian operators $\{H_i\}$. In Definition~\ref{def:problem} we provide an operational definition of the QBM learning problem where the model and target expectations must be close to within polynomial precision $\epsilon$. A solution $\theta^\mathrm{opt}$ is guaranteed to exist by Jaynes' principle. With Theorems~\ref{thm:training} and ~\ref{thm:training_alpha} we establish that QBM learning can be solved by minimizing the quantum relative entropy $S(\eta \| \rho_\theta)$ with respect to $\theta$ using stochastic gradient descent (SGD). This requires a polynomial number of iterations $T$, each using a polynomial number of Gibbs state preparations, i.e., the sample complexity is polynomial. With Theorem~\ref{thm:pretraining} we prove that pre-training strategies that optimize a subset $\theta^\mathrm{pre}$ of the QBM parameters are guaranteed to lower the initial quantum relative entropy. After training the QBM can be used to generate new synthetic data. Icons by \href{https://www.svgrepo.com/}{SVG Repo}, used under \href{https://creativecommons.org/publicdomain/zero/1.0/}{CC0 1.0} and adapted for our purpose.}
    \label{fig:intuition}
\end{figure*}

\section*{Methods}

We start by formally setting up the quantum Boltzmann machine (QBM) learning problem. We give the definitions of the target and model and describe how to assess the performance by the precision of the expectation values. These definitions and assumptions are key to our results, and we introduce and motivate them here carefully. To further motivate our problem definition we compare to other related problems in the literature, such as quantum Hamiltonian learning.  

We consider an $n$-qubit density matrix $\eta$ as the \emph{target} of the machine learning problem. If the target is classical, $n$ could represent the number of features, e.g., the pixels in black-and-white pictures, or more complex features that have been extracted and embedded in the space of $n$ qubits. If the target is quantum, $n$ could represent the number of  spin-$\tfrac{1}{2}$ particles, but again more complex many-body systems can be embedded in the space of $n$ qubits. In the literature, it is often assumed that algorithms have direct and simultaneous access to copies of $\eta$. We do not take that path here. Instead, we consider a setup where one only has access to \emph{classical} information about the target. We have a data set $\mathcal{D} = \{s^\mu\}$ of data samples $s^\mu$ and, crucially, we assume the data set can be efficiently stored in a classical memory: the amount of memory required to store each data sample is polynomial in $n$, and there are polynomially many samples. For example, $s^\mu$ can be bitstrings of length $n$; this includes data sets like binary images and time series, categorical and count data, or binarized continuous data. As another example, the data may originate from measurements on a quantum system. In this case, $s^\mu$ identifies an element of the positive operator-valued measure describing the measurement. 

Next, we define the machine learning \emph{model} which we use for data fitting. The fully-visible QBM is an $n$-qubit mixed quantum state of the form 
\begin{equation}
    \rho_\theta = \frac{e^{\mathcal{H}_\theta}}{Z}, 
\end{equation} 
where $Z=\Tr[e^{\mathcal{H}_\theta}]$ is the partition function. The parameterized Hamiltonian is defined as 
\begin{equation}
\label{eq:generic_H}
    \mathcal{H}_\theta = \sum_{i=1}^m \theta_i H_i,
\end{equation} 
where $\theta \in \mathbb{R}^m$ is the parameter vector, and $\{H_i\}$ is a set of $m$ Hermitian and orthogonal operators acting on the $2^n$-dimensional Hilbert space. For example, these could be $n$-qubit Pauli operators. As the true form of the target density matrix is unknown, the choice of operators $\{ H_i \}$ in the Hamiltonian is an educated guess. It is possible that, once the Hamiltonian ansatz is chosen, the space of QBM models does not contain the target, i.e., $\rho_\theta \neq \eta, \forall \theta$. This is called a \emph{model mismatch}, and it is often unavoidable in machine learning. In particular, since we require $m$ to be polynomial in $n$, $\rho_\theta$ cannot encode an arbitrary density matrix. Note that for fixed parameters, $\rho_\theta$ is also known as Gibbs state in the literature. Thus, we refer to an evaluation of the model as the preparation of a Gibbs state.

A natural measure to quantify how well the QBM $\rho_\theta$ approximates the target $\eta$ is the quantum relative entropy
\begin{equation}
\label{eq:RelEnt}
    S(\eta \| \rho_\theta) = \Tr[\eta \log{\eta}] - \Tr[\eta \log{\rho_\theta}]. 
\end{equation} 
This measure generalizes the classical Kullback-Leibler divergence to density matrices.
The relative entropy is exactly zero when the two densities are equal, $\eta=\rho_\theta$, and $S > 0$ otherwise. In addition, when $S(\eta \| \rho_\theta)\leq\epsilon$, by Pinsker's inequality, all Pauli expectation values are within $\mathcal{O}(\sqrt{\epsilon})$. However, achieving this is generally not possible due to the model mismatch. In theory, one can minimize the relative entropy in order to find the optimal model parameters $\theta^{\mathrm{opt}}=\mathrm{argmin}_{\theta}S(\eta \| \rho_\theta)$. The form of the partial derivatives can be obtained analytically and reads
\begin{equation}
\label{eq:RelEntropDeriv}
    \frac{\partial S(\eta\|\rho_\theta)}{\partial \theta_i} = \expval{H_i}_{\rho_\theta} - \expval{H_i}_\eta . 
\end{equation}
This is the difference between the target and model expectation values of the operators that we choose in the ansatz. A stationary point of the relative entropy is obtained when $\expval{H_i}_{\rho_\theta} = \expval{H_i}_\eta$ for $i \in \{1, \dots, m\}$. Since $S$ is strictly convex (see Supplementary Note 2) this stationary point is the unique global minimum. 

It is difficult in practice to quantify how well the QBM is trained by means of relative entropy. In order to accurately estimate $S(\eta\|\rho_\theta)$, one requires access to the entropy of the target and the partition function of the model. Moreover, due to the model mismatch, the optimal QBM may have $S(\eta \| \rho_{\theta^{\mathrm{opt}}})>0$, which is unknown beforehand. In this work, we provide an operational definition of QBM learning based on the size of the gradient $\nabla S(\eta \|\rho_\theta)$.

\begin{definition}[QBM learning problem]
\label{def:problem}

Given a polynomial-space data set $\{s^\mu\}$ obtained from an $n$-qubit target density matrix $\eta$, a target precision $\epsilon>0$, and a fully-visible QBM with Hamiltonian $\mathcal{H}_\theta =\sum_{i=1}^m \theta_i H_i$, find a parameter vector $\theta$ such that with high probability
\begin{equation}
\label{eq:problem_def}
    | \expval{H_i}_{\rho_{\theta}}  - \expval{H_i}_\eta | \leq \epsilon, \qquad \forall i .
\end{equation}
\end{definition}

The expectation values of the target can be obtained from the data set in various ways. For example, for the generative modeling of a classical binary data set one can define a pure quantum state and obtain its expectation values (see Kappen~\cite{Kappen_2020} and Supplementary Note 5).
For the modeling of a target quantum state (density matrix) one can estimate expectation values from the outcomes of measurements performed in different bases.
Due to the polynomial size of the data set, we can only compute properties of the target (and model) to finite precision. For example, suppose that $s^\mu$ are data samples from some unknown probability distribution $P(s)$ and that we are interested in the sample mean. An unbiased estimator for the mean is $\hat{s} = \frac{1}{M}\sum_{\mu=1}^M s^\mu$. The variance of this estimator is $\sigma^2/M$, where $\sigma^2$ is the variance of $P(s)$. By Chebyshev's inequality, with high probability the estimation error is of order $\sigma/\sqrt{M}$. The polynomial size of the data set implies that the error decreases polynomially in general. In light of this, we say the QBM learning problem is solved for any precision $\epsilon > 0$ in Eq.~\eqref{eq:problem_def}. The idea is that the expectation values of the QBM and target should be close enough that one cannot distinguish them without enlarging the data set.

An exact solution to the QBM learning problem always exists by Jaynes' principle~\cite{jaynes1957}: given a set of target expectations $\{ \expval{H_i}_\eta \}$ there exists a Gibbs state $\rho_{\theta^{\mathrm{opt}}} = e^{\sum_i \theta^{\mathrm{opt}}_i H_i}/Z$ such that $| \expval{H_i}_{\rho_{\theta^{\mathrm{opt}}}} - \expval{H_i}_\eta |= 0, \forall i$. We refer to the model corresponding to Jaynes' solution $\theta^{\mathrm{opt}}$ as the optimal model. In QBM learning we try to get as close as possible to this optimal model by minimizing the difference in expectation values. As shown in Supplementary Note 3, any solution to the QBM learning problem implies a bound on the optimal relative entropy, namely 
\begin{equation}
\label{eq:relentropbound}
    S(\eta\|\rho_{\theta}) -  S(\eta\|\rho_{\theta^{\mathrm{opt}}}) \leq 2 \epsilon \| \theta - \theta^\textrm{opt} \|_1 .
\end{equation} 
A similar result can be derived from Proposition A2 in Rouz\'{e} and Stilck Fran\c{c}a~\cite{Rouze_2023}. This means that if one can solve the QBM learning problem to precision $\epsilon \leq \frac{\epsilon^\prime}{2\| \theta - \theta^\textrm{opt} \|_1}$, one can also solve a stronger learning problem based on the relative entropy to precision $\epsilon^\prime$. However, this requires bounding $\| \theta - \theta^\textrm{opt} \|_1$.

We further motivate our problem definition by highlighting important differences with a related problem called quantum Hamiltonian learning~\cite{Wiebe_2014, franca2020, Anshu_2021, Haah_2024, Rouze_2023, Onorati_2023}. This problem can be understood as trying to find a parameter vector $\|\theta-\theta^*\|_1 \leq \epsilon$, where $\theta^*$ is the \emph{true} parameter vector defining a target Gibbs state $\eta$. This is, in fact, a stronger version of the QBM learning problem because finding the true parameters, $\theta^*$, is more demanding than getting $\epsilon$-close to the expectation values of the optimal model from Jaynes' solution. In realistic machine learning settings, we do not know the exact form of $\eta$ a priori and, Hamiltonian learning is in effect equivalent to quantum state tomography. This task has an exponential sample-complexity lower bound~\cite{Aaronson_2017}. Moreover, even if we do know the exact form, minimizing the distance to the true parameters, $\|\theta-\theta^*\|_1$, can make the problem significantly more complicated than QBM learning, and potentially impractical to solve. Consider a single-qubit target $\eta = e^{\mathcal{H}^*} / \Tr[e^{\mathcal{H}^*}]$, with Hamiltonian $\mathcal{H}^*=\theta^* \sigma^z$. We can easily see that near $\expval{\sigma^z}_\eta=1-\epsilon$ the parameter $\theta^*$ diverges. For example, if $\theta^*=2.64$ then $\expval{\sigma^z}_\eta=0.99$ and if $\theta^*=3.8$ then $\expval{\sigma^z}_\eta=0.999$, from which we see that the optimal parameter is very sensitive, $\mathcal{O}(1)$, to small changes, $\mathcal{O}(10^{-2})$, in expectation values. In these cases, Hamiltonian learning is much harder than simply finding a QBM whose expectation value is within $\epsilon$. A similar argument can be made for high-temperature targets near the maximally mixed state. Here the QBM problem is trivial: since all target expectations are $\expval{H_i}_\eta \approx 2^{-n} \Tr[H_i]$, any model sufficiently close to the maximally-mixed state solves the QBM learning problem. In contrast, Hamiltonian learning insists on estimating the unique true parameters.

Let us now discuss a method for solving QBM learning. We approach the problem by iteratively minimizing the quantum relative entropy,  Eq.~\eqref{eq:RelEnt}, by stochastic gradient descent (SGD)~\cite{Garrigos2023}. This method requires access to a stochastic gradient $\hat{g}_{\theta^t}$ computed from a set of samples at time $t$. Recall that our gradient has the form given in Eq.~\eqref{eq:RelEntropDeriv}. The expectations with respect to the target are estimated from a random subset of the data, often called a \emph{mini-batch}. The mini-batch size is a hyper-parameter and determines the precision $\xi$ of each target expectation. Similarly, the QBM model expectations are estimated from measurements, for example using classical shadows~\cite{Aaronson_2017, Huang_2021} of the Gibbs state $\rho_{\theta^t}$ approximately prepared on a quantum device~\cite{Coopmans_2023}. The number of measurements is also a hyper-parameter and determines the precision $\kappa$ of each QBM expectation. We assume that the stochastic gradient is unbiased, i.e., $\mathbb{E}[\hat{g}_{\theta^t}] = \nabla S(\eta \|\rho_\theta)|_{\theta = \theta^t}$, and that each entry of the vector has bounded variance. At iteration $t$, SGD updates the parameters as 
\begin{equation}
\label{eq:sgd_update}
    \theta^{t+1} = \theta^{t} - \gamma^{t} \hat{g}_{\theta^{t}},
\end{equation} 
where $\gamma^t$ is called the learning rate.

\section*{Results}

\subsection*{Sample complexity}

Stochastic gradient descent solves the QBM learning problem with polynomial sample complexity. We state this in the following theorem, which is the main result of our work.

\begin{theorem}[QBM training]
\label{thm:training}
Given a QBM defined by a set of $n$-qubit Pauli operators $\{H_i\}_{i=1}^m$, a precision $\kappa$ for the QBM expectations, a precision $\xi$ for the data expectations, and a target precision $\epsilon$ such that $\kappa^2 + \xi^2 \geq \frac{\epsilon}{2m}$. After 
\begin{equation}
\label{eq:trainingsteps}
    T = \frac{48\delta_0 m^2  (\kappa^2 + \xi^2 )}{\epsilon^4}
\end{equation} 
iterations of stochastic gradient descent on the relative entropy $S(\eta \|\rho_\theta)$ with constant learning rate $\gamma^t=\frac{\epsilon}{4m^2(\kappa^2+\xi^2)}$, we have
\begin{equation}
    \mathrm{min}_{t=1,..,T} \; \mathbb{E}| \expval{H_i}_{\rho_{\theta^t}}  - \expval{H_i}_\eta | \leq \epsilon, \qquad \forall i,
\end{equation} where $\mathbb{E}[\cdot]$ denotes the expectation with respect to the random variable $\theta^t$. 
Each iteration $t \in \{0, \dots, T\}$ requires
\begin{equation}
\label{eq:copies}
    N \in \mathcal{O} \left( \frac{1}{{\kappa^4}} \log \frac{ m} { 1-\lambda^{\frac{1}{T}} } \right)
\end{equation}
preparations of the Gibbs state $\rho_{\theta^t}$, and the success probability of the full algorithm is $\lambda$. Here, $\delta_0 = S(\eta\|\rho_{\theta^0})-S(\eta\|\rho_{\theta^\mathrm{opt}})$ is the relative entropy difference with the optimal model $\rho_{\theta^\mathrm{opt}}$.
\end{theorem}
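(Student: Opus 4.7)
The plan is to combine a textbook convergence bound for stochastic gradient descent on an $L$-smooth objective with a shadow-tomography bound on the per-iteration gradient, then translate the resulting bound on $\mathbb{E}\|\nabla S\|_2^2$ into the stated per-coordinate bound on expectation-value errors via the identity $\partial_i S = \expval{H_i}_{\rho_\theta} - \expval{H_i}_\eta$ in Eq.~\eqref{eq:RelEntropDeriv}.

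First I would establish the analytic properties of $f(\theta) := S(\eta\|\rho_\theta)$ that a standard SGD theorem consumes. Convexity is already granted by Supplementary Note~2. For smoothness, the Hessian $\partial_i \partial_j f$ equals the symmetrised covariance of $H_i$ and $H_j$ under $\rho_\theta$; since every Pauli satisfies $\|H_i\| \leq 1$, the diagonal entries of the Hessian are bounded by $1$, so its operator norm is at most $\Tr(\nabla^2 f) \leq m$. Hence $f$ is $m$-smooth. This is the single place where the Pauli assumption enters the $m$-dependence of the step size and the iteration count.

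Second, I would bound the variance of the stochastic gradient. Decomposing $\hat g_{\theta^t}$ into independent data- and model-noise contributions with coordinate-wise variances $\xi^2$ and $\kappa^2$ respectively gives $\mathbb{E}\|\hat g_{\theta^t} - \nabla f(\theta^t)\|_2^2 \leq m(\kappa^2+\xi^2) =: \sigma^2$. Plugging $L \leq m$ and this $\sigma^2$ into the classical constant-step SGD bound,
\begin{equation*}
\min_{t \leq T} \mathbb{E}\|\nabla f(\theta^t)\|_2^2 \;\leq\; \frac{2\delta_0}{\gamma T} + \gamma L \sigma^2 ,
\end{equation*}
and inserting the values of $\gamma$ and $T$ from Eq.~\eqref{eq:trainingsteps} drives the right-hand side below $\epsilon^2$ while respecting the stability constraint $\gamma \leq 1/L$; the hypothesis $\kappa^2 + \xi^2 \geq \epsilon/(2m)$ is precisely what reconciles the two requirements. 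The claimed per-coordinate bound then follows from $\max_i (\nabla_i f)^2 \leq \|\nabla f\|_2^2$ together with Jensen's inequality $\mathbb{E}|X| \leq \sqrt{\mathbb{E} X^2}$.

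Third, I would invoke the classical-shadow guarantee for Pauli observables to control the per-iteration sample cost: estimating all $m$ quantities $\expval{H_i}_{\rho_{\theta^t}}$ with per-coordinate squared error at most $\kappa^2$ and failure probability $\delta$ requires $N = O(\log(m/\delta)/\kappa^4)$ copies of $\rho_{\theta^t}$, where one factor of $\kappa^{-2}$ controls the squared error of each estimator and the other secures the bounded-variance hypothesis used in the SGD bound. Forcing the overall success probability to be $\lambda$ by a union bound over the $T$ iterations fixes $\delta = 1 - \lambda^{1/T}$, reproducing Eq.~\eqref{eq:copies}. The target expectations are handled analogously by re-sampling mini-batches from $\mathcal{D}$.

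The main obstacle, I expect, is not the scaling but the careful reconciliation of the shadow-tomography guarantees with the hypotheses of the SGD theorem. Standard SGD bounds assume an exactly unbiased stochastic gradient with finite second moment, whereas finite-shot Pauli and shadow estimators are only approximately unbiased and the shadow bound is stated as a high-probability $\ell_\infty$ statement; some care will be needed to promote it to a bound on the squared-$\ell_2$ variance consumed by the SGD analysis, and to propagate the per-iteration failure probability through the $T$-step union bound without incurring extra logarithmic factors that would spoil the $\lambda^{1/T}$ form.
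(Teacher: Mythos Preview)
Your proposal is correct and follows the paper's proof almost step for step: the paper likewise establishes $L$-smoothness of $S(\eta\|\rho_\theta)$ (obtaining $L=2m$ via a Gershgorin bound on the Hessian rather than your trace bound $L\leq m$, which is in fact tighter), derives the identical variance bound $\mathbb{E}\|\hat g-\nabla S\|^2 \leq m(\kappa^2+\xi^2)$, feeds these with $A=0$, $B=1$, $C=m(\kappa^2+\xi^2)$ into the Khaled--Richt\'arik SGD corollary, and then invokes the Huang et al.\ shadow-tomography bound with per-iteration failure probability $\tilde\lambda = 1-\lambda^{1/T}$ exactly as you do. The subtlety you flag in your final paragraph---promoting the high-probability $\ell_\infty$ shadow guarantee to the unbiased, bounded-second-moment oracle assumed by the SGD theorem---is handled by the paper in the same informal manner you anticipate.
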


A detailed proof of this theorem is given in Supplementary Note 3.
It consists of carefully combining three important observations and results. First, we show that the quantum relative entropy for any QBM $\rho_\theta$ is $L$-smooth with $L=2m\max_j \|H_j\|_2^2$, which follows from upper bounding the relative entropy and quantum belief propagation~\cite{Hastings_2007}. A similar result can be found in Rouz\'{e} and Stilck Fran\c{c}a~\cite{Rouze_2023}, Proposition A4. This is then combined with SGD convergence results from the machine learning literature~\cite{Khaled2020, Garrigos2023} to obtain the number of iterations $T$. Finally, we use sampling bounds from quantum shadow tomography to obtain the number of preparations $N$. In this last step, we focus on the shadow tomography protocol in Huang et al.~\cite{Huang_2021}, which restricts our result to Pauli observables $H_i\equiv P_i$, thus $\|H_i\|_2 = 1$. It is possible to extend this to generic two-outcome observables~\cite{Aaronson_2017, Badescu2021} with a polylogarithmic overhead compared to Eq.~\eqref{eq:copies}. Furthermore, for $k$-local Pauli observables, we can improve the result to 
\begin{align}
    N \in \mathcal{O}\left( \frac{3^k}{\kappa^2} \log \frac{ m} { 1-\lambda^{\frac{1}{T}} } \right)
\end{align}
with classical shadows constructed from randomized measurement~\cite{Huang2020} or by using pure thermal shadows~\cite{Coopmans_2023}.

By combining Eqs.~\eqref{eq:trainingsteps} and~\eqref{eq:copies}, we see that the final number of Gibbs state preparations $N_{\mathrm{tot}}\geq T\times N$ scales polynomially with $m$, the number of terms in the QBM Hamiltonian. By our classical memory assumption, we can only have $m \in \mathcal{O}(\mathrm{poly}(n))$. This means that the number of required measurements to solve QBM learning scales polynomially with the number of qubits. Consequently, there are no barren plateaus in the optimization landscape for this problem, resolving an open question from the literature~\cite{Anschuetz_2022, Kieferova_2021} for QBMs without hidden units. Furthermore, also the stronger problem in Eq.~\eqref{eq:relentropbound} can be solved without encountering an exponentially vanishing gradient as we show in Supplementary Note 3. 

We remark, however, that our result does not imply a solution to the stricter Hamiltonian learning problem with polynomially many samples. In order to achieve this one needs some stronger conditions. For example, using $\alpha$-strong convexity for the specific case of geometrically local models, as shown in Anshu et al.~\cite{Anshu_2021}. 
In Supplementary Note 2, we investigate generalizing this by looking at generic Hamiltonians made of low-weight Pauli operators. We perform numerical simulations using differentiable programming~\cite{Baydin2017} and find evidence indicating that even in this case the quantum relative entropy is $\alpha$-strong convex. This means we could potentially achieve an improved sample complexity for QBM learning. We prove the following theorem in Supplementary Note 3.

\begin{theorem}[$\alpha$-strongly convex QBM training]
\label{thm:training_alpha}
Given a QBM defined by a Hamiltonian ansatz $\mathcal{H}_\theta$ such that $S(\eta\|\rho_\theta)$ is $\alpha$-strongly convex, a precision $\kappa$ for the QBM expectations, a precision $\xi$ for the data expectations, and a target precision $\epsilon$ such that $\kappa^2 + \xi^2 \geq \frac{\epsilon}{2m}$. After
\begin{equation}
    T =\frac{18 m^2 (\kappa^2 + \xi^2)}{\alpha^2\epsilon^2}
\end{equation}
iterations of stochastic gradient descent on the relative entropy $S(\eta \|\rho_\theta)$ with learning rate $\gamma^t \leq \frac{1}{4m^2}$, we have
\begin{equation}
    \mathrm{min}_{t=1,..,T} \; \mathbb{E}| \expval{H_i}_{\rho_{\theta^t}}  - \expval{H_i}_\eta | \leq \epsilon, \qquad \forall i.
\end{equation} 
Each iteration requires the number of samples given in Eq.~\eqref{eq:copies}. 
\end{theorem}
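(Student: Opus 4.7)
The plan is to mirror the three-part structure of the proof of Theorem~\ref{thm:training}, but to swap in the sharper SGD convergence rate available under $\alpha$-strong convexity of $S(\eta\|\rho_\theta)$. The three ingredients are: (i) the $L$-smoothness estimate for the quantum relative entropy already obtained via quantum belief propagation in the proof of Theorem~\ref{thm:training}, giving $L = 2m \max_j \|H_j\|_2^2$, which for Pauli ansätze reduces to $L = 2m$; (ii) a classical SGD convergence result for objectives that are simultaneously $L$-smooth and $\alpha$-strongly convex, applied to the same unbiased, bounded-variance stochastic gradient $\hat g_{\theta^t}$ used in Theorem~\ref{thm:training}; and (iii) the shadow-tomography estimate yielding the per-iteration Gibbs-state count $N$ of Eq.~\eqref{eq:copies}, which carries over unchanged since the per-step measurement task is identical.

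Concretely, I would first record that $\hat g_{\theta^t}$ is coordinate-wise unbiased for $\nabla S(\eta\|\rho_{\theta^t})$ and, since the data-side and model-side estimators are independent with precisions $\xi$ and $\kappa$, its per-coordinate variance is at most $\kappa^2 + \xi^2$. I would then invoke a standard constant-step-size SGD lemma for $L$-smooth, $\alpha$-strongly convex functions (e.g.\ the version in Garrigos-Gower~\cite{Garrigos2023}) which, under the stated $\gamma^t \leq 1/(4m^2) \leq 1/(2L)$, bounds the expected optimality gap $\mathbb{E}[S(\eta\|\rho_{\theta^T}) - S(\eta\|\rho_{\theta^\mathrm{opt}})]$ by a geometric-decay term in $(1-\alpha\gamma)^T$ plus a ``noise floor'' proportional to $\gamma \sigma^2/\alpha$. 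Picking $\gamma$ and then $T$ so that both terms are driven below $\epsilon^2/(4L)$ yields, after routine bookkeeping, the stated iteration count $T = 18 m^2 (\kappa^2+\xi^2)/(\alpha^2 \epsilon^2)$ and confirms that the prescribed learning-rate range is consistent.

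Finally I would convert the function-value gap into a bound on individual gradient components: the $L$-smoothness inequality $\|\nabla S(\theta)\|_2^2 \leq 2L \, [S(\eta\|\rho_\theta) - S(\eta\|\rho_{\theta^\mathrm{opt}})]$, combined with $\|\cdot\|_\infty \leq \|\cdot\|_2$, implies that at the best iterate $t^\star$ we have $\max_i |\expval{H_i}_{\rho_{\theta^{t^\star}}} - \expval{H_i}_\eta| \leq \epsilon$ in expectation, which in particular gives the claimed coordinate-wise bound. Re-using the shadow-tomography and union-bound argument of Theorem~\ref{thm:training}, the per-iteration Gibbs-state count is again given by Eq.~\eqref{eq:copies} with overall success probability $\lambda$.

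The main obstacle will be extracting the precise constants $18$ and $1/(4m^2)$ from the strongly-convex SGD lemma: one must simultaneously optimize over $\gamma$ and carefully propagate the factor-of-$m$ gap between per-coordinate and total stochastic-gradient variance without picking up extra logarithmic factors, and one must verify that $L$-smoothness together with the technical assumption $\kappa^2+\xi^2 \geq \epsilon/(2m)$ is what keeps both the iteration count and the per-iteration sample count polynomial in the advertised quantities. The $\alpha$-strong convexity itself is simply invoked as a hypothesis here; its plausibility for generic low-weight Pauli Hamiltonians is the separate subject of Supplementary Note~2.
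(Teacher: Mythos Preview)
Your plan is sound and would produce a valid polynomial bound, but it departs from the paper's argument in the conversion step, and that departure costs you a factor of $m$ in the iteration count.

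The paper does not go from the function-value gap to the expectation-value error via the $L$-smoothness inequality $\|\nabla S\|_2^2 \leq 2L\bigl(S-S^{\mathrm{opt}}\bigr)$ that you propose. Instead it first proves a ``collinearity'' identity,
\[
S(\eta\|\rho_\theta)-S(\eta\|\rho_{\theta^{\mathrm{opt}}})=S(\rho_{\theta^{\mathrm{opt}}}\|\rho_\theta),
\]
so the function gap is itself a relative entropy between two Gibbs states. It then applies Pinsker's inequality (plus Jensen and the variational form of the trace distance) to obtain $\mathbb{E}\,|\langle H_i\rangle_{\rho_{\theta^T}}-\langle H_i\rangle_\eta|\leq \sqrt{2\,\mathbb{E}\,S(\rho_{\theta^{\mathrm{opt}}}\|\rho_{\theta^T})}$ with no $m$-dependence in the conversion. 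Setting $\epsilon'=\epsilon^2/2$ inside the Khaled--Richt\'arik PL-convergence lemma then gives $T=18m^2(\kappa^2+\xi^2)/(\alpha^2\epsilon^2)$. Your route instead carries $L=2m$ into the conversion, so you would need the function gap below $\epsilon^2/(4L)=\epsilon^2/(8m)$ and would end up with $T\sim m^3$ rather than $m^2$; you will not be able to recover the constant $18$ as stated.

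A second, smaller divergence: the paper does not use a fixed constant step size. It invokes the switching/decreasing schedule of Khaled--Richt\'arik (their Lemma~3/Corollary~2) under the PL condition implied by $\alpha$-strong convexity, with only the \emph{upper bound} $\gamma^t\leq 1/(4m^2)$. Your constant-step argument forces $\gamma$ to scale with $\epsilon$ and $\alpha$ to suppress the noise floor, which again is fine for a polynomial result but will not line up with the theorem's stated learning-rate constraint. The shadow-tomography piece and the per-coordinate variance bound $m(\kappa^2+\xi^2)$ are exactly as in the paper; that part carries over unchanged.
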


Finally, we observe that the sample bound in Theorem~\ref{thm:training} depends on $\delta_0$, the relative entropy difference of the initial and optimal QBMs. This means that if we can lower the initial relative entropy with an educated guess, we also tighten the bound on the QBM learning sample complexity. In this respect, we prove that we can reduce $\delta_0$ by pre-training a subset of the parameters in the Hamiltonian ansatz. Thus, pre-training reduces the number of iterations required to reach the global minimum.

\begin{theorem}[QBM pre-training]
\label{thm:pretraining}
    Assume a target $\eta$ and a QBM model $\rho_\theta=e^{\sum_{i=1}^m \theta_i H_i}/Z$ for which we like to minimize the relative entropy $S(\eta \|\rho_\theta)$. Initializing $\theta^0=0$ and pre-training $S(\eta\| \rho_\theta)$ with respect to any subset of $\tilde{m} \leq m $ parameters guarantees that \begin{equation}\label{eq:pretrainrel}
         S(\eta \|\rho_{\theta^\mathrm{pre}}) \leq S(\eta \|\rho_{\theta^0}), 
    \end{equation} where $\theta^\mathrm{pre} = [\chi^{\mathrm{pre}}, 0_{m-\tilde{m}}]$ and the vector $\chi^{\mathrm{pre}}$ of length $\tilde{m}$ contains the parameters for the terms $\{H_i\}_{i=1}^{\tilde{m}}$ at the end of pre-training.
    More precisely, starting from $\rho_\chi=e^{\sum_{i=1}^{\tilde{m}} \chi_i H_i}/Z$ and minimizing $S(\eta \|\rho_\chi)$ with respect to $\chi$ ensures Eq.~\eqref{eq:pretrainrel} for any $S(\eta \|\rho_{\chi^{\mathrm{pre}}})\leq S(\eta \|\rho_{\chi^0})$.
\end{theorem}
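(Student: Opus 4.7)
The plan is to reduce the claim to two elementary observations: that setting a parameter to zero removes the corresponding term from the QBM Hamiltonian, and that pre-training on a subproblem can always be arranged to not increase the loss relative to its starting value. Everything else is a chain of equalities.

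First I would note that the initialization $\theta^0 = 0$ gives $\rho_{\theta^0} = e^{0}/\Tr[e^0] = I/2^n$, the maximally mixed state. Relabeling the $\tilde m$ parameters to be pre-trained as the first $\tilde m$ coordinates (without loss of generality, since the $\{H_i\}$ are orthogonal), I would record the ``zero-embedding'' identity
\begin{equation*}
    \rho_{[\chi,\, 0_{m - \tilde m}]}
    = \frac{\exp\!\bigl(\sum_{i=1}^{\tilde m}\chi_i H_i + \sum_{i=\tilde m+1}^{m} 0 \cdot H_i\bigr)}{\Tr\bigl[\exp\bigl(\sum_{i=1}^{\tilde m}\chi_i H_i\bigr)\bigr]}
    = \rho_\chi
\end{equation*}
for every $\chi \in \mathbb{R}^{\tilde m}$, since the vanishing terms do not contribute to the exponent. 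Applied at $\chi = 0$, this also matches the starting points of the two problems, giving $S(\eta\|\rho_{\theta^0}) = S(\eta\|\rho_{\chi^0})$.

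Next I would invoke the pre-training hypothesis. By assumption, the reduced optimization returns some $\chi^{\mathrm{pre}}$ with $S(\eta\|\rho_{\chi^{\mathrm{pre}}}) \le S(\eta\|\rho_{\chi^0})$; this is automatic for any monotone descent scheme starting from $\chi^0 = 0$, and in particular is guaranteed (in expectation, by taking the best iterate) by applying Theorem~\ref{thm:training} to the restricted QBM on the first $\tilde m$ coordinates. Combining with the zero-embedding identity yields
\begin{equation*}
    S(\eta\|\rho_{\theta^{\mathrm{pre}}})
    = S(\eta\|\rho_{[\chi^{\mathrm{pre}},\, 0_{m-\tilde m}]})
    = S(\eta\|\rho_{\chi^{\mathrm{pre}}})
    \le S(\eta\|\rho_{\chi^0})
    = S(\eta\|\rho_{\theta^0}),
\end{equation*}
which is exactly Eq.~\eqref{eq:pretrainrel}.

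There is no genuine technical obstacle here: the content of the theorem is really that the zero vector provides a canonical embedding of the $\tilde m$-parameter model into the full $m$-parameter model, so pre-training is equivalent to training a smaller QBM whose achievable loss upper-bounds the full model's initial loss. The one subtlety worth emphasizing is that the choice $\theta^0 = 0$ is essential: if $\theta^0$ had nonzero entries outside the pre-trained subset, the embedded state would differ from $\rho_\chi$ and the identification $S(\eta\|\rho_{\theta^0}) = S(\eta\|\rho_{\chi^0})$ would fail, so the clean reduction between the two problems would break.
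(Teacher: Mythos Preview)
Your proof is correct and follows essentially the same route as the paper: both arguments hinge on the observation that $\rho_{[\chi,\,0_{m-\tilde m}]}=\rho_\chi$ as density matrices, so the relative entropies of the embedded and restricted models coincide, and the hypothesis $S(\eta\|\rho_{\chi^{\mathrm{pre}}})\le S(\eta\|\rho_{\chi^0})$ transfers immediately to the full parameter space. The only minor imprecision is your aside that Theorem~\ref{thm:training} guarantees the pre-training hypothesis---that theorem bounds gradient norms rather than relative entropies; the paper instead cites the descent lemma for noiseless gradient descent as its constructive example, but this is ancillary to the proof of the theorem itself.
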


We refer to Supplementary Note 4 for the proof of this theorem. It applies to any method that is able to minimize the relative entropy with respect to a subset of the parameters. Crucially, all the other parameters are fixed to zero, and the pre-training needs to start from the maximally mixed state $\mathbb{I}/2^n$. Note that the maximally mixed state is not the most `distant' state from the target state, and there exist states for which $S(\eta \| \rho_\theta) > S(\eta \| \mathbb{I}/2^n)$. As an example of pre-training, one could use SGD as described above, and apply updates only to the chosen subset of parameters. With a suitable learning rate, this ensures that pre-training lowers the relative entropy compared to the maximally mixed state $S(\eta \|\mathbb{I}/2^n)$. As a consequence, one can always add additional linearly independent terms to the QBM ansatz without having to retrain the model from scratch. The performance is guaranteed to improve, specifically towards the global optimum due to the strict convexity of the relative entropy. By Eq.~\eqref{eq:pretrainrel} and Pinsker's inequality, the trace-distance to all the other observables reduces as well. This is in contrast to other QML models which do not have a convex loss function. It is particularly useful if one can pre-train a certain subset classically before training the full model on a quantum device. For instance, in Supplementary Note 4 we present mean-field and Gaussian Fermionic pre-training strategies with closed-form expressions for the optimal subset of parameters.

\begin{figure*}[t]
    \centering    
    \includegraphics[width=.95\linewidth]{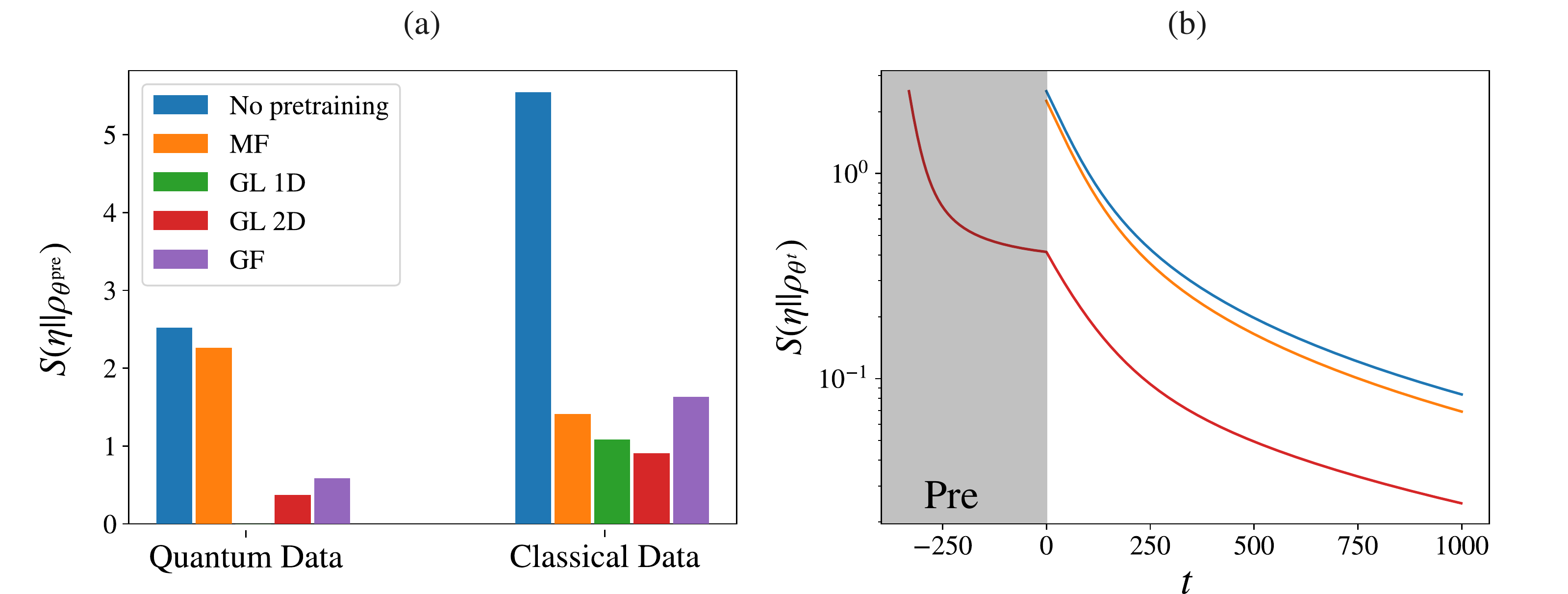}
    \caption{Pre-training and training quantum Boltzmann machines. (a) Quantum relative entropy $S(\eta\lvert\rho_{\theta^\mathrm{pre}})$ obtained after various pre-training strategies. We compare a mean-field (MF) model, a one-dimensional and two-dimensional geometrically local (GL) model, and a Gaussian Fermionic (GF) model to no pre-training/maximally mixed state. For the GL models, we stop the pre-training after the pre-training gradient is smaller than $0.01$. We consider an $8$-qubit target $\eta$ as the Gibbs state $e^{\mathcal{H}_{\mathrm{XXZ}}}/Z$ of a one-dimensional XXZ model (Quantum Data), and a target $\eta$ which coherently encodes the binary salamander retina data set (Classical Data). (b) Quantum relative entropy versus number of iterations for Quantum Data. The $t < 0$ iterations (gray area) show the reduction in relative entropy for GL 2D pre-training (red line). The $t=0$ iteration corresponds to the pre-training results in panel (a). The $t>0$ iterations show the training results in the absence of gradient noise, i.e., $\kappa=\xi=0$.}
    \label{fig:noiseless_results}
\end{figure*}

\subsection*{Numerical experiments}

In order to verify our theoretical findings we perform numerical experiments of QBM learning on data sets constructed from a quantum and a classical source. In particular, we use expectation values of the Gibbs state of a 1D quantum XXZ model in an external field~\cite{Heisenberg1928, Franchini_2017}, and expectation values of a classical salamander retina data set~\cite{Tkacik2014}. How to obtain the target state expectation values for both cases is explained in Supplementary Note 5.

First we focus on reducing the initial relative entropy $S(\eta \|\rho_{\theta^0})$ by QBM pre-training, following Theorem~\ref{thm:pretraining}. We consider mean-field (MF), Gaussian Fermionic (GF), and geometrically local (GL) models as the pre-training strategies. The Hamiltonian ansatz of the MF model consists of all possible one-qubit Pauli terms $\{H_i\}_{i=1}^{3n}=\{ \sigma_i^x, \sigma_i^y, \sigma_i^z\}_{i=1}^{n}$ in Eq.~\eqref{eq:generic_H}, hence it has $3n$ parameters. The Hamiltonian of the GF model is a quadratic form $\mathcal{H}_\theta^{\mathrm{GF}}= \sum_{i,j}\tilde{\theta}_{ij}\vec{
C}_i^\dagger \vec{C}_j$ of Fermionic creation and annihilation operators, where $\tilde{\theta}_{ij}$ is the $2n\times 2n$ Hermitian parameter matrix, which has $n^2$ free parameters. Here $\vec{C}^\dagger = [c_1^\dagger, \dots, c_n^\dagger, c_1, \dots, c_n]$ with the operators satisfying $\{c_i, c_j\} = 0$ and $\{c_i^\dagger, c_j\}=\delta_{ij}$, where $\{A, B\} = AB + BA$ is the anti-commutator. The advantage of the MF and GF pre-training is that there exists a closed-form solution given target expectation values $\langle H_i \rangle_\eta$ (see Supplementary Note 4).

In contrast, the GL models are defined with a Hamiltonian ansatz

\begin{equation} 
\label{eq:glHam}
    \mathcal{H}_\theta^{\mathrm{GL}} = \sum_{k=x,y,z}\sum_{\langle i, j\rangle}\lambda^k_{ij}\sigma_i^k\sigma^k_j + \sum_i^n \gamma^k_i\sigma^k_i,
\end{equation} 
for which, in general, the parameter vector $\vec{\theta}\equiv\{\lambda, \sigma\}$ cannot be found analytically. Here the sum $\langle i, j\rangle$ imposes some constraints on the (geometric) locality of the model, i.e., summing over all possible nearest neighbors in some $d$-dimensional lattice. In particular, we choose one- and two-dimensional locality constraints suitable with the assumptions given in previous work~\cite{Anshu_2021, Haah_2024}. In these specific cases the relative entropy is strongly convex, thus pre-training has the improved performance guarantees from Theorem~\ref{thm:training_alpha}.

In Fig.~\ref{fig:noiseless_results}~(a) we show the  relative entropy $S(\eta\|\rho_{\theta^{\mathrm{pre}}})$ obtained after pre-training with these models for $8$-qubit problems. For MF and GF we use the closed form solutions, and for GL we use exact gradient descent with learning rate of $\gamma=1/\tilde{m}$. As a comparison, we also show the result obtained for the QBM without pre-training, i.e., for the maximally mixed state $S(\eta\|\rho_{\theta=0})$.
We observe that for both targets, all pre-training strategies are successful in obtaining a lower $S(\eta\|\rho_{\theta^{\mathrm{pre}}})$ compared to the maximally mixed state. For the GL 1D ansatz, the target quantum state is contained within the QBM model space, which means that the relative entropy is zero after pre-training. This shows that having knowledge about the target (e.g., the fact that it is one dimensional) could help inform QBM ansatz design and significantly reduce the complexity of QBM learning. The GF model, which has completely different terms in the ansatz, manages to reduce $S(\eta\|\rho_{\theta^{\mathrm{pre}}})$ by a factor of $\approx 5$ for the quantum target and $\approx 4$ for the classical target. By the Jordan-Wigner transformation, we can express the target XXZ model in a Fermionic basis. In this representation, the target only has a small perturbation compared to the model space of the GF model. This could explain the good performance of pre-training with the GF model. 

Next, we investigate the effect of using pre-trained models as starting point for QBM learning with exact gradient descent. We extend all pre-trained models with additional linearly independent terms in the ansatz
\begin{equation} 
\label{eq:fully_connect_qbm}
    \mathcal{H}_\theta = \sum_{k=x,y,z}\sum_{i, j>i}\lambda^k_{ij}\sigma_i^k\sigma^k_j + \sum_i^n \gamma^k_i\sigma^k_i,
\end{equation} 
where now, compared to Eq.~\eqref{eq:glHam}, any qubit can be connected to any other qubit, and we do not have a constraint on the geometric locality. This is the fully-connected QBM Hamiltonian used in previous work~\cite{Kappen_2020, Coopmans_2023}. We now consider data from the quantum target $\eta = e^{{\mathcal{H}_\mathrm{XXZ}}}/Z$ for $8$ qubits.

In Fig.~\ref{fig:noiseless_results}~(b) we show the quantum relative entropy during training when starting from different pre-trained models $\rho_{\theta^0} \coloneqq \rho_{\theta^\mathrm{pre}}$. The initial parameter vector $\theta^0$ is the one obtained at the end of pre-training on quantum data in panel (a). We do not include gradient noise in this simulation, i.e., $\kappa=\xi=0$, and we use a learning rate of $\gamma=1/(2m)$. The performance of the MF pre-trained model (orange line) is better than that of the vanilla model (blue line) for all $t$. However, the improvement is modest. Starting from a GL 2D  model (red line) has a much more significant effect, with $S(\eta \|\rho_{\theta^t})$ being an order of magnitude smaller than that of the vanilla model at all iterations $t$. This pre-training strategy requires very few gradient descent iterations (dark gray area). This could be stemming from the strong convexity of this particular pre-training model. In general, the benefits of pre-training needs to be assessed on a case-by-case basis as the size of the improvement depends on the particular target and the particular pre-training model used. Importantly, we only proved Theorem~\ref{thm:training} for a learning rate of $\gamma = \min \{ \frac{1}{L}, \frac{\epsilon}{4m^2(\kappa^2 +\xi^2)}\}$. Therefore, choosing a larger learning rate might reduce the benefits of pre-training. 

\begin{figure*}[t]
    \centering
    \includegraphics[width=.5\linewidth]{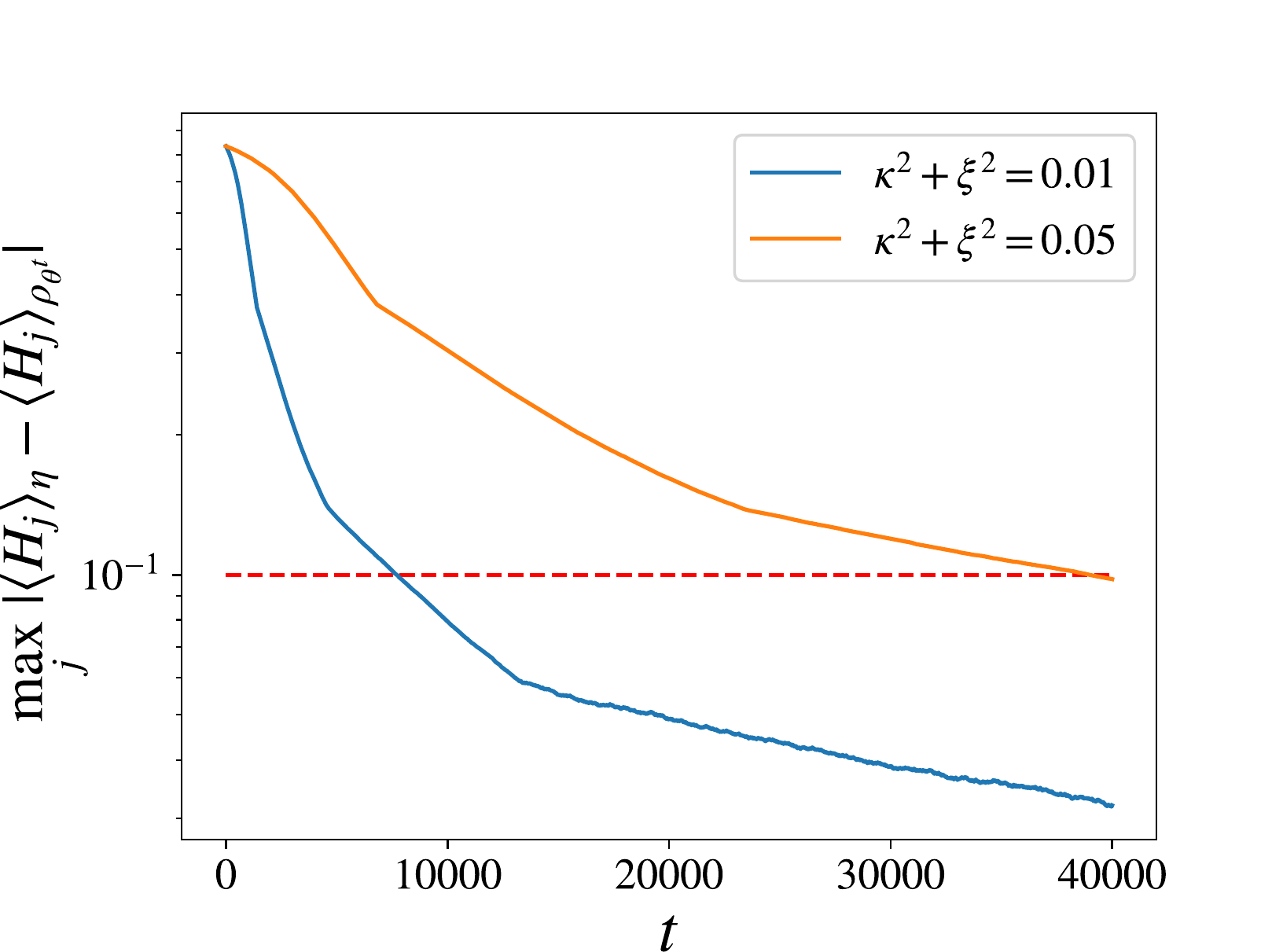}
    \caption{The maximum error in the expectation values versus the number of iterations of stochastic gradient descent (SGD). The target $\eta$ is made of $8$ features from the classical salamander retina data set, the model $\rho_{\theta^t}$ is a quantum Boltzmann machine (QBM). When computing expectation values for the gradient, we allow precision $\xi$ for the target and precision $\kappa$ for the QBM. We compare the combined noise strength of $0.01$ (blue line) to $0.05$ (orange line). We aim for a maximum error of $\epsilon=0.1$ (red dashed line) and use a learning rate of $\gamma = \frac{\epsilon}{4m^2(\kappa^2+\xi^2)}$. SGD converges within a number of iterations consistent with Theorem~\ref{thm:training}.}
    \label{fig:noisy_results}
\end{figure*}

Lastly, we numerically confirm our bound on the number of SGD iterations (Eq.~\eqref{eq:trainingsteps} in Theorem~\ref{thm:training}). We consider data from the classical salamander retina target with $8$ features and a fully-connected QBM model on $8$ qubits. 
In Fig.~\ref{fig:noisy_results} we compare training with two different noise strengths $\kappa^2+\xi^2$. We implemented these settings by adding Gaussian noise effectively simulating the noise strength determined by the number of data points in the data set and the number of measurements of the Gibbs state on a quantum device. Using the standard Monte Carlo estimate, at each update iteration we require a mini-batch of data samples of size $~\frac{1}{\xi^2}$, and a number of measurements $\frac{1}{\kappa^2}$ (assuming these measurements can be performed without additional hardware noise). We could even use mini batches of size $1$ and a single measurement, as long as the expectation values are unbiased. For both noise strengths, we obtain the desired target precision of $\epsilon=0.1$ within 40000 iterations. This is well within the bound of order $10^9$ on the number of iterations obtained from Theorem~\ref{thm:training}, which is the worst-case scenario. For the interested reader, in Supplementary Note 3 we provide additional numerical evidence confirming the scaling behavior of our theorems.

\subsection*{Generalization}

Before concluding our work, in this section we briefly comment on the generalization capability of the fully-connected QBM to which our convergence results apply. In particular, we connect with a result proved by Aaronson~\cite{Aaronson_2007}, which concerns the learnability of quantum states.

\begin{theorem}[Theorem 1.1 in ~\cite{Aaronson_2007}]
\label{thm:occams}
Let $\eta$ be a $2^n$-dimensional mixed state, and let $\mathcal{D}$ be any probability measure over two-outcome measurements. Then given samples
$E_1, \dots , E_m$ drawn independently from $\mathcal{D}$, with probability at least $1 - \nu$, the samples have the following generalization property: any hypothesis state $\rho$ such that $| \expval{E_i}_\rho - \expval{E_i}_\eta | \leq \zeta\varepsilon$ for all $i \in \{1, \dots, m\}$ will also satisfy
\begin{equation}
        \mathrm{Pr}_{E\sim \mathcal{D}} \left( |\expval{E}_\rho - \expval{E}_\eta |\leq \varepsilon\right) \geq 1 - \zeta,
\end{equation}
provided we took
\begin{equation}
\label{eq:m_ops}
    m\geq \frac{C}{\zeta^2 \varepsilon^2}\left(\frac{n}{\zeta^2 \varepsilon^2}\log^2 \frac{1}{\zeta\varepsilon} + \log \frac{1}{\nu} \right)
\end{equation}
for some large enough constant $C$. 
\end{theorem}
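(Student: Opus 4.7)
The plan is to cast this as a standard uniform-convergence problem in statistical learning theory, where sample complexity is controlled by a combinatorial complexity measure of the hypothesis class. The relevant class here is $\mathcal{F}_n = \{ f_\rho : E \mapsto \Tr(E\rho) \}$, where $\rho$ ranges over $2^n$-dimensional mixed states and $E$ over two-outcome measurements; this is a class of $[0,1]$-valued functions to which the Alon--Ben-David--Cesa-Bianchi--Haussler machinery applies. The route has two ingredients: (i) bound the $\gamma$-fat-shattering dimension $\mathrm{fat}_\gamma(\mathcal{F}_n)$, and (ii) plug that bound into an off-the-shelf agnostic-learning uniform-convergence theorem.

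First I would establish $\mathrm{fat}_\gamma(\mathcal{F}_n) = O(n/\gamma^2)$. Suppose measurements $E_1,\dots,E_k$ are $\gamma$-shattered by $\mathcal{F}_n$ with thresholds $r_1,\dots,r_k$. Then for every bit-string $b\in\{0,1\}^k$ there exists a state $\rho_b$ with $\Tr(E_i\rho_b)\ge r_i+\gamma/2$ when $b_i=1$ and $\Tr(E_i\rho_b)\le r_i-\gamma/2$ when $b_i=0$. Such a family of states is exactly what is needed to implement a $k \to n$ quantum random access code with bias $\gamma/2$, so Nayak-type lower bounds on quantum random access codes (or, equivalently, information-theoretic arguments via Holevo's theorem combined with quantum coding inequalities) force $k = O(n/\gamma^2)$. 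This is the crucial step where the $4^n$-dimensional ambient matrix space is compressed to an effective dimension linear in $n$: purely dimension-counting gives only $4^n$, and without a genuine quantum-information ingredient the theorem would not hold.

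Second I would invoke the standard fat-shattering uniform-convergence theorem: for any class $\mathcal{F}$ of $[0,1]$-valued functions and any distribution $\mathcal{D}$ over its domain, $m$ i.i.d.\ samples suffice so that, with probability at least $1-\nu$, every $f\in\mathcal{F}$ whose empirical discrepancy from the target is at most $\zeta\varepsilon$ has $\Pr_{E\sim\mathcal{D}}(|f(E)-f_\eta(E)|>\varepsilon)\le\zeta$, provided $m = \Omega\!\left(\frac{1}{\zeta^2\varepsilon^2}\bigl(\mathrm{fat}_{c\zeta\varepsilon}(\mathcal{F})\,\log^2(1/(\zeta\varepsilon)) + \log(1/\nu)\bigr)\right)$ for an absolute constant $c$. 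Substituting the bound $\mathrm{fat}_{c\zeta\varepsilon}(\mathcal{F}_n) = O(n/(\zeta\varepsilon)^2)$ from step one yields exactly the expression in Eq.~\eqref{eq:m_ops}.

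The hard part is step one: the fat-shattering bound. Step two is a black-box invocation of well-known learning-theoretic results, and the constants can be tracked mechanically. Everything interesting and quantum happens in the reduction of a $\gamma$-shattered family of measurements to a quantum random access code, which is where the factor $n$ (rather than $2^n$) in the final bound ultimately originates.
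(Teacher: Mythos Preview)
The paper under review does not prove this theorem at all: it is quoted verbatim as Theorem~1.1 of Aaronson~\cite{Aaronson_2007} and used as a black box in the Generalization subsection, with no argument given. So there is no ``paper's own proof'' to compare against.

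That said, your proposal is essentially Aaronson's original argument. His proof proceeds exactly as you outline: (i) show that the $\gamma$-fat-shattering dimension of $\mathcal{F}_n=\{E\mapsto\Tr(E\rho)\}$ is $O(n/\gamma^2)$ by observing that a $\gamma$-shattered set of measurements yields a $k\to n$ quantum random access code with constant bias, which Nayak's bound (via Holevo and the quantum Fano inequality) rules out unless $k=O(n/\gamma^2)$; and (ii) feed this into the Anthony--Bartlett/Bartlett--Long style uniform-convergence bound for real-valued classes with bounded fat-shattering dimension. One technical point worth flagging in your step~(i): the thresholds $r_i$ in the fat-shattering definition need not equal $1/2$, so the measurements $E_i$ do not directly give a RAC with the stated bias; Aaronson handles this by a short boosting/randomization argument that converts an arbitrary-threshold shattering into a genuine RAC before invoking Nayak. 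Apart from that, your plan is correct and matches the source proof.
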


In other words, when a hypothesis state $\rho$ matches the expectation values of $m$ randomly sampled measurement operators from distribution $\mathcal{D}$, then, with high probability, most expectation values of the other measurement operators from $\mathcal{D}$ match up to error $\varepsilon$. Comparing to Definition~\ref{def:problem}, we can identify the learned QBM $\rho_\theta$ as a specific hypothesis state that matches the expectation values of operators $\{H_i\}_{i=1}^m$ of the target $\eta$. Our convergence results show how many Gibbs state samples are required to obtain such a hypothesis state. Thus, our QBM learning result in Theorem~\ref{thm:training}, in a sense, complements Theorem~\ref{thm:occams} and provides a concrete algorithm for finding the hypothesis $\rho$ with $\mathcal{O}(\mathrm{poly}(n))$ samples. One distinction is that in QBM learning we already assume access to a fixed set of measurement operators which defines the Hamiltonian ansatz, instead of randomly sampling operators from a distribution. 

In order to say something about generalization, we now consider a scenario in which the user is interested in solving the QBM learning problem for some fixed (potentially exponentially large) number of two-outcome measurement operators $\{H_i\}_{i=1}^{K}$. Then by defining $\mathcal{D}$ as the uniform distribution over the set $\{H_i\}_{i=1}^{K}$, and constructing the QBM ansatz by sampling $m < K$ operators from $\mathcal{D}$, we can directly apply Theorem~\ref{thm:occams} for generalization. In particular, defining the ansatz in this way ensures that our trained QBM will, with high probability, generalize to any other observable sampled from $\mathcal{D}$, including ones that are not in the ansatz.

\section*{Conclusions}

In this paper, we give an operational definition of quantum Boltzmann machine (QBM) learning. We show that this problem can be solved with polynomially many preparations of quantum Gibbs states. To prove our bounds we use the properties of the quantum relative entropy in combination with the performance guarantees of stochastic gradient descent (SGD). We do not make any assumption on the form of the Hamiltonian ansatz, other than that it consists of polynomially many terms. This is in contrast with earlier works that looked at the related Hamiltonian learning problem for geometrically local models~\cite{Anshu_2021, Haah_2024}. There, strong convexity is required in order to relate the optimal Hamiltonian parameters to the Gibbs state expectation values. In our machine learning setting, we do not know the form of the target Hamiltonian a priori. Therefore, we argue that learning the exact parameters is irrelevant, and one should focus directly on the expectation values. Our bounds only require $L$-smoothness of the relative entropy and apply to all types of QBMs without hidden units.

We also show that our theoretical sampling bounds can be tightened by lowering the initial relative entropy of the learning process. We prove that pre-training on any subset of the parameters is guaranteed to perform better than (or equal to) the maximally mixed state. This is beneficial if one can efficiently perform the pre-training, which we show is possible for mean-field, Gaussian Fermionic, and geometrically local QBMs. We verify the performance of these models and our theoretical bounds with classical numerical simulations. From this, we learn that knowledge about the target (e.g., its dimension, degrees of freedom, etc.) can significantly improve the training process. Furthermore, we find that our generic bounds are quite loose, and in practice, one could get away with a much smaller number of samples.  

This brings us to interesting avenues for future work. One possibility consists of tightening the sample bound by going beyond the plain SGD method that we have used here. This could be done by adding momentum, by using other advanced update schemes~\cite{foster2019complexity, Garrigos2023}, or by exploiting the convexity of the relative entropy. While we believe this can improve the $\mathcal{O}(\mathrm{poly}(m, \frac{1}{\epsilon}))$ scaling in our bounds, we note that it does not change the main conclusion of our paper: the QBM learning problem can be solved with polynomially many preparations of Gibbs states.
Another important direction is the investigation of different ans\"{a}tze and their performance. Generative models are often assessed in terms of training quality~\cite{riofrio2023performance}, but generalization capabilities have been recently investigated by both classical~\cite{zhao2018bias,thanhtung2021generalization} and quantum~\cite{du2022power,gili2023generalization} machine learning researchers. For the case of QBMs, we employ an alternative definition of generalization~\cite{Aaronson_2007} and show how it can be used to construct ans\"{a}tze. The numerical investigation of this method is an interesting venue for future work.

Our pre-training result could be useful for implementing QBM learning on near-term and early fault-tolerant quantum devices. To this end, one would use a quantum computer as a Gibbs sampler. There exists a plethora of quantum algorithms (e.g., see~\cite{Temme_2011, Chowdhury_2016, Holmes_2022, chifang2023quantum, zhang2023dissipative}) that prepare Gibbs states with a quadratic improvement in time complexity over the best existing classical algorithms. Moreover, the use of a quantum device gives an exponential reduction in space complexity in general. One can also sidestep the Gibbs state preparation and use algorithms that directly estimate Gibbs-state expectation values, e.g., by constructing classical shadows of pure thermal quantum states~\cite{Coopmans_2023}. This reduces the number of qubits and, potentially, the circuit depth.

Finally, our results open the door to novel methods for the incremental learning of QBMs driven by the availability of both training data and quantum hardware. For example, one could select a Hamiltonian ansatz that is very well suited for a particular quantum device. After exhausting all available classical resources during the pre-training, one enlarges the model and continues the training on the quantum device, which is guaranteed to improve the performance. As the quantum hardware matures, it allows us to execute deeper circuits and to further increase the model size. Incremental QBM training strategies could be designed to follow the quantum hardware roadmap, towards training ever larger and more expressive quantum machine learning models.

\newpage

\section*{Data availability}
The data that support the findings of this study are available at Zenodo~\cite{zenodo_files}.

\section*{Acknowledgments}
We thank Guillaume Garrigos and Robert M. Gower for pointing us to the literature on stochastic gradient descent. We thank Samuel Duffield, Yuta Kikuchi, Mark Koch, Enrico Rinaldi, Matthias Rosenkranz, and Oscar Watts for helpful discussions and for their feedback on an earlier version of this manuscript. We also thank Dhrumil Patel and Mark M. Wilde for identifying an error in the previous version of this manuscript.

\section*{Author contributions}
L.C. and M.B. contributed equally to this work.

\section*{Competing interests}
L.C and M.B. are affiliated with Quantinuum Ltd and the results presented in this paper are subject to a pending patent application filed by Quantinuum Ltd with Application Number PCT/GB2024/051593.

%Removes "References" from the table of content
\let\oldaddcontentsline\addcontentsline
\renewcommand{\addcontentsline}[3]{}
\bibliography{refs}
\let\addcontentsline\oldaddcontentsline

%Renames figures in the supplementary material
\setcounter{figure}{0}
\makeatletter
\renewcommand{\fnum@figure}{\textbf{Supplementary Figure \thefigure}}
\makeatother

\clearpage
\appendix
\onecolumngrid
\begin{center}
	\noindent\textbf{\large{Supplementary Information for\\``On the Sample Complexity of Quantum Boltzmann Machine Learning''}}\\
    \vspace{.5cm}
    Luuk Coopmans and Marcello Benedetti\\
    \small{\textit{Quantinuum, Partnership House, Carlisle Place, London SW1P 1BX, United Kingdom}}\\
    \small{(Dated: August 22, 2024)}
\end{center}

\renewcommand{\baselinestretch}{1.1}\normalsize
\tableofcontents
\renewcommand{\baselinestretch}{1.0}\normalsize

\clearpage

\section*{\MakeUppercase{Supplementary Note 1: Some useful mathematical facts and relations}}
\addcontentsline{toc}{section}{Supplementary Note 1: Some useful mathematical facts and relations}

Here we state and derive some useful mathematical results that are used in the proofs in the other appendices. The definitions and lemmas about convexity are taken from~\cite{Garrigos2023}. The derivative of the matrix exponential expressed as quantum belief propagation is a result of~\cite{Hastings_2007} and we rederive it in a pedagogical way.

\subsection*{Convexity}
\addcontentsline{toc}{subsection}{Convexity}

\begin{definition}[Convexity]
\label{def:convexity}
A multivariate function $f:\mathbb{R}^m \mapsto \mathbb{R}$ is said to be convex when 
\begin{equation}
    f(tx + (1-t)y) \leq tf(x) + (1-t)f(y), \quad \forall x, y\in\mathbb{R}^m, \quad t\in [0, 1]. 
\end{equation}
If additionally the gradient $\nabla f(x^*)$ is zero only for one unique vector $x^* \in\mathbb{R}^m$, then $f$ is said to be strictly convex.
\end{definition}

The following Lemma can be deduced from the standard definition of convexity~\cite{Garrigos2023}. 

\begin{lemma}
\label{lem:convexity}
Let $f$ be twice continuously differentiable. Then $f$ is convex if 
\begin{equation}
\label{eq:convex} 
    v^T \nabla^2 f(x) v \geq 0, \quad \forall x,v \in \mathbb{R}^m . 
\end{equation}
\end{lemma}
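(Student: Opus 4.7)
The plan is to reduce the multivariate statement to a one-dimensional one by restricting $f$ to the line segment connecting any two points, and then to invoke a standard single-variable calculus fact. Concretely, for arbitrary $x, y \in \mathbb{R}^m$ I would define $\phi : [0,1] \to \mathbb{R}$ by
\begin{equation}
\phi(s) = f(sx + (1-s)y).
\end{equation}
Because $f$ is $C^2$, so is $\phi$, and by the chain rule
\begin{equation}
\phi''(s) = (x-y)^T \nabla^2 f(sx + (1-s)y) (x-y).
\end{equation}
The hypothesis \eqref{eq:convex} applied with $v = x - y$ then gives $\phi''(s) \geq 0$ for all $s \in [0,1]$.

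The next step is to show that any $C^2$ function $\phi$ on $[0,1]$ with $\phi'' \geq 0$ satisfies $\phi(t) \leq t\phi(1) + (1-t)\phi(0)$ for every $t \in [0,1]$, which by Definition~\ref{def:convexity} is exactly what we need. I would do this with the mean value theorem: for any $t \in (0,1)$ there exist $c_1 \in (0,t)$ and $c_2 \in (t,1)$ such that
\begin{equation}
\frac{\phi(t) - \phi(0)}{t} = \phi'(c_1), \qquad \frac{\phi(1) - \phi(t)}{1-t} = \phi'(c_2).
\end{equation}
Since $\phi'' \geq 0$, the derivative $\phi'$ is non-decreasing on $[0,1]$, so $\phi'(c_1) \leq \phi'(c_2)$. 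Cross-multiplying and rearranging yields
\begin{equation}
\phi(t) \leq t\phi(1) + (1-t)\phi(0),
\end{equation}
and substituting back $\phi(0) = f(y)$, $\phi(1) = f(x)$, $\phi(t) = f(tx + (1-t)y)$ gives the convexity inequality in Definition~\ref{def:convexity}. The $t \in \{0,1\}$ cases are trivial.

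I do not anticipate any real obstacle here: the result is a textbook fact and the reduction to one dimension is purely mechanical. The only care needed is the chain-rule computation for $\phi''(s)$ and the observation that the hypothesis holds for \emph{every} $v$, in particular the specific direction $x - y$. An alternative route would be to use Taylor's theorem with integral remainder on $\phi$ to obtain the first-order characterization $\phi(1) \geq \phi(0) + \phi'(0)$ and then deduce convexity from that, but the mean-value-theorem argument is shorter and avoids integral remainders altogether.
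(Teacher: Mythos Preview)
Your proof is correct and is the standard textbook argument via reduction to the line segment and the mean value theorem. The paper itself does not supply a proof of this lemma: it is stated as a known fact taken from the cited reference on optimization, so there is no paper proof to compare against. Your argument fills that gap cleanly.
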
 

A stronger version of convexity is used in some of our discussions.

\begin{definition}[$\alpha$-Polyak-{\L}ojasiewicz]
\label{def:PL}
Let $f : \mathbb{R}^m \rightarrow \mathbb{R}$, and $\alpha > 0$. We say that $f$ is $\alpha$-Polyak-{\L}ojasiewicz if 
\begin{equation}
\frac{1}{2\alpha}\|\nabla f(x)\|^2 \geq f(x) - \mathrm{min}_x f(x),
\end{equation} 
where $\| \cdot \|$ is the Euclidean norm.
\end{definition}

An even stronger convexity condition is the following.

\begin{definition}[$\alpha$-strong convexity]
Let $f : \mathbb{R}^m \rightarrow \mathbb{R}$, and $\alpha > 0$. We say that $f$ is $\alpha$-strongly convex if
\begin{equation}
    \frac{ \alpha t(1-t) } {2} \| x - y \| ^2 
+ f(tx + (1-t)y) \leq tf (x) + (1-t)f(y).
\end{equation} 
\end{definition}

The latter implies the former.

\begin{lemma}
\label{lem:PL}
If $f$ is $\alpha$-strongly convex then $f$ is $\alpha$-Polyak-{\L}ojasiewicz.
\end{lemma}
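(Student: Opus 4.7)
The plan is to reduce $\alpha$-strong convexity to the familiar quadratic lower bound and then minimize that bound explicitly. Concretely, I would proceed as follows.

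First, I would rewrite the strong-convexity inequality with $z = y + t(x-y)$, subtract $f(y)$ from both sides, divide by $t > 0$, and take the limit $t \to 0^+$. Using differentiability of $f$ (which is implicit since the PL definition already requires $\nabla f$ to exist), this produces the first-order characterization
\begin{equation}
f(x) \;\geq\; f(y) + \langle \nabla f(y), x - y\rangle + \frac{\alpha}{2}\|x - y\|^2, \qquad \forall x, y \in \mathbb{R}^m.
\end{equation}
This is the standard quadratic lower bound associated with $\alpha$-strong convexity.

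Second, for fixed $y$ I would view the right-hand side as a quadratic function of $x$. Completing the square, its global minimum in $x$ is attained at $x^\star = y - \tfrac{1}{\alpha} \nabla f(y)$, and the minimum value equals $f(y) - \tfrac{1}{2\alpha}\|\nabla f(y)\|^2$. Since the inequality above holds for \emph{every} $x$, taking the infimum over $x$ on the left gives
\begin{equation}
\min_{x} f(x) \;\geq\; f(y) - \frac{1}{2\alpha}\|\nabla f(y)\|^2.
\end{equation}
Rearranging yields $\tfrac{1}{2\alpha}\|\nabla f(y)\|^2 \geq f(y) - \min_x f(x)$, which is precisely the $\alpha$-Polyak-{\L}ojasiewicz condition of Definition~\ref{def:PL}.

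The only mildly delicate step is the passage from the two-point strong-convexity definition to the differentiable first-order form, since one must justify dividing by $t$ and passing to the limit. This is routine once differentiability is assumed, so I expect no real obstacle; the remainder is an exact quadratic minimization. I would present the argument in three short lines: derive the first-order inequality, minimize the quadratic majorant in $x$, and identify the resulting bound with the PL inequality.
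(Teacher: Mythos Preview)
Your argument is correct and is the standard textbook derivation: pass from the two-point strong-convexity definition to the first-order inequality, then minimize the resulting quadratic in $x$ to obtain the PL bound. Each step is sound, including the limiting argument (which only needs directional differentiability at $y$, guaranteed once $\nabla f$ exists).

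As for comparison: the paper does not actually prove this lemma. It is listed among the convexity facts in Supplementary Note~1 with the blanket remark that ``the definitions and lemmas about convexity are taken from~\cite{Garrigos2023},'' and no argument is supplied. So there is nothing to compare against beyond noting that your proof is exactly the standard one found in that reference and elsewhere.
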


The strong convexity of a function can be tested as follows.

\begin{lemma}
\label{lem:strong_convexity}
Let $f$ be twice continuously differentiable. Then $f$ is $\alpha$-strongly convex if 
\begin{equation}
    v^T \nabla^2 f(x) v \geq \alpha \| v \|^2, \quad \forall x,v \in \mathbb{R}^m .
\end{equation}
\end{lemma}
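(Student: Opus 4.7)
The plan is to mirror the proof strategy of Lemma 1 (the second-order characterization of ordinary convexity), lifting it to the strongly convex setting. The argument will proceed in two steps: first establish a ``gradient inequality'' using Taylor's theorem combined with the Hessian hypothesis, then apply that inequality at a convex combination point to recover the definition of $\alpha$-strong convexity.

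First, I would apply Taylor's theorem with integral remainder to $f$ between arbitrary points $x,y\in\mathbb{R}^m$, expressing the remainder as $\int_0^1(1-s)\,(y-x)^T\nabla^2 f(x+s(y-x))(y-x)\,ds$. The hypothesis $v^T\nabla^2 f(\cdot) v \geq \alpha\|v\|^2$, applied pointwise along the segment with the choice $v=y-x$, lower-bounds the integrand by $(1-s)\,\alpha\|y-x\|^2$. Integrating over $s\in[0,1]$ yields the gradient inequality $f(y)\geq f(x)+\nabla f(x)^T(y-x)+\tfrac{\alpha}{2}\|y-x\|^2$.

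Second, I would specialize this gradient inequality by taking the base point to be $z = tx+(1-t)y$ with $t\in[0,1]$, once evaluated at $x$ and once at $y$. Taking the convex combination weighted by $t$ and $(1-t)$ of the two resulting inequalities, the linear term vanishes because $t(x-z)+(1-t)(y-z)=0$, while the quadratic contributions combine via $\|x-z\|^2=(1-t)^2\|x-y\|^2$ and $\|y-z\|^2=t^2\|x-y\|^2$ into $t(1-t)\|x-y\|^2$. A straightforward rearrangement then delivers precisely the inequality appearing in the definition of $\alpha$-strong convexity.

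This is a standard textbook extension of Lemma 1 and presents no real obstacle. The one subtlety worth flagging is that the Taylor-remainder bound requires the Hessian lower bound to hold at every point on the segment joining $x$ and $y$; the hypothesis ``$v^T\nabla^2 f(x)v \geq \alpha\|v\|^2$ for all $x,v\in\mathbb{R}^m$'' supplies exactly this uniformity, so no continuity or compactness argument is needed beyond the assumed twice continuous differentiability.
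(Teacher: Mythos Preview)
Your proof is correct and is the standard textbook argument. Note, however, that the paper does not actually prove this lemma: Supplementary Note~1 merely states it (along with Lemmas~\ref{lem:convexity}, \ref{lem:PL}, and~\ref{lem:descent}) as a known fact, with the remark that ``the definitions and lemmas about convexity are taken from~\cite{Garrigos2023}.'' There is therefore no paper proof to compare against; your two-step argument (Taylor remainder to obtain the first-order strong-convexity inequality, then convex combination at $z=tx+(1-t)y$ to recover the defining inequality) is exactly what one would expect and fills the gap cleanly.
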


Besides convexity, we also need to characterize the smoothness of a function.

\begin{definition}[$L$-smoothness]
\label{def:smooth}
Let $f:\mathbb{R}^m \rightarrow \mathbb{R}$ and $L > 0$. We say that f is $L$-smooth if it is differentiable
and if the gradient $\nabla f$ is $L$-Lipschitz:
\begin{equation}
\| \nabla f(x) - \nabla f(y) \| \leq L \| x - y \|, \quad \forall x, y \in \mathbb{R}^m.
\end{equation}
\end{definition}

For $L$-smooth functions we have the following useful property~\cite{Garrigos2023}.

\begin{lemma}[Descent lemma]
\label{lem:descent}
Let $f: \mathbb{R}^m \rightarrow \mathbb{R}$ be a twice differentiable, $L$–smooth function, then
\begin{align}
f(y) \leq f(x) + \nabla f(x)^T (y-x) + \frac{L}{2} \| y -x \|^2 .
\end{align}
\end{lemma}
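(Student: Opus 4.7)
The plan is to reduce the multivariate inequality to a one-dimensional integral along the segment from $x$ to $y$ and then bound the deviation from the linear part using the Lipschitz property of $\nabla f$. I would organize the argument in three short steps.

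First, I would introduce the auxiliary function $g:[0,1] \to \mathbb{R}$ defined by $g(t) = f(x + t(y-x))$. Since $f$ is differentiable, so is $g$, and by the chain rule $g'(t) = \nabla f(x + t(y-x))^T (y-x)$, with $g(0) = f(x)$ and $g(1) = f(y)$. The fundamental theorem of calculus then yields $f(y) - f(x) = \int_0^1 \nabla f(x + t(y-x))^T (y-x)\, dt$.

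Second, I would subtract off the linearization by splitting the integrand as $\nabla f(x)^T(y-x) + \bigl[\nabla f(x + t(y-x)) - \nabla f(x)\bigr]^T (y-x)$. The first piece is constant in $t$ and integrates to $\nabla f(x)^T(y-x)$, so that $f(y) - f(x) - \nabla f(x)^T(y-x) = \int_0^1 \bigl[\nabla f(x + t(y-x)) - \nabla f(x)\bigr]^T (y-x)\, dt$. Applying the Cauchy--Schwarz inequality to the integrand and then invoking the $L$-Lipschitz property of $\nabla f$ from Definition~\ref{def:smooth} gives the pointwise bound $\bigl|[\nabla f(x + t(y-x)) - \nabla f(x)]^T(y-x)\bigr| \leq L\,\|t(y-x)\|\,\|y-x\| = Lt\,\|y-x\|^2$. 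Integrating $Lt\,\|y-x\|^2$ over $t \in [0,1]$ produces exactly $\tfrac{L}{2}\|y-x\|^2$, which gives the claimed inequality.

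I do not anticipate any real obstacle, since each step is a standard manipulation; the only mild subtlety is justifying the fundamental theorem of calculus for $g$, which follows from continuous differentiability of $f$ along the segment (inherited from the twice-differentiability assumption). Note that twice-differentiability is in fact stronger than needed — differentiability together with $L$-smoothness already suffices — but this does not affect the argument.
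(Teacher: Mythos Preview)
Your proof is correct and is the standard argument for the descent lemma. Note that the paper itself does not actually prove Lemma~\ref{lem:descent}; it merely states it as a known property of $L$-smooth functions and cites~\cite{Garrigos2023}, so there is no paper-side proof to compare against. Your observation that twice-differentiability is stronger than necessary is also accurate.
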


\subsection*{Derivative of a matrix exponential}
\addcontentsline{toc}{subsection}{Derivative of a matrix exponential}

\begin{lemma}[Quantum belief propagation~\cite{Hastings_2007}] 
Let $H=W+\theta V$ be a Hamiltonian ansatz with parameter $\theta$. Let $f(t)$ be the function for which $\int_{-\infty}^\infty f(t) e^{-i t \omega} dt = \frac{\tanh(\omega/2)}{\omega/2}$, and define the operator Fourier transform as $\Phi( \cdot ) = \int_{-\infty}^{\infty} f(t) e^{itH} \cdot e^{-itH} dt$. Then
\begin{align}
\partial_\theta e^H = \frac{1}{2} \left\{ \Phi(V) , e^H\right\},
\end{align}
where $\{A, B\} = AB + BA$ is the anti-commutator.
\end{lemma}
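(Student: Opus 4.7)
The plan is to verify the identity by reducing the operator equality to a scalar identity, working in a common eigenbasis of $H$. First, I would invoke the Duhamel (parameter-differentiation) formula for a matrix exponential: since $\partial_\theta H = V$,
\begin{equation}
\partial_\theta e^H = \int_0^1 e^{sH} V e^{(1-s)H} \, ds.
\end{equation}
Next, let $\{\ket{n}\}$ diagonalize $H$ with $H\ket{n} = E_n\ket{n}$. Taking matrix elements and performing the elementary $s$-integral yields
\begin{equation}
\bra{n} \partial_\theta e^H \ket{m} = \frac{e^{E_n} - e^{E_m}}{E_n - E_m} \bra{n} V \ket{m},
\end{equation}
with the removable singularity at $E_n = E_m$ understood as the limit $e^{E_n}$.

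I would then compute the same matrix element of the right-hand side. In the eigenbasis $e^{\pm itH}$ is diagonal with entries $e^{\pm itE_n}$, so $\Phi$ acts multiplicatively on matrix elements of $V$:
\begin{equation}
\bra{n} \Phi(V) \ket{m} = \left(\int_{-\infty}^{\infty} f(t) e^{it(E_n - E_m)} \, dt \right) \bra{n} V \ket{m} = \frac{\tanh((E_n - E_m)/2)}{(E_n - E_m)/2} \bra{n} V \ket{m},
\end{equation}
where I have used that $\tanh(\omega/2)/(\omega/2)$ is even in $\omega$, so the sign in the Fourier exponent is immaterial. Because $e^H$ is diagonal in the same basis,
\begin{equation}
\tfrac{1}{2}\bra{n} \{\Phi(V), e^H\} \ket{m} = \tfrac{1}{2}(e^{E_n} + e^{E_m}) \, \bra{n}\Phi(V)\ket{m}.
\end{equation}

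Combining the pieces, the proof reduces to the scalar identity
\begin{equation}
(e^{E_n} + e^{E_m}) \tanh\!\left(\tfrac{E_n - E_m}{2}\right) = e^{E_n} - e^{E_m},
\end{equation}
which I would verify by writing $\tanh$ in exponential form and noting the elementary cancellation $(e^{E_n} + e^{E_m})/(e^{E_n - E_m} + 1) = e^{E_m}$. Since the matrix elements of $\partial_\theta e^H$ and $\tfrac{1}{2}\{\Phi(V), e^H\}$ agree on every pair $(n,m)$, the operator identity follows.

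The hard part is really only this last algebraic rearrangement, where the \emph{symmetric} factor $(e^{E_n} + e^{E_m})$ combines with $\tanh((E_n - E_m)/2)$ to produce the \emph{antisymmetric} combination $(e^{E_n} - e^{E_m})$ demanded by Duhamel; once this cancellation is spotted, the rest is bookkeeping. A minor technical point is the convergence of the integral defining $\Phi$ and its exchange with matrix-element evaluation, which is justified because $\tanh(\omega/2)/(\omega/2)$ is bounded and smooth, forcing its Fourier partner $f$ to decay rapidly.
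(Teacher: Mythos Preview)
Your proof is correct and follows essentially the same route as the paper: Duhamel's formula, passage to the $H$-eigenbasis, and the scalar identity relating $(e^{E_n}+e^{E_m})\tanh\!\big((E_n-E_m)/2\big)$ to $e^{E_n}-e^{E_m}$. The only cosmetic difference is that the paper rewrites the Duhamel side forward into the anti-commutator form, whereas you compute both sides independently and match matrix elements; the mathematical content is the same.
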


\begin{proof}
The derivative of the matrix exponential $e^H$ with respect to a parameter is given by Duhamel's formula~\cite{Haber_2018}
\begin{align}
\label{eq:duhamel}
    \partial_\theta e^H &= \int_0^1 e^{(1-s)H} (\partial_\theta H) e^{s H} ds .
\end{align}
Taking $H=W+\theta V$, with simple manipulations we find a useful alternative expression
\begin{align}
\label{eq:manip1}
\begin{split}
    \partial_\theta e^H &= e^H \int_0^1 e^{-sH} V e^{s H} ds \\
    &= \sum_{j,k} \ketbra{j}{k} \matrixel{j}{V}{k} e^{\lambda_j} \int_0^1 e^{s(\lambda_k - \lambda_j)} ds \\
    &= \sum_{j,k} \ketbra{j}{k} V_{jk} e^{\lambda_j} \; \frac{e^{\lambda_k - \lambda_j} - 1}{\lambda_k - \lambda_j} .
\end{split}
\end{align}
Here we use the basis diagonalizing the Hamiltonian, $H = \sum_{j} \lambda_j \dyad{j}$, and we introduce the notation $V_{jk} = \matrixel{j}{V}{k}$. The above expression is valid also for the diagonal entries, $k=j$, since $\lim_{x \rightarrow 0} \frac{e^x - 1}{x} = 1$.  Now,
\begin{align}
    e^{\lambda_j} \; \frac{e^{\lambda_k - \lambda_j} - 1}{\lambda_k - \lambda_j}  =  e^{\lambda_j} \; \frac{e^{\lambda_k - \lambda_j} - 1}{e^{\lambda_k - \lambda_j} +1 } \; \frac{e^{\lambda_k - \lambda_j} +1 }{\lambda_k - \lambda_j}  =  \frac{\tanh( \tfrac{\lambda_k - \lambda_j}{2})}{ \tfrac{\lambda_k - \lambda_j}{2} } \; \frac{e^{\lambda_k} + e^{\lambda_j} }{2} .
\end{align}
With the notation $\hat{f}(\omega) = \frac{\tanh(\omega/2)}{\omega/2}$ we can write
\begin{align}
    \partial_\theta e^H &= \sum_{j,k} \ketbra{j}{k} V_{jk} \hat{f}(\lambda_k - \lambda_j) \; \frac{e^{\lambda_k} + e^{\lambda_j} }{2} .
\end{align}
Let us interpret $\hat{f}(\omega)$ as the Fourier transform of another function: $\hat{f}(\omega) = \int_{-\infty}^\infty f(t) e^{-i t \omega} dt$. Plugging this in the previous expression we obtain
\begin{align}
\begin{split}
\label{eq:ExpDerCommutator} 
    \partial_\theta e^H &= \sum_{j,k} \ketbra{j}{k} V_{jk} \int_{-\infty}^{\infty} f(t) e^{-it(\lambda_k - \lambda_j)} dt \left( \frac{e^{\lambda_k}}{2} + \frac{e^{\lambda_j}}{2} \right) \\
    &= \frac{1}{2} \int_{-\infty}^{\infty} f(t) \; \sum_{j} e^{it\lambda_j} \dyad{j} \; V \; \sum_{k} e^{-it\lambda_k + \lambda_k} \dyad{k} dt + \frac{1}{2}\int_{-\infty}^{\infty} f(t) \; \sum_{j} e^{it\lambda_j + \lambda_j} \dyad{j} \; V \; \sum_{k} e^{-it\lambda_k} \dyad{k} dt \\
    &= \frac{1}{2} \int_{-\infty}^{\infty} f(t) e^{itH} V e^{-itH} dt e^{H} + \frac{1}{2} e^H \int_{-\infty}^{\infty} f(t) e^{itH} V e^{-itH} dt  \\
    &= \frac{1}{2} \left\{ \Phi(V) , e^H\right\} .
\end{split}
\end{align}
We have recovered a result from~\cite{Hastings_2007} by different means. 
\end{proof}

\clearpage

\section*{\MakeUppercase{Supplementary Note 2: Properties of the quantum relative entropy for QBMs}}
\addcontentsline{toc}{section}{Supplementary Note 2: Properties of the quantum relative entropy for QBMs}

In this appendix, we prove some properties of the quantum relative entropy $S(\eta \|\rho_\theta)$ of a generic Quantum Boltzmann Machine (QBM) $\rho_\theta$ with respect to some arbitrary target $\eta$. These properties are used for the proof of the theorems in the main text. We start by showing the convexity and afterward, we show the $L$-smoothness. 

\subsection*{Strict convexity} 
\addcontentsline{toc}{subsection}{Strict convexity}

\begin{lemma}
Let $\mathcal{H}_\theta = \sum_{i=1}^m \theta_{i} H_{i}$ be a Hamiltonian ansatz where $H_i$ are non-commuting operators, and $\rho_\theta = \frac{e^{\mathcal{H}_\theta}}{Z}$ the corresponding QBM. For all states $\eta$, the quantum relative entropy $S(\eta\|\rho_\theta)$ is a strictly convex function of $\theta$. 
\end{lemma}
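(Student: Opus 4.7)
\emph{Proof proposal.} The plan is to reduce the statement to strict convexity of the log-partition function $\log Z(\theta)$, and then establish this by showing that its Hessian, which is the Kubo--Mori--Bogoliubov (KMB) covariance of the $\{H_i\}$, is positive definite.

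First I would use the Gibbs form $\log \rho_\theta = \mathcal{H}_\theta - (\log Z)\mathbb{I}$ to rewrite
\begin{equation*}
S(\eta\|\rho_\theta) = \Tr[\eta\log\eta] - \sum_{i=1}^m \theta_i \expval{H_i}_\eta + \log Z(\theta).
\end{equation*}
The first term is constant in $\theta$ and the second is linear, so strict convexity of the whole expression is equivalent to strict convexity of $\log Z(\theta)$. By Lemma~\ref{lem:convexity}, it then suffices to verify that the Hessian $\nabla^2 \log Z$ is positive definite at every $\theta\in\mathbb{R}^m$.

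Next, using Duhamel's formula from Supplementary Note 1, I would compute $\partial_{\theta_i} \log Z = \expval{H_i}_{\rho_\theta}$ and differentiate once more to obtain
\begin{equation*}
\frac{\partial^2 \log Z}{\partial\theta_i\partial\theta_j} = \int_0^1 \Tr\!\left[H_i\, \rho_\theta^{1-s}\, H_j\, \rho_\theta^{s}\right] ds \,-\, \expval{H_i}_{\rho_\theta} \expval{H_j}_{\rho_\theta}.
\end{equation*}
For any $v\in\mathbb{R}^m$ set $V = \sum_i v_i H_i$. Diagonalizing $\rho_\theta = \sum_a p_a \dyad{a}$, where all $p_a>0$ since $\rho_\theta$ is a full-rank Gibbs state, and evaluating $\int_0^1 p_a^{1-s}p_b^s\, ds =: L(p_a,p_b)$ as the logarithmic mean reduces the quadratic form to
\begin{equation*}
v^T \nabla^2 \log Z\, v = \sum_{a,b}|V_{ab}|^2 L(p_a,p_b) \,-\, \Big(\sum_a p_a V_{aa}\Big)^2.
\end{equation*}

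To finish, I would split this expression into diagonal ($a=b$) and off-diagonal ($a\neq b$) contributions: the diagonal part equals the non-negative classical variance $\sum_a p_a V_{aa}^2 - (\sum_a p_a V_{aa})^2$, while the off-diagonal part is manifestly non-negative since $L(p_a,p_b)>0$. Vanishing of the full quadratic form therefore forces every off-diagonal $V_{ab}$ to be zero and every diagonal $V_{aa}$ to coincide, i.e., $V \propto \mathbb{I}$. The orthogonality of the $\{H_i\}$ (hence their linear independence, with the mild additional proviso that no $H_i$ is a multiple of $\mathbb{I}$, which is automatic for, e.g., non-identity Paulis) then forces $v=0$. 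The main obstacle is precisely this last step: upgrading the KMB covariance from positive semi-definite to strictly positive definite, which relies on both the full-rank of $\rho_\theta$ (ensuring $L(p_a,p_b)>0$) and the linear independence of $\{H_i\}$ modulo the identity. Once this is established, strict convexity of $S(\eta\|\rho_\theta)$ follows immediately from Lemma~\ref{lem:convexity} and Definition~\ref{def:convexity}.
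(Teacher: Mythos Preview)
Your proof is correct. The Hessian computation in the Hamiltonian eigenbasis and the diagonal/off-diagonal split closely parallel the paper's argument for positive semidefiniteness---the paper phrases it via the belief-propagation operator $\Phi$ and the kernel $\hat{f}(\omega)=\tanh(\omega/2)/(\omega/2)$, which is equivalent to your logarithmic-mean expression. Your reduction to $\log Z(\theta)$ at the outset is a clean simplification the paper does not make explicit.

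Where you genuinely diverge is in establishing \emph{strictness}. You show the Hessian has trivial kernel: $v^T\nabla^2\log Z\,v=0$ forces $V\propto\mathbb{I}$, and then linear independence of $\{H_i\}$ modulo the identity forces $v=0$. The paper instead stops at positive semidefiniteness and proves uniqueness of the stationary point by a separate argument: Jaynes' principle supplies a $\theta^*$ with $\langle H_i\rangle_\eta=\langle H_i\rangle_{\rho_{\theta^*}}$; for any second critical point $\chi$ one then computes $S(\rho_\chi\|\rho_{\theta^*})=-S(\rho_{\theta^*}\|\rho_\chi)$, forcing both to vanish, hence $\rho_\chi=\rho_{\theta^*}$ and $\chi=\theta^*$ by orthogonality of the $H_i$. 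Your route is more self-contained and avoids the appeal to Jaynes; the paper's route sidesteps the kernel analysis at the cost of an external existence input. Both ultimately rest on the same structural hypothesis you flag, namely $\mathbb{I}\notin\mathrm{span}\{H_i\}$---your phrasing ``no $H_i$ is a multiple of $\mathbb{I}$'' is strictly weaker than this, but the correct condition is automatic for the traceless Pauli operators the paper has in mind, and the paper's own final step needs it just as much.
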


\begin{proof}
In order to show strict convexity of $S$ we use Lemma~\ref{lem:convexity}. We first show that the Hessian of the quantum relative entropy with respect to the QBM parameters, $\nabla^2 S$, is positive semidefinite. Afterwards, we show that $S$ has a unique global optimizer $\theta^*$ for which $\nabla S(\eta \|\rho_{\theta^*})=0$, and apply the Lemma. 

Using the derivative of the matrix exponential in Eq.~\eqref{eq:ExpDerCommutator} we have
\begin{align}
\begin{split}
\label{eq:deriv_alt}
    \frac{\partial S}{\partial \theta_{i}} &= \frac{\partial}{\partial \theta_{i}} \Tr[ \eta \left( \log{\eta}-\mathcal{H}_\theta+\log \Tr[e^{\mathcal{H}_\theta} ] \right ) ] \\
    &= -\Tr[\eta H_{i}] + \frac{ \Tr[ \{\Phi(H_i), e^{\mathcal{H}_\theta} \} ] } {2 \Tr[e^{\mathcal{H}_\theta}] }  \\
    &= -\Tr[\eta H_{i}] +\Tr[\rho_\theta \Phi(H_{i})] \\
    &= -\Tr[\eta H_{i}] +\Tr[\rho_\theta H_{i}].
\end{split}
\end{align}
In the last step, we use the cyclic property of the trace. This is Eq.~\eqref{eq:RelEntropDeriv} in the main text. Next, we take the second derivative
\begin{align}
\begin{split}
\label{eq:hessian_new}
    \frac{\partial^2 S}{\partial \theta_{i} \partial \theta_{j}} &=  \frac{\partial}{\partial \theta_{j} } \Tr[\rho_\theta H_{i}] \\
    &= \Tr[\left( \frac{ \{ \Phi(H_j), e^{\mathcal{H}_\theta} \} } {2 \Tr[e^{\mathcal{H}_\theta} ] }  - \frac{ e^{\mathcal{H}_\theta} \Tr[ \{ \Phi(H_j), e^{\mathcal{H}_\theta} \} ] } {2 (\Tr[e^{\mathcal{H}_\theta} ])^2} \right)  H_{i} ]\\
    &= \frac{1}{2} \Tr[ \rho_\theta \{ \Phi(H_j), H_i \}] - \Tr[ \rho_\theta H_j] \Tr[ \rho_\theta H_i] . 
\end{split}
\end{align}
In the last step we used $\Tr[ A \{B, C\}] = \Tr[ C \{A, B\}]$ to rearrange the terms. We now show that the Hessian is positive semidefinite and satisfies Eq.~\eqref{eq:convex}. Note that one can arrive at the same result using Lemma 27 in~\cite{Anshu_2021}). For any vector $v \in \mathbb{R}^m$, we have
\begin{align}
\begin{split}
\label{eq:positive_semidefinite_0}
    v^T \nabla^2 S v &=  \sum_{n,m} v_n v_m \left( \frac{1}{2} \Tr[ \rho_\theta \{ \Phi(H_n), H_m \}] - \Tr[ \rho_\theta H_n] \Tr[ \rho_\theta H_m] \right) \\
    &= \frac{1}{2} \Tr[ \rho_\theta \left\{ \Phi\left(\sum_n v_n H_n\right), \sum_m v_m H_m \right\}] - \Tr[ \rho_\theta \sum_n v_n H_n] \Tr[ \rho_\theta \sum_m v_m H_m] \\
    &= \frac{1}{2} \Tr[ \rho_\theta \{ \Phi(W), W \}] - \Tr[ \rho_\theta W] \Tr[ \rho_\theta W] \\
    &= \Re \Tr[\rho_\theta \Phi(W) W ] - \Tr[ \rho_\theta W]^2 ,
\end{split}
\end{align}
where we define the Hermitian operator $W = \sum_n v_n H_n$ and use that $\Phi(W)$ is also Hermitian. We denote the real part of a complex number by $\Re$. Let us inspect the first term in the basis diagonalizing the Hamiltonian
\begin{align}
\begin{split}
    \Re \Tr[\rho_\theta \Phi(W) W ] &= \Re \Tr[ \left(\sum_{j} \frac{e^{\lambda_j}}{\Tr[e^{\mathcal{H}_{\theta}}]} \dyad{j}\right) \int_{-\infty}^{\infty} f(t) \left( \sum_k e^{it\lambda_k} \dyad{k} \right) W \left(\sum_l e^{-it\lambda_l} \dyad{l}\right) dt W ]  \\
    &= \Re \sum_{j,k,l} \frac{e^{\lambda_j}}{\Tr[e^{\mathcal{H}_{\theta}}]} \int_{-\infty}^{\infty} f(t) e^{it\lambda_k} e^{-it\lambda_l} dt \braket{j}{k} W_{kl} W_{lj} \\
    &= \Re \sum_{j,l} \frac{e^{\lambda_j}}{\Tr[e^{\mathcal{H}_{\theta}}]} \int_{-\infty}^{\infty} f(t) e^{-it(\lambda_l - \lambda_j)} dt W_{jl} W_{lj} \\
    &= \Re \sum_{j,l} \frac{e^{\lambda_j}}{\Tr[e^{\mathcal{H}_{\theta}}]} \frac{\tanh(\frac{\lambda_l - \lambda_j}{2})}{\frac{\lambda_l - \lambda_j}{2}} |W_{jl}|^2 \\
    &= \sum_{j} \frac{e^{\lambda_j}}{\Tr[e^{\mathcal{H}_{\theta}}]} |W_{jj}|^2 + \sum_{j, l \neq j} \frac{e^{\lambda_j}}{\Tr[e^{\mathcal{H}_{\theta}}]} \frac{\tanh(\frac{\lambda_l - \lambda_j}{2})}{\frac{\lambda_l - \lambda_j}{2}} |W_{jl}|^2 \\
    &\geq \left( \sum_{j} \frac{e^{\lambda_j}}{\Tr[e^{\mathcal{H}_{\theta}}]} W_{jj} \right)^2 .
\end{split}
\end{align}
Here we use the definition of the quantum belief propagation operator to evaluate the integral. The last step follows from Jensen's inequality applied to the first term, and by ignoring the second term (which is always positive).
Plugging this in Eq.~\eqref{eq:positive_semidefinite_0} we obtain the desired result
\begin{align}
\label{eq:positive_semidefinite}
    v^T \nabla^2 S v &= \Re \Tr[\rho_\theta \Phi(W) W ] - \Tr[ \rho_\theta W]^2 
    \geq \left( \sum_{j} \frac{e^{\lambda_j}}{\Tr[e^{\mathcal{H}_{\theta}}]} W_{jj} \right)^2 - \Tr[ \rho_\theta W]^2 = 0 .
\end{align}
We conclude that the quantum relative entropy is convex. We now show strict convexity by a contradiction argument, following Proposition 17 in~\cite{Anshu_2021}. Assume we have found one set of parameters $\theta^*$ with $\nabla S(\eta \|\rho_{\theta^*})=0$. Then from Eq.~\eqref{eq:deriv_alt} we have $\langle H_{i}\rangle_\eta =\langle H_{i} \rangle_{\rho_\theta^*}$ for all $H_{i}$. Note that we can always find at least one such $\theta^*$ by Jaynes' principle~\cite{jaynes1957}. Next, assume there exists a different set of parameters, $\chi\neq\theta^*$, with $\langle H_{i}\rangle_\eta =\langle H_{i} \rangle_{\rho_\chi}$ for all $H_{i}$. Then
\begin{align}
\begin{split}
S(\rho_\chi \| \rho_{\theta^*}) &= \Tr[ \rho_\chi\log{\rho_\chi} ]-\Tr[ \rho_\chi\log{\rho_{\theta^*}} ] \\
&= \Tr[\rho_\chi\log{\rho_\chi}] - \sum_{i}\theta^*_{i}\Tr[\rho_\chi H_{i}] +\log{Z_{\theta^*}} \\
&= \Tr[\rho_\chi\log{\rho_\chi}] - \sum_{i}\theta^*_{i}\Tr[\rho_{\theta^*} H_{i}] +\log{Z_{\theta^*}} \\
&= \Tr[\rho_\chi\log{\rho_\chi}] - \Tr[\rho_{\theta^*}\log{\rho_{\theta^*}}] \\
&\geq 0. 
\end{split}
\end{align} 
Similarly, by swapping $\rho_\chi$ and $\rho_{\theta^*}$, we find \begin{equation}
S(\rho_{\theta^*}\|\rho_\chi) = \Tr[\rho_{\theta^*}\log{\rho_{\theta^*}}] - \Tr[\rho_\chi\log{\rho_\chi}] \geq 0.
\end{equation} 
It follows that $S(\rho_\theta^*\|\rho_\chi)=0$, implying $\rho_\theta^* = \rho_\chi$. Since the operators $H_{i}$ are orthogonal we have that $\theta^*=\chi$. But this contradicts the assumption made at the beginning ($\theta^*\neq\chi$). Thus we can only have one unique $\theta^*$ such that $\nabla S(\eta \|\rho_{\theta^*})=0$, i.e., $S$ is strictly convex by Definition~\ref{def:convexity}.
\end{proof}

\subsection*{Strong convexity}
\addcontentsline{toc}{subsection}{Strong convexity}

To show $\alpha$-strong convexity of $S$ one can use Lemma~\ref{lem:strong_convexity}. To the best of our knowledge, there is no proof in the literature showing that the quantum relative entropy of Gibbs states is strongly convex in general. On the other hand, this property has been proven for particular classes of Hamiltonians. Anshu et al.~\cite{Anshu_2021} proves strong convexity for $k$-local Hamiltonians defined on a finite dimensional lattice. They show that in this case $\alpha \in \Omega(1/n)$, a polynomial decrease with respect to the system size. Haah et al.~\cite{Haah_2024} proves strong convexity for the more general class of low-intersection Hamiltonians. Low-intersection Hamiltonians have terms that act non-trivially only on a constant number of qubits, and each term intersects non-trivially with a constant number of other terms.

In this section, we use differentiable programming~\cite{Baydin2017} to numerically analyze the smallest eigenvalue of the Hessian, $\lambda_{\min}(\nabla^2 S)$, seeking evidence for strong convexity. We consider a 1D nearest-neighbor Hamiltonian, $\mathcal{H} = \sum_{i=1}^{n-1} h_{i,i+1}^{xx} X_i X_{i+1} + h_{i, i+1}^{yy} Y_i Y_{i+1} + h_{i,i+1}^{zz} Z_i Z_{i+1}$, and a fully-connected one $\mathcal{H} = \sum_{i=1}^{n-1} \sum_{j > i}^n h_{i,j}^{xx} X_i X_j + h_{i,j}^{yy} Y_i Y_j + h_{i,j}^{zz} Z_i Z_j$. We randomly sample coefficients uniformly in $[-\mu, \mu]$ where $\mu$ is a scale parameter and determines the max-norm of the vector of the coefficients. Supplementary Figure~\ref{fig:min_eval_hess} shows mean and standard deviation over $25$ instances. The smallest eigenvalue decreases with the number of qubits until convergence to a fixed value. The fully-connected Hamiltonian has $m \in O(n^2)$ parameters and yields smaller eigenvalues than the 1D Hamiltonian which has $m \in O(n)$ parameters instead. These results provide evidence of strong convexity with $\alpha$ decreasing polynomially with the system size. We leave more extensive numerical studies for future work.

\begin{figure}[ht]
    \centering
    \includegraphics[width=.75\textwidth]{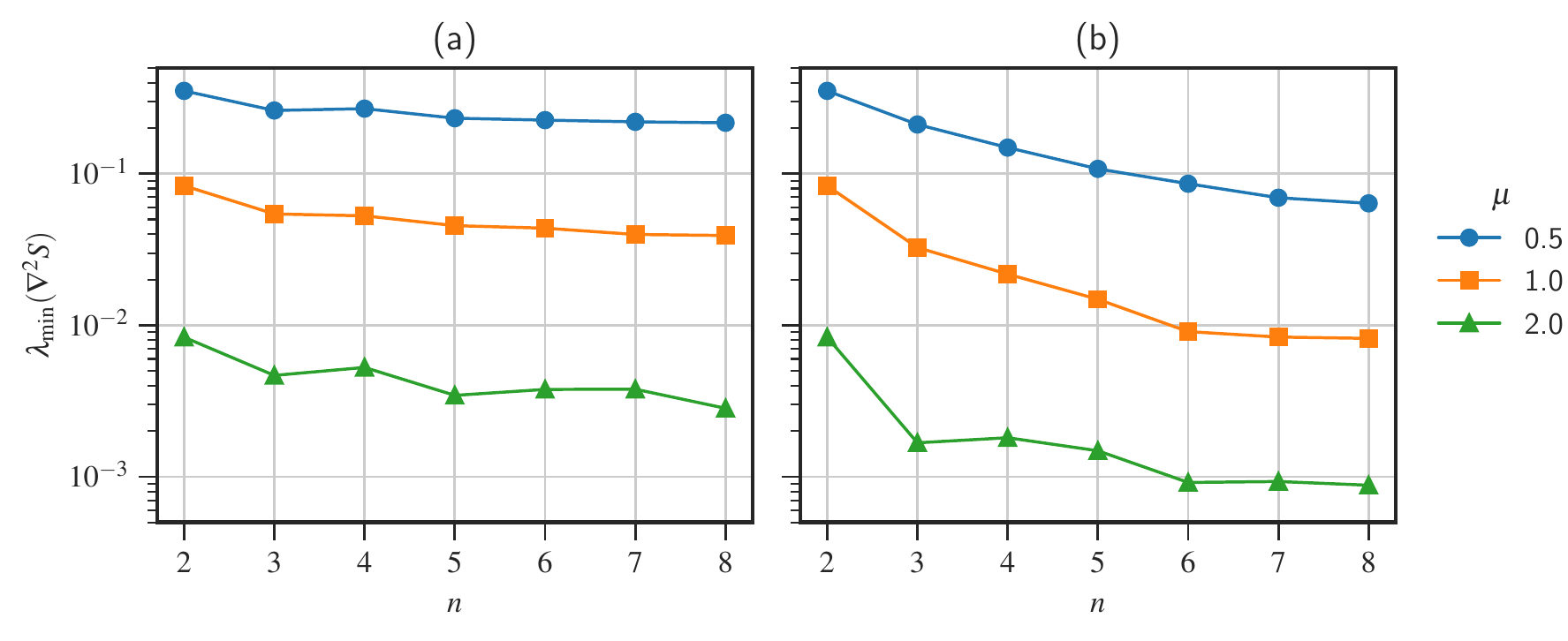}
    \caption{Minimum eigenvalue of the Hessian, as a function of the number of qubits. We show the median of $25$ random instances for the 1D nearest-neighbor Hamiltonian (a), and fully-connected Hamiltonian (b). The scale parameter $\mu$ determines the maximum size of random parameters. We observe that in all cases the smallest eigenvalues shrink with the number of qubits, but appear to plateau to a positive value.}
    \label{fig:min_eval_hess}
\end{figure}

\subsection*{\texorpdfstring{$L$}{L}-Smoothness}
\addcontentsline{toc}{subsection}{\texorpdfstring{$L$}{L}-Smoothness}

\begin{lemma}
Let $\mathcal{H}_\theta = \sum_{i=1}^m \theta_{i} H_{i}$ be a Hamiltonian ansatz where $H_i$ are non-commuting operators, and $\rho_\theta = \frac{e^{\mathcal{H}_\theta}}{Z}$ the corresponding QBM. For all states $\eta$, the quantum relative entropy $S(\eta\|\rho_\theta)$ is a L-smooth function of $\theta$ with $L=2m \max_j \| H_j \|_2^2$. 
\end{lemma}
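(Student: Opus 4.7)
The plan is to show $L$-smoothness by uniformly bounding the operator norm of the Hessian $\nabla^2 S$ in $\theta$. Since $S$ is twice continuously differentiable in $\theta$, applying the mean value theorem to $\nabla S$ along the segment between $x$ and $y$ gives $\|\nabla S(x)-\nabla S(y)\|\leq \sup_{\theta}\|\nabla^2 S(\theta)\|_{\mathrm{op}}\,\|x-y\|$, so an upper bound on $\|\nabla^2 S\|_{\mathrm{op}}$ is precisely an $L$-smoothness constant by Definition~\ref{def:smooth}.

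Starting from the explicit formula derived in Eq.~\eqref{eq:hessian_new},
\begin{equation*}
(\nabla^2 S)_{ij}=\tfrac{1}{2}\Tr[\rho_\theta\{\Phi(H_j),H_i\}]-\Tr[\rho_\theta H_i]\Tr[\rho_\theta H_j],
\end{equation*}
I would bound each entry by combining H\"older's inequality with $\|\rho_\theta\|_1=1$. The outer-product term is bounded by $\|H_i\|_2\|H_j\|_2$ because $|\Tr[\rho_\theta H]|\leq\|H\|_2$. Splitting the anticommutator and using submultiplicativity of the operator norm gives $|\tfrac{1}{2}\Tr[\rho_\theta\{\Phi(H_j),H_i\}]|\leq\|\Phi(H_j)\|_2\|H_i\|_2$. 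Invoking the contraction property $\|\Phi(V)\|_2\leq\|V\|_2$ of the quantum belief propagation channel (to be established separately), these two estimates combine to yield $|(\nabla^2 S)_{ij}|\leq 2\|H_i\|_2\|H_j\|_2\leq 2\max_k\|H_k\|_2^{\,2}$. The standard row-sum bound for Hermitian matrices then gives
\begin{equation*}
\|\nabla^2 S\|_{\mathrm{op}}\leq\max_i\sum_j|(\nabla^2 S)_{ij}|\leq 2m\max_k\|H_k\|_2^{\,2},
\end{equation*}
which is the claimed $L$.

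The main obstacle is the contraction inequality $\|\Phi(V)\|_2\leq\|V\|_2$. Working from $\Phi(V)=\int f(t)\,e^{itH}Ve^{-itH}\,dt$ with $\hat{f}(\omega)=\tanh(\omega/2)/(\omega/2)$, it suffices to exhibit $f$ as a probability density on $\mathbb{R}$: then $\Phi$ acts as an average of unitary conjugations and is non-expansive in operator norm by the triangle inequality. Normalization follows from $\int f(t)\,dt=\hat{f}(0)=1$. For non-negativity I would use the partial-fraction expansion $\tanh(\omega/2)/(\omega/2)=\sum_{k\geq 0}8/[(2k+1)^{2}\pi^{2}+\omega^{2}]$; each Lorentzian has an inverse Fourier transform proportional to $e^{-(2k+1)\pi|t|}\geq 0$, so $f$ is a non-negative sum of decaying exponentials. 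As an alternative route, one can avoid $\Phi$ altogether by working in the eigenbasis of $\mathcal{H}_\theta$, where $\Phi(V)$ is the entrywise product of $V$ with the kernel $M_{jk}=\tanh((\lambda_k-\lambda_j)/2)/((\lambda_k-\lambda_j)/2)\in(0,1]$, and arguing via a Schur multiplier bound—but the probability-density argument above is more direct and gives the cleanest path to the stated constant $L=2m\max_j\|H_j\|_2^{\,2}$.
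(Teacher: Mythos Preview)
Your proof is correct and follows essentially the same architecture as the paper: bound each Hessian entry by $2\|H_i\|_2\|H_j\|_2$, apply the row-sum (Gershgorin) bound to get $\|\nabla^2 S\|_{\mathrm{op}}\leq 2m\max_j\|H_j\|_2^2$, and pass from a uniform Hessian bound to $L$-smoothness via a line-integral/mean-value argument.

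The only substantive difference is in how you justify the contraction $\|\Phi(V)\|_2\leq\|V\|_2$. The paper works in the Hamiltonian eigenbasis, writes $\Phi(V)$ as the Schur product of $V$ with the kernel $M_{jk}=\hat f(\lambda_k-\lambda_j)$, and asserts $\|\Phi(V)\|\leq|\hat f_{\max}|\,\|V\|=\|V\|$ in one line. Your route---showing $f$ is a probability density via the Mittag--Leffler expansion $\tanh(\omega/2)/(\omega/2)=\sum_{k\geq 0}8/[(2k+1)^2\pi^2+\omega^2]$ and then observing that $\Phi$ is a convex mixture of unitary conjugations---is more explicit and self-contained: the paper's one-line Schur-multiplier inequality is not true for arbitrary entrywise multipliers and implicitly relies on exactly the positivity you prove. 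Both arguments land on the same constant, but yours makes the mechanism transparent.
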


\begin{proof}
We look for an upper bound on the largest eigenvalue of the Hessian in Eq.~\eqref{eq:hessian_new}. 
We begin with the following property
\begin{align}
\label{eq:phi_norm}
    \| \Phi(V) \| = \Big \| \sum_{j,k} \ketbra{j}{k} V_{jk} \hat{f}(\lambda_k - \lambda_j) \Big \| \leq | \hat{f}_{\max} | \Big \| \sum_{j,k} \ketbra{j}{k} V_{jk} \Big \| = \| V \| ,
\end{align}
where we use that $\hat{f}>0$ and $\hat{f}_{\max} = 1$. In what follows, we make use of this bound with the spectral norm $\| \cdot \|_2$, which is the operator norm induced by the Euclidean vector norm ($p=2$). 
Let us bound the entries of the Hessian
\begin{align}
\begin{split}
    \left| \frac{\partial^2 S}{\partial \theta_{j} \partial \theta_{k}} \right| &= \left | \frac{1}{2} \Tr[ \rho_\theta \{ \Phi(H_j), H_k \}] - \Tr[ \rho_\theta H_j] \Tr[ \rho_\theta H_k] \right| \\
    &\leq \frac{1}{2} \| \{\Phi(H_j), H_k\} \|_2 +  \| H_j \|_2 \| H_k \|_2 \\ 
    &\leq \| \Phi(H_j) \|_2 \| H_k \|_2 +  \| H_j \|_2 \| H_k \|_2 \\ 
    &\leq 2 \| H_j \|_2 \| H_k \|_2 .
\end{split}
\end{align}
Here we use the triangle inequality, that expectations are bounded by the spectral norm, the sub-multiplicative property of the spectral norm, and Eq.~\eqref{eq:phi_norm}. We are now able to upper-bound the largest eigenvalue $\lambda_\mathrm{max}$ of the Hessian
\begin{align}
\begin{split}
\label{eq:hess_upper_bound}
\| \nabla^2 S \|_2 &= | \lambda_\mathrm{max}(\nabla^2 S)| \\
&\leq \max_j \sum_{k=1}^m  \left| \frac{\partial^2 S}{\partial \theta_{j} \partial \theta_{k}} \right| \\
&\leq m \max_j \max_k  \left| \frac{\partial^2 S}{\partial \theta_{j} \partial \theta_{k}} \right| \\
&\leq 2m \max_j \max_k \| H_j \|_2 \| H_k \|_2 \\
&= 2 m \max_j \| H_j \|_2^2 .
\end{split}
\end{align}
The first inequality is a consequence of the Gershgorin circle theorem. We can now prove the $L$-smoothness of the quantum relative entropy. Let us define a function $h(t) = \nabla S(\eta \|\rho_{y + t(x - y)})$. Then we have
\begin{align}
\begin{split}
\| \nabla S(\eta \|\rho_{x}) - \nabla S(\eta \|\rho_{y}) \|_2 &= \| h(1) - h(0) \|_2 \\
&= \left\| \int_0^1 h^\prime(t) dt \right\|_2 \\
&\leq \int_0^1 \| h^\prime (t) \|_2 dt \\ 
&= \int_0^1 \| \nabla^2 S(\eta \|\rho_{y + t(x - y)}) \; (x - y) \|_2 dt  \\
&\leq \int_0^1 \| \nabla^2 S(\eta \|\rho_{y + t(x - y)}) \|_2 \; \| x - y \|_2  dt \\
&\leq 2m \max_j \| H_j \|_2^2 \; \| x - y \|_2 ,
\end{split}
\end{align}
where in the last step we used Eq.~\eqref{eq:hess_upper_bound}. Thus the quantum relative entropy is $L$-smooth with $L=2m \max_j \| H_j \|_2^2$.
\end{proof}

\section*{\MakeUppercase{Supplementary Note 3: Convergence results of SGD for training QBMs}}
\addcontentsline{toc}{section}{Supplementary Note 3: Convergence results of SGD for training QBMs}

In this appendix we first review useful results from the machine learning literature, then prove Theorems~\ref{appthm:training} and \ref{appthm:training_alpha} in the main text. We also discuss a few upper bounds for the relative entropy in the context of QBM learning. Finally, we provide additional numerical results for the scaling behavior of our theorems.

\subsection*{Review of SGD convergence results}
\addcontentsline{toc}{subsection}{Review of SGD convergence results}

We begin by stating three convergence results from the Stochastic Gradient Descent (SGD) literature. Consider a loss function $f : \mathbb{R}^m \rightarrow \mathbb{R}$ that is $L$-smooth (Def.~\ref{def:smooth}) and bounded from below by $f_\mathrm{inf} \in \mathbb{R}$. The stochastic gradient is unbiased, i.e., $\mathbb{E}[\hat{g}] = \nabla f$, and satisfies 
\begin{equation}
    \mathbb{E} \| \hat{g}(x) \|^2  \leq 2A (f(x) - f_\mathrm{inf}) + B \| \nabla f(x) \|^2 + C ,
\end{equation}
for some $A$, $B$, $C \geq 0$ and all $x \in \mathbb{R}^m$.
SGD iteratively minimizes $f$ according to the update rule $x^t = x^{t-1} - \gamma^t \hat{g}_{x^{t-1}}$ at time step $t$. Khaled and Richt{\'a}rik~\cite{Khaled2020} proved the following SGD convergence result.

\begin{lemma}[restatement of Corollary $1$ in~\cite{Khaled2020}] 
\label{lem:sgd_conv_1}
Choose precision $\epsilon > 0$ and step size $\gamma = \min
\{ \frac{1}{\sqrt{LAT}}, \frac{1}{LB}, \frac{\epsilon}{2LC} \} $, and set $\delta_0 = \mathbb{E}[f(x^0)] - f_\mathrm{inf}$. Then provided that 
\begin{equation}
    T \geq \frac{12 \delta_0 L}{\epsilon^2} \max \left\{ B, \frac{12 \delta_0 A}{\epsilon^2}, \frac{2C}{\epsilon^2} \right\} , 
\end{equation}
we have that SGD converges with 
\begin{equation}
\mathrm{min}_{1\leq t \leq T} \; \mathbb{E}\|\nabla f(x^t)\| \leq \epsilon .
\end{equation}
\end{lemma}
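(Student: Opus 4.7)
The plan is to follow the classical descent-lemma template for smooth non-convex SGD, but carefully manage the extra $A$-term which couples the function-value suboptimality to the gradient bound. First I would apply the descent lemma (Lemma~\ref{lem:descent}) to the SGD iterate $x^{t+1} = x^t - \gamma \hat{g}_{x^t}$, giving pointwise
\begin{equation*}
f(x^{t+1}) \leq f(x^t) - \gamma \nabla f(x^t)^T \hat{g}_{x^t} + \frac{L\gamma^2}{2} \|\hat{g}_{x^t}\|^2.
\end{equation*}
Taking conditional expectation, using unbiasedness $\mathbb{E}[\hat{g}_{x^t}\mid x^t] = \nabla f(x^t)$ and the ABC assumption, I get
\begin{equation*}
\mathbb{E}[f(x^{t+1})\mid x^t] \leq f(x^t) - \gamma\Bigl(1 - \frac{L\gamma B}{2}\Bigr)\|\nabla f(x^t)\|^2 + L\gamma^2 A\,(f(x^t) - f_{\inf}) + \frac{L\gamma^2 C}{2}.
\end{equation*}

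Next I would subtract $f_{\inf}$ from both sides and take full expectation, introducing $\delta^t = \mathbb{E}[f(x^t)] - f_{\inf}$, to obtain a one-step recursion
\begin{equation*}
\delta^{t+1} \leq (1 + L\gamma^2 A)\,\delta^t - \gamma\Bigl(1 - \tfrac{L\gamma B}{2}\Bigr)\mathbb{E}\|\nabla f(x^t)\|^2 + \tfrac{L\gamma^2 C}{2}.
\end{equation*}
The stated step size $\gamma = \min\{\frac{1}{\sqrt{LAT}}, \frac{1}{LB}, \frac{\epsilon}{2LC}\}$ is designed so each of the three harmful terms is individually tamed: $\gamma \leq \frac{1}{LB}$ forces the gradient coefficient to be at least $\gamma/2$; $\gamma \leq \frac{1}{\sqrt{LAT}}$ keeps the multiplicative blow-up factor $(1 + L\gamma^2 A)^T$ bounded by a constant (close to $e$); and $\gamma \leq \frac{\epsilon}{2LC}$ makes the constant-noise residual at most $\epsilon^2/4$ per step after dividing by $\gamma$. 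I would rearrange, sum from $t=0$ to $T-1$, and use $\min_t \mathbb{E}\|\nabla f(x^t)\|^2 \leq \frac{1}{T}\sum_t \mathbb{E}\|\nabla f(x^t)\|^2$ together with $\mathbb{E}\|\nabla f\|\leq \sqrt{\mathbb{E}\|\nabla f\|^2}$ (Jensen) to extract the desired bound on $\min_t \mathbb{E}\|\nabla f(x^t)\|$.

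The main obstacle is the $A$-term: because it multiplies $\delta^t$ rather than a gradient norm, the recursion is not a clean telescoping sum but a Gronwall-type inequality with geometric accumulation factor $(1+L\gamma^2 A)$. The key technical trick is to choose $T$ large enough that $L\gamma^2 A T \leq 1$ (which is exactly what $\gamma \leq 1/\sqrt{LAT}$ enforces), so that $(1+L\gamma^2 A)^T \leq e$ stays $\mathcal{O}(1)$; this converts the would-be exponential blow-up into a harmless constant and allows a clean telescoping estimate. The three lower bounds on $T$ in the statement correspond precisely to ensuring each of the three terms $B$, $A$, $C$ contributes at most $\epsilon^2/3$ (say) to the final gradient bound, and putting them together through a max gives the claimed iteration complexity. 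Squaring the conclusion and matching constants yields the precise factor of $12$.
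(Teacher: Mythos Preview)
Your proof sketch is correct and follows the standard descent-lemma plus Gronwall telescoping argument of Khaled and Richt\'arik. Note, however, that the paper does \emph{not} supply its own proof of this lemma: it is explicitly labelled as a restatement of Corollary~1 in~\cite{Khaled2020} and is simply quoted as a black-box result, so there is no paper-side argument to compare against. Your outline is essentially how the original reference proves it.
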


Here $\mathbb{E}[\cdot]$ denotes the expectation with respect to $x^t$, which is a random variable due to the stochasticity in the gradient. Let us now consider a loss function which, in addition to the previous conditions, is also $\alpha$-Polyak-{\L}ojasiewicz (Def.~\ref{def:PL}). We consider the following iterative learning rate scheme for $\gamma_t$. 
\begin{lemma}[restatement of Lemma $3$ in~\cite{Khaled2020}]
\label{lem:lrscheme}
Consider a sequence $(r_t)_t$ satisfying $r_{t+1}\leq (1-a \gamma_t)r_t + c\gamma_t^2$, where $\gamma_t \leq \frac{1}{b}$ for all $t\geq 0$ and $a,c\geq 0$ with $a\leq b$. Fix $T>0$ and let $k_0=\lceil \frac{T}{2}\rceil$. Then choosing the stepsize as 
\begin{equation}
    \gamma_t = \begin{cases}
       \frac{1}{b}, & \mathrm{if} \hspace{1mm} T \leq \frac{b}{a} \hspace{1mm} \mathrm{or} \hspace{1mm} t<k_0,\\
       \frac{2}{a(s+t-k_0)}, & \mathrm{if }\hspace{1mm} T \geq \frac{b}{a} \hspace{1mm} \mathrm{and} \hspace{1mm} t>k_0,\\
    \end{cases}
\end{equation} with $s=\frac{2b}{a}$ gives $r_T \leq \exp{-\frac{aK}{2b}}r_0 + \frac{9c}{a^2T}$
\end{lemma}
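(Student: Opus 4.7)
The plan is to handle the two regimes of the stepsize schedule separately and combine the resulting bounds. In the short-horizon regime $T \leq b/a$, the stepsize is constantly $\gamma_t = 1/b$, so I would simply unroll the recurrence to obtain $r_T \leq (1-a/b)^T r_0 + (c/b^2)\sum_{j=0}^{T-1}(1-a/b)^j$. Bounding the geometric sum by $b/a$ gives $c/(ab)$ for the second term, and $(1-a/b)^T \leq e^{-aT/b} \leq e^{-aT/(2b)}$ for the first. The short-horizon assumption $aT \leq b$ then converts $c/(ab)$ into a bound of the form $c/(a^2 T)$, which fits comfortably inside the claimed $9c/(a^2 T)$.

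In the long-horizon regime $T > b/a$ the strategy is to split the trajectory at the midpoint $k_0 = \lceil T/2\rceil$. For the first phase $t < k_0$, the stepsize is still $1/b$, so the short-horizon analysis directly yields $r_{k_0} \leq e^{-aT/(2b)} r_0 + c/(ab)$. For the second phase $t \geq k_0$, I would reindex by $\tau = t - k_0$ and write $w_\tau = r_{k_0+\tau}$ with $\tilde\gamma_\tau = 2/[a(s+\tau)]$ and $s = 2b/a$. The recurrence then simplifies to $w_{\tau+1} \leq (1 - 2/(s+\tau))\, w_\tau + 4c/[a^2(s+\tau)^2]$, which is the standard form for a $1/\tau$-type decay.

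The core technical step is to prove by induction on $\tau$ the hypothesis $w_\tau \leq V/(s+\tau)$ with the specific choice $V = s\, r_{k_0} + 4c/a^2$. The base case is immediate because $V \geq s\, r_{k_0}$. For the inductive step, substituting the hypothesis into the recurrence and clearing denominators reduces the desired inequality $w_{\tau+1} \leq V/(s+\tau+1)$ to the algebraic identity $V(s+\tau+2) \geq 4c(s+\tau+1)/a^2$, which follows from $V \geq 4c/a^2$. Evaluating at $\tau = T - k_0$, substituting the Phase A estimate on $r_{k_0}$, and noting that $s\cdot c/(ab) = 2c/a^2$ cancels cleanly with the $4c/a^2$ in $V$, I obtain $r_T \leq [s/(s+T-k_0)]\, e^{-aT/(2b)} r_0 + 6c/[a^2(s+T-k_0)]$. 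Since $s/(s+T-k_0) \leq 1$ and $s + T - k_0$ is linear in $T$, a final lower bound on $s+T-k_0$ converts the second term into an $O(1/T)$ rate.

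The hard part, I expect, is twofold. First, the induction hypothesis $w_\tau \leq V/(s+\tau)$ must be chosen with exactly the right functional form and exactly the right $V$: a weaker hypothesis does not close in the inductive step, while a stronger one inflates the base case and the final constant. The specific offset $s = 2b/a$ baked into the stepsize is precisely what makes the term $sc/(ab)$ coming from Phase A collapse to $2c/a^2$, which is essential for obtaining a single clean $c/a^2$ coefficient after combination. Second, the bookkeeping required to reach the exact numerical constant $9$ in $9c/(a^2 T)$ is delicate: one must use the long-horizon assumption $T > b/a$ together with $k_0 = \lceil T/2\rceil$ and a careful lower bound on $s + T - k_0$ in terms of $T$, which is where any slack between $\lceil T/2\rceil$ and $T/2$ propagates into the final prefactor.
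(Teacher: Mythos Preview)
The paper does not supply its own proof of this lemma: it is merely \emph{restated} from Khaled and Richt\'arik and then invoked as a black box in the subsequent SGD convergence corollaries. There is therefore no argument in the paper to compare yours against.

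On its own merits, your two-phase plan is the standard one and is structurally sound. The short-horizon case is handled correctly; the Phase~A estimate $r_{k_0}\le e^{-aT/(2b)}r_0 + c/(ab)$ is right; and the Phase~B induction $w_\tau \le V/(s+\tau)$ with $V = s\,r_{k_0} + 4c/a^2$ closes exactly as you describe, since the inductive step reduces to $V(u+2)\ge (4c/a^2)(u+1)$, which follows from $V\ge 4c/a^2$. Your observation that the offset $s=2b/a$ is precisely what makes $s\cdot c/(ab)=2c/a^2$ combine with $4c/a^2$ into a clean $6c/a^2$ is also correct.

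The one place the plan falls short of the stated constant is exactly where you flag it. Your final additive term is $6c/[a^2(s+T-k_0)]$, and in the long-horizon regime $s=2b/a$ can be as small as~$2$ while $T$ is unbounded, so the generic lower bound $s+T-k_0\ge 2+\lfloor T/2\rfloor$ only yields roughly $12c/(a^2T)$, not $9c/(a^2T)$. Recovering the exact constant~$9$ requires a slightly sharper treatment of Phase~B than the single-term ansatz $V/(s+\tau)$; but the asymptotic rate $O(c/(a^2T))$ and the exponential prefactor $e^{-aT/(2b)}r_0$ are both correctly established by your argument, so only the numerical constant is at stake.
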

For this learning rate scheme, Khaled and Richt{\'a}rik~\cite{Khaled2020} proved the following SGD convergence result.

\begin{lemma}[restatement of Corollary $2$ in~\cite{Khaled2020}]
\label{lem:sgd_conv_2}
Choose precision $\epsilon > 0$ and step size $\gamma_t$ following Lemma~\ref{lem:lrscheme} with $\gamma_t\leq \min \{\frac{\alpha}{2AL}, \frac{1}{2BL}\}$ . Then provided that 
\begin{equation}
\label{eq:condition_1}
T \geq \frac{L}{\alpha} \max \left\{ \frac{2A}{\alpha}\log{\frac{2\delta_0}{\epsilon}}, \frac{2B}{\alpha}\log{\frac{2\delta_0}{\epsilon}}, \frac{9C}{2\alpha \epsilon} \right\}
\end{equation} 
we have that SGD converges with 
\begin{equation}
 \mathbb{E} |f(x^T)-f_{\mathrm{inf}} | \leq \epsilon .
\end{equation}
\end{lemma}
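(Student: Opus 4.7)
\textbf{Proof proposal for Lemma~\ref{lem:sgd_conv_2}.} The plan is to combine the descent lemma, the unbiasedness and variance bound of $\hat g$, and the $\alpha$-Polyak--{\L}ojasiewicz (PL) condition, to reduce the analysis to a one-step contraction of the form required by Lemma~\ref{lem:lrscheme}, and then to invoke that lemma to read off the final iteration count.

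First I would start from the SGD update $x^{t+1} = x^t - \gamma_t \hat g(x^t)$ and apply the descent lemma (Lemma~\ref{lem:descent}) at $y=x^{t+1}$, $x=x^t$ to get
\begin{equation}
f(x^{t+1}) \;\leq\; f(x^t) - \gamma_t\,\nabla f(x^t)^\top \hat g(x^t) + \tfrac{L\gamma_t^2}{2}\,\|\hat g(x^t)\|^2.
\end{equation}
Taking conditional expectation given $x^t$ and using the unbiasedness $\mathbb{E}_t[\hat g(x^t)] = \nabla f(x^t)$ together with the assumed variance bound $\mathbb{E}_t\|\hat g(x^t)\|^2 \le 2A(f(x^t)-f_{\inf}) + B\|\nabla f(x^t)\|^2 + C$, I would obtain, after writing $r_t = \mathbb{E}[f(x^t)-f_{\inf}]$ and using tower expectations,
\begin{equation}
r_{t+1} \;\leq\; \bigl(1 + L\gamma_t^2 A\bigr) r_t - \gamma_t\bigl(1 - \tfrac{L\gamma_t B}{2}\bigr) \mathbb{E}\|\nabla f(x^t)\|^2 + \tfrac{L\gamma_t^2 C}{2}.
\end{equation}

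The key step is to tune $\gamma_t$ so the gradient-norm coefficient stays negative and absorbs both the $A$-term and the $B$-term into an $\alpha$-PL contraction. Using $\gamma_t \le \tfrac{1}{2BL}$ makes $1 - L\gamma_t B/2 \ge 3/4$, and then applying the $\alpha$-PL inequality (Def.~\ref{def:PL}) in the form $\|\nabla f(x^t)\|^2 \ge 2\alpha(f(x^t)-f_{\inf})$ converts the remaining gradient term into $-\tfrac{3\alpha\gamma_t}{2} r_t$ (or a similar constant). Imposing additionally $\gamma_t \le \tfrac{\alpha}{2AL}$ forces $L\gamma_t^2 A \le \tfrac{\alpha\gamma_t}{2}$, so the combined coefficient of $r_t$ becomes $1 - c\alpha\gamma_t$ for some absolute constant $c>0$. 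This yields a recursion
\begin{equation}
r_{t+1} \;\leq\; (1 - a\gamma_t)\,r_t + c'\gamma_t^2,
\end{equation}
with $a$ proportional to $\alpha$ and $c'$ proportional to $LC$, exactly of the form assumed in Lemma~\ref{lem:lrscheme}.

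Finally I would apply Lemma~\ref{lem:lrscheme} with the stepsize schedule specified there (with $b$ chosen to match $\gamma_t \le 1/b$), yielding $r_T \le \exp(-aT/(2b))\,r_0 + 9c'/(a^2 T)$. Splitting the target $\epsilon$ between the two terms and solving $\exp(-aT/(2b))\,\delta_0 \le \epsilon/2$ and $9c'/(a^2 T) \le \epsilon/2$ gives the three conditions $T \gtrsim (LA/\alpha^2)\log(2\delta_0/\epsilon)$, $T \gtrsim (LB/\alpha^2)\log(2\delta_0/\epsilon)$, and $T \gtrsim LC/(\alpha^2\epsilon)$, matching the $\max$ in the statement of the lemma. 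The main obstacle is the algebraic juggling in the second step: getting the $A$-induced $(1+L\gamma_t^2 A)$ expansion factor to merge cleanly with the $-\alpha\gamma_t$ contraction from PL while keeping the $B$-term under control, so that one ends up with a \emph{single} recursion that Lemma~\ref{lem:lrscheme} handles without extra slack. Tracking the constants carefully is what pins down the precise factors $2A/\alpha$, $2B/\alpha$, and $9C/(2\alpha\epsilon)$ in the statement.
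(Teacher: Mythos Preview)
The paper does not give its own proof of this lemma: it is stated as a restatement of Corollary~2 in Khaled and Richt{\'a}rik~\cite{Khaled2020} and is used as a black box in the proof of Theorem~\ref{appthm:training_alpha}. Your sketch is the standard argument that underlies that corollary---descent lemma plus the $(A,B,C)$ variance bound, then the PL inequality to turn the gradient-norm term into a contraction on $r_t = \mathbb{E}[f(x^t)-f_{\inf}]$, and finally invoking the stepsize lemma (Lemma~\ref{lem:lrscheme}) to read off $T$. So there is no discrepancy in approach; you have simply filled in what the paper outsources to the reference. The one place to be careful, as you already flag, is the constant bookkeeping: with $\gamma_t\le 1/(2BL)$ and $\gamma_t\le \alpha/(2AL)$ one gets $a=\alpha$, $c=LC/2$, and $b=\max\{2BL,\,2AL/\alpha\}$, and plugging these into Lemma~\ref{lem:lrscheme} and splitting $\epsilon$ in two reproduces the three branches of the $\max$ up to absolute constants.
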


\subsection*{Proofs of Theorems~\ref{appthm:training} and~\ref{appthm:training_alpha} in the main text}
\addcontentsline{toc}{subsection}{Proofs of Theorems~\ref{appthm:training} and~\ref{appthm:training_alpha} in the main text}

We prove Theorem~\ref{appthm:training}, which is repeated here for completeness. 

\begingroup
\def\thetheorem{1}
\begin{theorem}[QBM training]\label{appthm:training}
Given a QBM defined by a set of $n$-qubit Pauli operators $\{H_i\}_{i=1}^m$, a precision $\kappa$ for the QBM expectations, a precision $\xi$ for the data expectations, and a target precision $\epsilon$ such that $\kappa^2 + \xi^2 \geq \frac{\epsilon}{2m}$. After 
\begin{equation}
    T = \frac{48\delta_0 m^2  (\kappa^2 + \xi^2 )}{\epsilon^4}
\end{equation} 
iterations of stochastic gradient descent on the relative entropy $S(\eta \|\rho_\theta)$ with constant learning rate $\gamma^t=\frac{\epsilon}{4m^2(\kappa^2+\xi^2)}$, we have
\begin{equation}
    \mathrm{min}_{t=1,..,T} \; \mathbb{E}| \expval{H_i}_{\rho_{\theta^t}}  - \expval{H_i}_\eta | \leq \epsilon, \qquad \forall i,
\end{equation} where $\mathbb{E}[\cdot]$ denotes the expectation with respect to the random variable $\theta^t$. 
Each iteration $t \in \{0, \dots, T\}$ requires
\begin{equation}
\label{eq:copies_app}
    N \in \mathcal{O} \left( \frac{1}{{\kappa^4}} \log \frac{ m} { 1-\lambda^{\frac{1}{T}} } \right)
\end{equation}
preparations of the Gibbs state $\rho_{\theta^t}$, and the success probability of the full algorithm is $\lambda$. Here, $\delta_0 = S(\eta\|\rho_{\theta^0})-S(\eta\|\rho_{\theta^\mathrm{opt}})$ is the relative entropy difference with the optimal model $\rho_{\theta^\mathrm{opt}}$.
\end{theorem}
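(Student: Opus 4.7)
The proof will combine three ingredients that are essentially off-the-shelf once the setup is right: the $L$-smoothness of the quantum relative entropy (already established in Supplementary Note 2 with $L=2m\max_j\|H_j\|_2^2$, which becomes $L=2m$ for Pauli operators), the SGD convergence result of Khaled and Richt\'arik (Lemma~\ref{lem:sgd_conv_1}), and the shadow tomography sample bound of Huang et al. for simultaneously estimating $m$ Pauli expectations. The high-level plan is to verify that our stochastic gradient meets the hypotheses of Lemma~\ref{lem:sgd_conv_1}, read off the resulting iteration count and learning rate, then convert ``precision $\kappa$ on each QBM expectation'' into a per-iteration sample count via classical shadows, finally union-bounding success across the $T$ rounds.

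First, I will set up the stochastic gradient oracle. Writing $\hat g_i = \hat{\expval{H_i}}_{\rho_\theta} - \hat{\expval{H_i}}_\eta$ and using unbiasedness plus independence of the two estimators, the $i$-th component has variance bounded by $\kappa^2+\xi^2$, so
\begin{equation}
\mathbb{E}\|\hat g_\theta\|^2 \;=\; \|\nabla S(\eta\|\rho_\theta)\|^2 + \mathbb{E}\|\hat g_\theta - \nabla S\|^2 \;\leq\; \|\nabla S\|^2 + m(\kappa^2+\xi^2).
\end{equation}
This matches the ``$ABC$-condition'' of Khaled--Richt\'arik with $A=0$, $B=1$, $C=m(\kappa^2+\xi^2)$. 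Plugging $L=2m$, $A=0$, $B=1$, $C=m(\kappa^2+\xi^2)$ into Lemma~\ref{lem:sgd_conv_1}, the prescribed step size collapses to $\gamma=\min\{1/(2m),\,\epsilon/(4m^2(\kappa^2+\xi^2))\}$, and the hypothesis $\kappa^2+\xi^2\geq \epsilon/(2m)$ is precisely what makes the second term the binding one, yielding $\gamma^t=\epsilon/(4m^2(\kappa^2+\xi^2))$ as claimed. The iteration count comes straight from substituting into the lemma: $T\geq 12\delta_0 L/\epsilon^2\cdot\max\{1,\,2C/\epsilon^2\}=48\delta_0 m^2(\kappa^2+\xi^2)/\epsilon^4$. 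The lemma guarantees $\min_t\mathbb{E}\|\nabla S(\theta^t)\|\leq \epsilon$, and since $|\expval{H_i}_{\rho_\theta}-\expval{H_i}_\eta|=|(\nabla S)_i|\leq\|\nabla S\|$, Jensen's inequality gives the per-component bound in the theorem statement.

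For the per-iteration sample complexity, at each $t$ we must estimate all $m$ Pauli expectations $\expval{H_i}_{\rho_{\theta^t}}$ to precision $\kappa$. I will invoke the shadow tomography protocol of Huang et al., which for $m$ arbitrary Pauli observables and a target per-observable error $\kappa$ with failure probability $\delta_t$ requires $N\in\mathcal{O}(\log(m/\delta_t)/\kappa^4)$ copies of $\rho_{\theta^t}$. To obtain overall success probability $\lambda$ over the $T$ rounds, I take a union bound by demanding each round succeed with probability $\lambda^{1/T}$, i.e., $\delta_t = 1-\lambda^{1/T}$, which yields the stated $N\in\mathcal{O}(\kappa^{-4}\log(m/(1-\lambda^{1/T})))$.

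The part that needs the most care, rather than deep cleverness, is the matching of constants: making sure the $L$-smoothness constant from Supplementary Note 2 is exactly $2m$ for Paulis, that the variance bound gives $C=m(\kappa^2+\xi^2)$ (not $m^2$ or $(\kappa+\xi)^2$), and that the hypothesis $\kappa^2+\xi^2\geq\epsilon/(2m)$ is the precise condition under which the second branch of $\min$ in the Khaled--Richt\'arik step size is selected. A minor subtlety is that Lemma~\ref{lem:sgd_conv_1} controls $\mathbb{E}\|\nabla S(\theta^t)\|$ while Definition~\ref{def:problem} asks for per-component bounds; this is handled by the trivial inequality $|v_i|\leq\|v\|$ together with linearity of expectation. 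The only genuinely non-trivial obstacle is verifying the ABC-condition with the stated $A,B,C$: one has to be careful that the stochastic gradient is unbiased even after shadow tomography post-processing, and that ``independence of QBM and data estimators'' suffices so cross-terms vanish in the variance computation. Everything else is algebraic bookkeeping.
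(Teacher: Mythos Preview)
Your proposal is correct and follows the paper's proof essentially step for step: $L=2m$ smoothness for Pauli operators, the ABC-condition with $A=0$, $B=1$, $C=m(\kappa^2+\xi^2)$, Lemma~\ref{lem:sgd_conv_1} for the iteration count and learning rate, and the Huang et al.\ shadow bound with per-round failure $1-\lambda^{1/T}$ for the copy count. The only quibble is terminological---what you call a ``union bound'' on the success probability is really a product-of-successes argument (each round succeeds with probability $\lambda^{1/T}$ so the product is $\lambda$), which is exactly what the paper does.
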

\endgroup

\begin{proof}
Recall from Supplementary Note 2 that the quantum relative entropy is $L$-smooth with $L=2 m \max_i \| H_i \|_2^2$ and that for Pauli operators $\| H_i \|_2 = 1$. Then, we can minimize the relative entropy by SGD and apply the convergence result in Lemma~\ref{lem:sgd_conv_1}. 

For the SGD algorithm, we need an unbiased gradient estimator with bounded variance. We recall that the gradient of the relative entropy is given by $\partial_{\theta_i} S(\eta \|\rho_\theta)  = \langle H_i \rangle_{\rho_\theta} - \langle H_i\rangle_\eta$. The target expectation values $\langle H_i\rangle_\eta$ are estimated as $\hat{h}_{i,\eta}$ from the data set, as described in Supplementary Note 5. Note that $| \langle H_i\rangle_\eta - \hat{h}_{i,\eta} | \leq \xi$, where $\xi >0$ is limited by the size of the data set. One can improve on $\xi$ by collecting more data, as long as the number of samples is polynomial in $n$.

For estimating the QBM expectation values $\langle H_i \rangle_{\rho_\theta}$, we can use a number of techniques. Here we focus on classical shadow tomography. From Theorem 4 in~\cite{Huang_2021}, there exists a procedure that returns the expectation values of $m$ different Pauli operators $\{H_i\}$ to precision $\kappa$ with $\mathcal{O}(\frac{\log{m/\tilde{\lambda}}}{\kappa^4})$ preparations of $\rho_\theta$, and succeeds with probability $1-\tilde{\lambda}$. To make use of this result we now restrict the $H_i$ in the QBM Hamiltonian to be Pauli operators. As discussed in the main text, this is not a limitation, and it is possible to use non-Pauli operators in combination with other shadow tomography protocols. We can therefore obtain estimators $\hat{h}_{i, \rho_\theta}$ such that 
\begin{equation}
\label{eq:shadows}
    \max_i |\hat{h}_{i, \rho_\theta} - \langle H_i\rangle_{\rho_\theta}| \leq \kappa. 
\end{equation} 

We then use $\hat{g}_{\theta_i} = \hat{h}_{i, \rho_\theta} - \hat{h}_{i, \eta}$ as estimators for the partial derivatives of the quantum relative entropy. The variance of the norm of the gradient estimator is bounded as
\begin{align}
\begin{split}
\label{eq:grad_bounded_variance}
    \mathbb{E} \|\hat{g}_\theta - \nabla S(\eta \|\rho_\theta) \|^2 &=
    \mathbb{E} \sum_{i=1}^m (\hat{h}_{i, \rho_\theta} - \hat{h}_{i, \eta} - \langle H_i \rangle_{\rho_\theta} + \langle H_i\rangle_\eta )^2 \\
    &\leq \mathbb{E}\sum_{i=1}^m (\hat{h}_{i, \rho_\theta} - \langle H_i \rangle_{\rho_\theta} )^2 + (\hat{h}_{i, \eta} - \langle H_i\rangle_\eta )^2 \\
    &\leq m ( \kappa^2 + \xi^2).
\end{split}
\end{align} 
Since the variance can also be written as $\mathbb{E} \|\hat{g}_\theta \|^2 - \| \nabla S(\eta \|\rho_\theta) \|^2$ we find that our setup is compatible with Eq.~\eqref{eq:condition_1} for $A=0, B=1, C=m ( \kappa^2 + \xi^2)$. We choose $\epsilon<1$ and $\kappa^2 + \xi^2 \geq \frac{\epsilon}{2m}$ in Lemma~\ref{lem:sgd_conv_1}. This yields a learning rate of $\gamma = \frac{\epsilon}{ 4 m^2 (\kappa^2 + \xi^2)}$. We conclude that after 
\begin{equation}
    T \geq \frac{48 \delta_0 m^2 ( \kappa^2 + \xi^2)}{\epsilon^4}
\end{equation}
iterations of SGD we have
\begin{equation}
    \mathrm{min}_{1\leq t \leq T} \; \mathbb{E}\|\nabla S(\eta \|\rho_{\theta^t}) \| \leq \epsilon. 
\end{equation} 
Here $\delta_0 = S(\eta\|\rho_{\theta^0})-S(\eta\|\rho_{\theta^\mathrm{opt}})$ is the relative entropy at the initialization minus the relative entropy at the optimum. Importantly we note the QBM expectation values are computed with a success probability $1-\tilde{\lambda}$ at each iteration. Consequently, the total success probability of the whole training is equal to $(1-\tilde{\lambda})^T$ for $T$ update steps. Then to have a total success probability of $\lambda$ we need to set $\tilde{\lambda}=(1-\lambda^{1/T})$ in the shadow tomography protocol. This result, together with the sampling bound on the number of measurements of the shadow tomography, $\mathcal{O}(\frac{\log{m/\tilde{\lambda}}}{\kappa^4})$, completes the proof of Theorem~\ref{appthm:training}.
\end{proof}

We now provide a proof for Theorem~\ref{appthm:training_alpha}, which we restate here. 

\begingroup
\def\thetheorem{2}
\begin{theorem}[$\alpha$-strongly convex QBM training]\label{appthm:training_alpha}
Given a QBM defined by a Hamiltonian ansatz $\mathcal{H}_\theta$ such that $S(\eta\|\rho_\theta)$ is $\alpha$-strongly convex, a precision $\kappa$ for the QBM expectations, a precision $\xi$ for the data expectations, and a target precision $\epsilon$ such that $\kappa^2 + \xi^2 \geq \frac{\epsilon}{2m}$. After
\begin{equation}
    T =\frac{18 m^2 (\kappa^2 + \xi^2)}{\alpha^2\epsilon^2}
\end{equation} 
iterations of stochastic gradient descent on the relative entropy $S(\eta \|\rho_\theta)$ with learning rate $\gamma^t \leq \frac{1}{4m^2}$, we have
\begin{equation}
    \mathrm{min}_{t=1,..,T} \; \mathbb{E}| \expval{H_i}_{\rho_{\theta^t}}  - \expval{H_i}_\eta | \leq \epsilon, \qquad \forall i.
\end{equation} 
Each iteration requires the number of samples given in Eq.~\eqref{eq:copies_app}. 
\end{theorem}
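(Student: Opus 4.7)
The plan is to repeat the three-step structure of the proof of Theorem~\ref{appthm:training}, namely (i)~verify the SGD hypotheses, (ii)~invoke the appropriate convergence lemma, (iii)~translate the functional-gap bound into an expectation-value bound, but now using the stronger convergence result available under the $\alpha$-strong convexity assumption. The key substitution is to replace Lemma~\ref{lem:sgd_conv_1} with Lemma~\ref{lem:sgd_conv_2}, whose $\alpha$-Polyak-{\L}ojasiewicz hypothesis is supplied for free by the assumption via Lemma~\ref{lem:PL}.

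First, I would reuse without change the $L$-smoothness bound $L=2m$ derived in Supplementary Note~2 for Pauli operators, together with the expected-smoothness of the shadow-tomography plus mini-batch estimator with $A=0$, $B=1$, $C = m(\kappa^2+\xi^2)$ obtained in Eq.~\eqref{eq:grad_bounded_variance}. The step-size upper bound in Lemma~\ref{lem:sgd_conv_2} reduces to $\gamma^t \leq 1/(2BL) = 1/(4m)$, which is compatible with (and less restrictive than) the cap $\gamma^t \leq 1/(4m^2)$ stated in the theorem.

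Second, invoking Lemma~\ref{lem:sgd_conv_2} under the scheduling of Lemma~\ref{lem:lrscheme} yields, for any target $\tilde{\epsilon}>0$, a guarantee of the form
\begin{equation}
\mathbb{E}\,\bigl|\,S(\eta\|\rho_{\theta^T}) - S(\eta\|\rho_{\theta^{\mathrm{opt}}})\,\bigr| \leq \tilde{\epsilon}
\end{equation}
after a number of iterations of order $LC / (\alpha^2 \tilde{\epsilon})$, since in our regime $\kappa^2+\xi^2 \geq \epsilon/(2m)$ the variance term $\tfrac{9C}{2\alpha\tilde{\epsilon}}$ dominates the two logarithmic terms inside the max of that lemma. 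The third step converts this relative-entropy guarantee into the stated bound on per-observable errors. The bridging inequality is the standard fact that $L$-smoothness combined with the existence of a global minimizer gives $\|\nabla f(x)\|^2 \leq 2L(f(x) - f_{\min})$; this follows by evaluating the descent lemma (Lemma~\ref{lem:descent}) at the point $x - \tfrac{1}{L}\nabla f(x)$. Applying Jensen's inequality then yields $\mathbb{E}\|\nabla S(\theta^T)\| \leq \sqrt{2L\,\tilde{\epsilon}}$, and since $|\partial_{\theta_i} S(\theta)| = |\langle H_i\rangle_{\rho_{\theta}} - \langle H_i\rangle_\eta|$ is bounded by $\|\nabla S(\theta)\|$ for every index $i$, choosing $\tilde{\epsilon}$ proportional to $\epsilon^2/L$ turns the bound into $\epsilon$-closeness of every expectation value simultaneously. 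The per-iteration sample count $N$ carries over from Theorem~\ref{appthm:training} unchanged, as the estimator is identical.

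The most delicate part will be tracking the numerical constants so that the iteration count produced by these combined inequalities matches the stated prefactor and scaling $T = \frac{18 m^2(\kappa^2+\xi^2)}{\alpha^2\epsilon^2}$. The overall shape is immediate from the substitutions $L=2m$, $C=m(\kappa^2+\xi^2)$, and $\tilde{\epsilon} \propto \epsilon^2/L$ into $T \sim LC/(\alpha^2 \tilde{\epsilon})$, but pinning down the constant requires carrying factors of $\tfrac{1}{2}$ through the descent lemma, the variance bound, and the Jensen step, as well as verifying that under the assumption $\kappa^2 + \xi^2 \geq \epsilon/(2m)$ the logarithmic terms inside the max of Lemma~\ref{lem:sgd_conv_2} are indeed subdominant. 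Beyond this constant chase, no new ideas beyond those already used in the proof of Theorem~\ref{appthm:training} are needed.
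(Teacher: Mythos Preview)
Your three-step template matches the paper's proof closely in steps (i) and (ii): the paper likewise reuses $L=2m$, $A=0$, $B=1$, $C=m(\kappa^2+\xi^2)$, and invokes Lemma~\ref{lem:sgd_conv_2} via $\alpha$-strong convexity $\Rightarrow$ $\alpha$-PL (Lemma~\ref{lem:PL}). The divergence is in step (iii), the conversion from a functional-gap bound to per-observable errors, and this is not merely a constant issue.

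The paper does \emph{not} use the smoothness inequality $\|\nabla f\|^2 \leq 2L(f-f_{\min})$. Instead it first proves the collinearity identity
\[
S(\eta\|\rho_\theta) - S(\eta\|\rho_{\theta^{\mathrm{opt}}}) = S(\rho_{\theta^{\mathrm{opt}}}\|\rho_\theta),
\]
then applies Pinsker's inequality to the right-hand side, followed by the variational characterization of trace distance, to obtain $|\langle H_i\rangle_{\rho_{\theta^{\mathrm{opt}}}} - \langle H_i\rangle_{\rho_\theta}| \leq \sqrt{2\,S(\rho_{\theta^{\mathrm{opt}}}\|\rho_\theta)}$ for each Pauli $H_i$. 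The crucial feature of this route is that the conversion is \emph{independent of $L$}: the paper needs only $\epsilon' = \epsilon^2/2$, giving $T = \tfrac{9LC}{2\alpha^2\epsilon'} = \tfrac{18m^2(\kappa^2+\xi^2)}{\alpha^2\epsilon^2}$.

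Your bridging step requires $\tilde\epsilon \leq \epsilon^2/(2L) = \epsilon^2/(4m)$, which feeds back into $T \sim \tfrac{9LC}{2\alpha^2\tilde\epsilon} \sim \tfrac{36m^3(\kappa^2+\xi^2)}{\alpha^2\epsilon^2}$ --- an extra factor of $m$ relative to the stated bound. So your proposed argument is valid and yields a polynomial guarantee, but it cannot reproduce the $m^2$ scaling in the theorem; the ``constant chase'' you anticipate is in fact a scaling mismatch. To recover the claimed $T$ you need the Pinsker route (or some other $L$-free conversion), together with the collinearity identity that makes Pinsker applicable between two Gibbs states rather than between $\eta$ and $\rho_\theta$.
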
 
\endgroup

\begin{proof}
In order to prove this theorem, we first show that $\eta$, $\rho^\mathrm{opt}$, and $\rho_\theta$ are `collinear' with respect to the relative entropy. This is helpful because it allows us to directly bound the difference in relative entropy $ S(\eta \| \rho_{\theta}) - S(\eta \| \rho_{\theta^\mathrm{opt}})$ instead of bounding the individual relative entropies.
\begin{align}
\begin{split}
\label{eq:collinear}
    S(\eta \| \rho_{\theta}) - S(\eta \| \rho_{\theta^\mathrm{opt}})&= -\Tr[\eta \log \rho_\theta ] + \Tr[ \eta \log \rho_{\theta^\mathrm{opt}}] \\
    &= -\Tr[\eta \mathcal{H}_\theta] + \log Z_\theta + \Tr[\eta \mathcal{H}_{\theta^\mathrm{opt}}] - \log Z_{\theta^\mathrm{opt}} \\
    &= -\Tr[\rho_{\theta^\mathrm{opt}} \mathcal{H}_\theta] + \log Z_\theta + \Tr[\rho_{\theta^\mathrm{opt}} \mathcal{H}_{\theta^\mathrm{opt}}] - \log Z_{\theta^\mathrm{opt}} \\
    &= -\Tr[\rho_{\theta^\mathrm{opt}} \log \rho_{\theta} ] + \Tr[\rho_{\theta^\mathrm{opt}} \log \rho_{\theta^\mathrm{opt}} ]  \\
    &= S(\rho_{\theta^\mathrm{opt}} \| \rho_{\theta}) .
\end{split}
\end{align}
Here, in the going from the second to the third line, we used the fact that $\Tr[\eta H_i] = \Tr[\rho_{\theta^{\mathrm{opt}}} H_i]$, which follows from setting Eq.~\eqref{eq:deriv_alt} to zero. Rearranging the terms we get the collinearity $S(\eta \| \rho_{\theta}) = S(\eta \| \rho_{\theta^\mathrm{opt}}) + S(\rho_{\theta^\mathrm{opt}} \| \rho_{\theta})$.  
This is a non-trivial result because the relative entropy is not a distance: it is not symmetric and does not satisfy the triangle inequality in general. With this relation, we are now able to prove Theorem~\ref{appthm:training_alpha}. 

The relative entropy $S(\rho_{\theta^\mathrm{opt}} \| \rho_\theta)$ satisfies all the relevant assumptions for SGD convergence: it is an $L$-smooth function with $L=2 m \max_i \| H_i \|_2^2$, it is bounded below by $0$, and the stochastic gradient has bounded variance [Eqs.~\eqref{eq:grad_bounded_variance} and~\eqref{eq:copies_app} apply]. In addition, the $\alpha$-strong convexity assumed by the theorem implies that $S(\rho_{\theta^\mathrm{opt}} \| \rho_\theta)$ is $\alpha$-Polyak-{\L}ojasiewicz by Lemma~\ref{lem:PL}. This means we can invoke Lemma~\ref{lem:sgd_conv_2}. As before, we set $A=0, B=1, C=m( \kappa^2 + \xi^2)$ and choose $\epsilon^\prime < 1$ in the Lemma, thus obtaining a maximum learning rate $\gamma_t \leq \frac{1}{4m^2}$. Looking at the case where $\frac{2}{\alpha}\log{\frac{2\delta_0}{\epsilon^\prime}}\leq\frac{9m(\kappa^2 + \xi^2)}{2\alpha \epsilon^\prime}$ we find that after $T \geq \frac{9m^2(\kappa^2 + \xi^2)}{\alpha^2\epsilon^\prime}$ iterations the expected relative entropy is $\mathbb{E} S(\rho_{\theta^\mathrm{opt}} \|\rho_{\theta^T}) \leq \epsilon^\prime$. Note that, depending on the problem-specific parameter $\delta_0$, and the free parameters $\kappa$ and $\xi$, one could be in the other case of Lemma~\ref{lem:sgd_conv_2}. Following the same steps shown here one arrives at a different number of iterations $T$, which is still polynomial in $n$.\\

Since the QBM learning problem is phrased in terms of the distance in the expectation values and we only have obtained a bound on the relative entropy we now relate the two. Using Pinsker's and Jensen's inequalities it follows that
\begin{align}
\begin{split}
    \mathbb{E} S(\rho_{\theta^\mathrm{opt}} \|\rho_{\theta^T}) &\geq \frac{1}{2 \ln 2} \mathbb{E} \| \rho_{\theta^\mathrm{opt}} - \rho_{\theta^T}  \|^2_1 \\
    &\geq \frac{1}{2} ( \mathbb{E} \| \rho_{\theta^\mathrm{opt}} - \rho_{\theta^T}  \|_1 )^2 \\
    &= \frac{1}{2} ( \mathbb{E} \max_{-I < U \leq I} | \Tr[ U (\rho_{\theta^\mathrm{opt}} - \rho_{\theta^T}) ] | )^2,
\end{split}
\end{align}
where in the last line we used the variational definition of trace distance. The maximization is over unitary matrices. Let us restrict the maximization to unitary matrices defined as $U_i = \frac{H_i}{\| H_i \|} + i \sqrt{ I - \frac{H_i^2}{\| H_i \|^2} }$. These have the property that $H_i = \| H_i \| (U_i + U_i^\dag )/2$. 
Therefore,
\begin{align}
\begin{split}
    \sqrt{2\mathbb{E} S(\rho_{\theta^\mathrm{opt}} \|\rho_{\theta^T})} &\geq \mathbb{E} \max_{-I < U \leq I} \left| \Tr[ \frac{1}{2}(U + U^\dag) (\rho_{\theta^\mathrm{opt}} - \rho_{\theta^T}) ] \right| \\
    &\geq \mathbb{E} \max_i \left| \Tr[ \frac{1}{2}(U_i + U_i^\dag)(\rho_{\theta^\mathrm{opt}} - \rho_{\theta^T}) ] \right| \\
    & \geq \mathbb{E} \max_i \frac{1}{\| H_i\|} | \Tr[ H_i \rho_{\theta^\mathrm{opt}}]  - \Tr[ H_i \rho_{\theta^T} ] | .
\end{split}
\end{align}
This implies,
\begin{align}
    \sqrt{2\epsilon^\prime} \geq \frac{1}{\| H_i \| } \mathbb{E} | \Tr[ H_i \rho_{\theta^\mathrm{opt}}]  - \Tr[ H_i \rho_{\theta^T} ] | , \quad \forall i . 
\end{align}
To solve the QBM learning problem to precision $\epsilon$ we choose $\epsilon^\prime = \frac{\epsilon^2}{2}$ and, since $\| H_i \| = 1$ for Pauli operators, we conclude that
\begin{align}
    \mathbb{E} | \Tr[ H_i \eta]  - \Tr[ H_i \rho_{\theta^T} ] | \leq \epsilon , \quad \forall i . 
\end{align}
\end{proof}

\subsection*{Achieving a desired precision on the quantum relative entropy}
\addcontentsline{toc}{subsection}{Achieving a desired precision on the quantum relative entropy}

In this section, we study the scenario where the reader is interested in obtaining a certain precision on the quantum relative entropy, rather than on the difference in the expectation values. Again, due to a potential model mismatch, we discuss the relative entropy $S(\rho_{\theta^\mathrm{opt}} \| \rho_{\theta})$ w.r.t. the optimal QBM $\rho_{\theta^\mathrm{opt}}$.

We begin by training the QBM $\rho_{\theta}$ with SGD. Using Theorem~\ref{appthm:training}, we can achieve $|\langle H_{i}\rangle_\eta -\langle H_{i} \rangle_{\rho_\theta} | \leq \epsilon$ for all $i$ with polynomial sample complexity. This implies a similar relation w.r.t. the optimal model: $| \langle H_{i}\rangle_\eta -\langle H_{i} \rangle_{\rho_{\theta^\mathrm{opt}}} | \leq \epsilon$. By the triangle inequality we have that $| \langle H_{i}\rangle_{\rho_\theta} - \langle H_{i} \rangle_{\rho_{\theta^\mathrm{opt}}} | \leq 2\epsilon$.
Then
\begin{align}
\begin{split}
S(\rho_{\theta^\mathrm{opt}} \| \rho_\theta) &=\Tr[\rho_{\theta^\mathrm{opt}} \log \rho_{\theta^\mathrm{opt}}] - \sum_i \theta_i \Tr[\rho_{\theta^\mathrm{opt}} H_i] + \log{Z_{\theta}} \\
&= \Tr[\rho_{\theta^\mathrm{opt}} \log \rho_{\theta^\mathrm{opt}}] - \sum_i \theta_i \Tr[\rho_{\theta^\mathrm{opt}} H_i] + \log Z_\theta + \sum_i \theta_i \Tr[\rho_{\theta} H_i] -  \sum_i \theta_i \Tr[\rho_{\theta} H_i] \\
&= \Tr[\rho_{\theta^\mathrm{opt}} \log \rho_{\theta^\mathrm{opt}}] - \Tr[\rho_{\theta} \log \rho_{\theta}] + \sum_i \theta_i ( \Tr[\rho_{\theta} H_i] - \Tr[\rho_{\theta^\mathrm{opt}} H_i] ).
\end{split}
\end{align}
Similarly,
\begin{align}
S(\rho_{\theta} \| \rho_{\theta^\mathrm{opt}}) =  - \Tr[\rho_{\theta^\mathrm{opt}} \log \rho_{\theta^\mathrm{opt}}] + \Tr[\rho_{\theta} \log \rho_{\theta}] - \sum_i \theta^\textrm{opt}_i ( \Tr[\rho_{\theta} H_i] - \Tr[\rho_{\theta^\mathrm{opt}} H_i] ). 
\end{align}
Thus
\begin{align}
\begin{split}
\label{eq:up_bound_for_S}
S(\rho_{\theta^\mathrm{opt}} \| \rho_\theta) &\leq S(\rho_{\theta^\mathrm{opt}} \| \rho_\theta) + S(\rho_{\theta} \| \rho_{\theta^\mathrm{opt}}) \\
&= \sum_i ( \Tr[\rho_{\theta} H_i] - \Tr[\rho_{\theta^\mathrm{opt}} H_i] ) ( \theta_i - \theta^\textrm{opt}_i) \\
&\leq \sum_i | \Tr[\rho_{\theta} H_i] - \Tr[\rho_{\theta^\mathrm{opt}} H_i] | \cdot | \theta_i - \theta^\textrm{opt}_i | \\
&\leq 2 \epsilon \| \theta - \theta^\textrm{opt} \|_1 .
\end{split}
\end{align}

To minimize the quantum relative entropy to precision $\epsilon^\prime$, we choose $\epsilon \leq \frac{\epsilon^\prime}{ 2 \| \theta - \theta^\textrm{opt} \|_1 }$. This determines the number of SGD iterations via Theorem~\ref{appthm:training}. Note that the number of iterations remains polynomial in the system size $n$. Finally, we combine this result with Eq.~\eqref{eq:collinear} and obtain the implication
\begin{align}
\label{eq:app_relentropbound}
    |\langle H_{i}\rangle_\eta -\langle H_{i} \rangle_{\rho_\theta} | \leq \epsilon, \forall i \quad \implies \quad S(\eta \| \rho_{\theta}) - S(\eta \| \rho_{\theta^\mathrm{opt}}) = S(\rho_{\theta^\mathrm{opt}} \| \rho_{\theta}) \leq 2 \epsilon \| \theta - \theta^\textrm{opt} \|_1 .
\end{align} 
This proves Eq.~\eqref{eq:relentropbound} in the main text. 

\subsection*{Additional numerical results}
\addcontentsline{toc}{subsection}{Additional numerical results}

Here we provide additional numerical simulation results which verify the scaling behaviors proven in Theorems~\ref{appthm:training} and~\ref{appthm:training_alpha}. In order to obtain robust estimates one would need to perform many repetitions of SGD training and analyze different settings of the QBM model and target state. However, to keep the computational cost of the simulations manageable, we only consider a few configurations of system size, learning rate, and ansatz. We focus on the target state constructed from the expectation values of the XXZ Heisenberg model (see Supplementary Note 5).
Unless stated otherwise, we use a fully-connected QBM model with Hamiltonian ansatz $\mathcal{H}_\theta = \sum_{k=x,y,z}\sum_{i, j>i}\lambda^k_{ij}\sigma_i^k\sigma^k_j + \sum_i^n \gamma^k_i\sigma^k_i$, where the number of terms is quadratic in the system size, $m \in \mathcal{O}(n^2)$. For each configuration, we run SGD training only once, starting from the maximally mixed state.

Supplementary Figure~\ref{fig:scale}~(a) shows the required number of SGD iterations, $T$, to reach convergence for different system sizes. Convergence is defined as the first iteration where the QBM matches the target expectation values up to precision $\epsilon=0.1$, i.e., when the QBM learning problem is solved to precision $\epsilon$. We use a combined model and data expectations precision of $\kappa^2+\xi^2=0.01$ which is larger than $\frac{\epsilon}{2m}$ as required by our theorems. We compare training with the learning rate from Theorem~\ref{appthm:training} to training with a constant learning rate $\gamma_t=0.1\epsilon$. We observe that for the learning rate from Theorem~\ref{appthm:training} (blue line) the number of iterations $T$ grows with the system size. From the log-log plot in the inset we find evidence that the scaling follows a power law with a power (slope) of $\approx3.3$. This indicates that $T\propto n^{3.3}$, in agreement with the upper bound from Theorem~\ref{appthm:training} applied to our model, $T \propto m^2 \in \mathcal{O}(n^4)$. When using a constant learning rate we observe a much more rapid convergence (orange line). In particular, $T$ does not seem to increase with the system size. However, there are no proven convergence guarantees for this setting. That is, for very noisy gradients and/or high target precision, a constant learning rate may lead to unstable training. In practice one could opt for an in-between learning rate schedule or use an optimizer with adaptive learning rates, such as the well-known ADAM optimizer. 

Supplementary Figure~\ref{fig:scale}~(b) shows the required number of iterations versus the target precision, $\epsilon^{-1}$. For these simulations we use $\kappa^2+\xi^2=0.1$ and system size $n=6$. Using a fully-connected QBM, we compare training with the learning rate from Theorem~\ref{appthm:training} (blue line) to training with the learning rate from Theorem~\ref{appthm:training_alpha} (orange line). Recall that we provided evidence for the strong convexity of the quantum relative entropy of fully-connected QBMs in Supplementary Note~2. 
If the loss landscape is indeed strongly convex, we can use the learning rate from Theorem~\ref{appthm:training_alpha} and the number of steps scales as $1/\epsilon^2$ otherwise it is expected to scale as $1/\epsilon^4$.
Thus, we expect the learning rate from Theorem~\ref{appthm:training_alpha} to outperform the learning rate from Theorem~\ref{appthm:training}. Indeed we observe convergence in a low degree polynomial number of iterations for the orange line, versus a higher degree polynomial number of iterations for the blue line. Since the strong convexity of fully-connected QBMs is unproven, it is hard to justify the choice of learning rate. We additionally tested the scaling using a 1D nearest-neighbor Hamiltonian ansatz for the QBM (green line), which is known to be strongly convex as discussed in Supplementary Note 2.
The green line shows a very low degree polynomial scaling when using the learning rate schedule from Theorem~\ref{appthm:training_alpha}. These numerical results match the expected scaling from our theorems. 

Lastly, here and in the main text we focused on the error in the expectation values. If one is interested in other performance metrics, they can employ Eq.~\eqref{eq:app_relentropbound} to convert our expectation value plots to relative entropy bounds.

\begin{figure}[ht]
    \centering
    \includegraphics[width=0.92\textwidth]{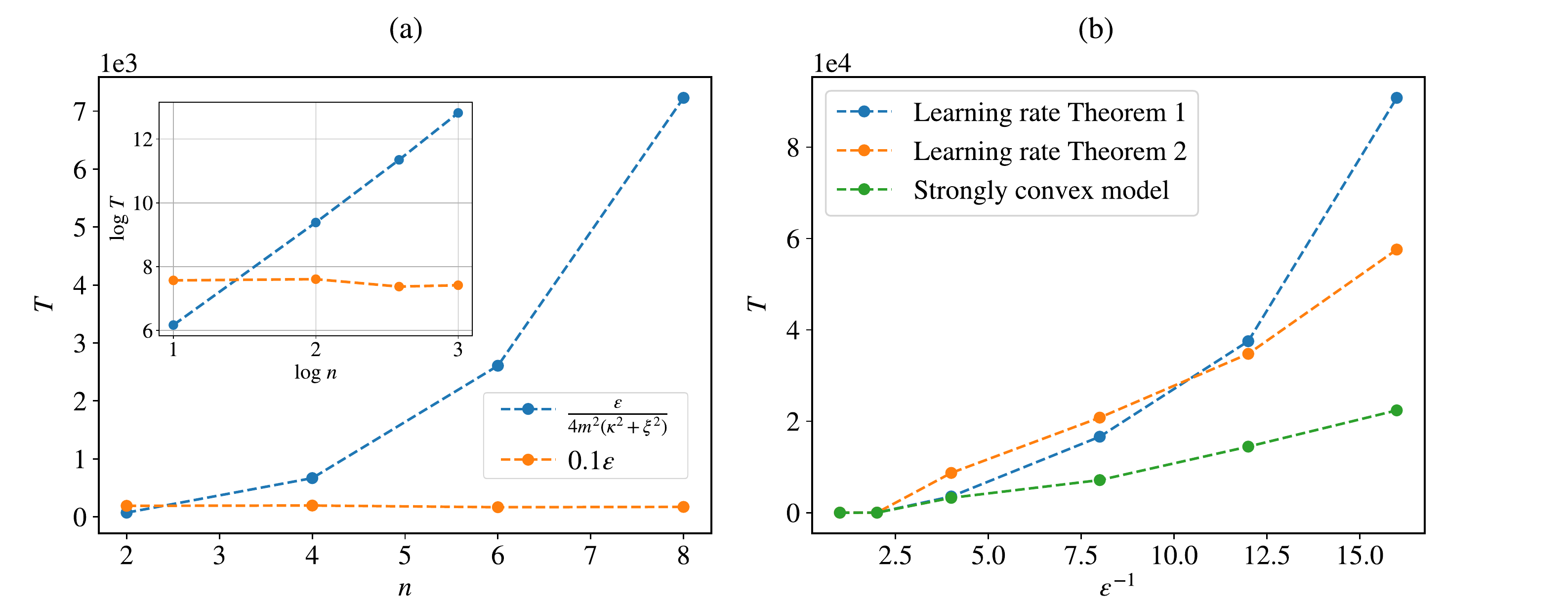}
    \caption{(a) Number of SGD iterations required to solve the QBM learning problem for precision $\epsilon=0.1$ versus system size. The learning rate from Theorem~\ref{appthm:training} is compared to a constant learning rate. The inset shows the log-log plot of the data in the main panel. (b) Number of SGD iterations to solve the QBM learning problem for system size $n=6$ versus the target precision $\epsilon^{-1}$. We compare a fully-connected QBM (blue and orange lines) to a 1D nearest-neighbor QBM (green line).}
    \label{fig:scale}
\end{figure}

\section*{\MakeUppercase{Supplementary Note 4: Guaranteed performance improvement by pre-training}}
\addcontentsline{toc}{section}{Supplementary Note 4: Guaranteed performance improvement by pre-training}

In this appendix, we first prove Theorem~\ref{appthm:pretraining} in the main text, and then discuss various pre-training models. 

\subsection*{Proof of Theorem~\ref{appthm:pretraining} in the main text}
\addcontentsline{toc}{subsection}{Proof of Theorem~\ref{appthm:pretraining} in the main text}

For completeness, we start by restating Theorem~\ref{appthm:pretraining} from the main text. 

\begingroup
\def\thetheorem{3}
\begin{theorem}[QBM pre-training]\label{appthm:pretraining}
    Assume a target $\eta$ and a QBM model $\rho_\theta=e^{\sum_{i=1}^m \theta_i H_i}/Z$ for which we like to minimize the relative entropy $S(\eta \|\rho_\theta)$. Initializing $\theta^0=0$ and pre-training $S(\eta\| \rho_\theta)$ with respect to any subset of $\tilde{m} \leq m $ parameters guarantees that 
    \begin{equation}
    \label{eq:pretrainrel_app}
         S(\eta \|\rho_{\theta^\mathrm{pre}}) \leq S(\eta \|\rho_{\theta^0}), 
    \end{equation} 
    where $\theta^\mathrm{pre} = [\chi^{\mathrm{pre}}, 0_{m-\tilde{m}}]$ and the vector $\chi^{\mathrm{pre}}$ of length $\tilde{m}$ contains the parameters for the terms $\{H_i\}_{i=1}^{\tilde{m}}$ at the end of pre-training.
    More precisely, starting from $\rho_\chi=e^{\sum_{i=1}^{\tilde{m}} \chi_i H_i}/Z$ and minimizing $S(\eta \|\rho_\chi)$ with respect to $\chi$ ensures Eq.~\eqref{eq:pretrainrel_app} for any $S(\eta \|\rho_{\chi^{\mathrm{pre}}})\leq S(\eta \|\rho_{\chi^0})$.
\end{theorem}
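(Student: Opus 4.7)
The plan is to reduce the theorem to two elementary reparameterization observations combined with the hypothesis that pre-training is a genuine descent on the restricted loss. The content of the statement is really that the common choice $\theta^0 = \chi^0 = 0$ provides a consistent embedding of the restricted optimization inside the full parameter space, so no technically deep argument is needed, only careful bookkeeping.

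First I would exploit the fact that, at the value $\theta^{\mathrm{pre}} = [\chi^{\mathrm{pre}}, 0_{m-\tilde{m}}]$, the ansatz in Eq.~\eqref{eq:generic_H} collapses,
\begin{equation}
    \mathcal{H}_{\theta^{\mathrm{pre}}} = \sum_{i=1}^m \theta_i^{\mathrm{pre}} H_i = \sum_{i=1}^{\tilde{m}} \chi_i^{\mathrm{pre}} H_i = \mathcal{H}_{\chi^{\mathrm{pre}}},
\end{equation}
so $\rho_{\theta^{\mathrm{pre}}} = \rho_{\chi^{\mathrm{pre}}}$ and hence $S(\eta\|\rho_{\theta^{\mathrm{pre}}}) = S(\eta\|\rho_{\chi^{\mathrm{pre}}})$. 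The same substitution at the initialization $\theta^0 = 0$ (equivalently $\chi^0 = 0$) gives $\rho_{\theta^0} = \rho_{\chi^0} = \mathbb{I}/2^n$, and therefore $S(\eta\|\rho_{\theta^0}) = S(\eta\|\rho_{\chi^0})$.

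Second, I would invoke the hypothesis of the theorem that the restricted minimization produces $\chi^{\mathrm{pre}}$ satisfying $S(\eta\|\rho_{\chi^{\mathrm{pre}}}) \leq S(\eta\|\rho_{\chi^0})$. Chaining the three relations yields
\begin{equation}
    S(\eta\|\rho_{\theta^{\mathrm{pre}}}) = S(\eta\|\rho_{\chi^{\mathrm{pre}}}) \leq S(\eta\|\rho_{\chi^0}) = S(\eta\|\rho_{\theta^0}),
\end{equation}
which is exactly Eq.~\eqref{eq:pretrainrel_app}.

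The ``hard part'' is conceptual rather than technical: one must recognize that the initialization $\theta^0 = 0$ is essential, because only at this point do the full and restricted parameterizations produce identical Gibbs states, thereby allowing the pre-training guarantee on the restricted problem to transfer to the full model without having to retrain the zeroed coordinates. The prerequisite that the pre-training genuinely decreases the loss is automatic for SGD on an $L$-smooth, strictly convex loss with a sufficiently small step size (as established in Supplementary Notes~2 and~3), and it is trivial for the closed-form mean-field and Gaussian Fermionic minimizers discussed in the remainder of this Supplementary Note. Strict convexity of $S(\eta\|\rho_\chi)$ further implies that the inequality is strict whenever $\chi^0 = 0$ is not already a global minimizer of the restricted problem.
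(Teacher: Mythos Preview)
Your proof is correct and takes essentially the same approach as the paper: both arguments exploit that embedding $\chi$ into $\theta=[\chi,0_{m-\tilde m}]$ leaves the Gibbs state unchanged, so the restricted descent transfers directly to the full model. The paper arrives at this via an algebraic expansion of the difference $S(\eta\|\rho_\theta)-S(\eta\|\rho_{\theta'})$ in terms of the Hamiltonian coefficients and partition functions, whereas you observe more directly that $\mathcal{H}_{\theta^{\mathrm{pre}}}=\mathcal{H}_{\chi^{\mathrm{pre}}}$ implies $\rho_{\theta^{\mathrm{pre}}}=\rho_{\chi^{\mathrm{pre}}}$ and hence equality of the relative entropies pointwise; the content is identical.
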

\endgroup

\begin{proof}
First, we relate the difference in relative entropy between two parameter vectors in the full space to the difference in relative entropy of the pre-trained parameter space. In particular, for any real parameter vectors $\theta = [\chi, 0_{m-\tilde{m}}]$ and $\theta' = [\chi', 0_{m-\tilde{m}}]$ we have
\begin{align}
\begin{split}
    S(\eta \|\rho_{\theta}) - S(\eta \|\rho_{\theta'}) &= \Tr\left[\eta \log \rho_{\theta'}\right] - \Tr\left[\eta \log \rho_\theta\right]\\
    &=\sum_{i=1}^m(\theta_i'-\theta_i)\Tr[\eta H_i] - \log\Tr[e^{\sum_{i=1}^m\theta_i'H_i}] + \log\Tr[e^{\sum_{i=1}^m\theta_iH_i}] \\
    &= \sum_{i=1}^{\tilde{m}}(\chi_i'-\chi_i)\Tr[\eta H_i] - \log\Tr[e^{\sum_{i=1}^{\tilde{m}}\chi_i'H_i}] + \log\Tr[e^{\sum_{i=1}^{\tilde{m}}\chi_iH_i}] \\
    &= \Tr[\eta \log \rho_{\chi'}] - \Tr[\eta \log \rho_\chi] \\
    &= S(\eta \|\rho_{\chi}) - S(\eta \|\rho_{\chi'})
\end{split}
\end{align}
Now using pre-training vectors $\theta^\mathrm{pre} = [\chi^{\mathrm{pre}}, 0_{m-\tilde{m}}]$ and $\theta^0 = [\chi^0, 0_{m-\tilde{m}}]=0$ we see that $S(\eta \|\rho_{\chi^{\mathrm{pre}}})\leq S(\eta \|\rho_{\chi^0})$ implies  $S(\eta \|\rho_{\theta^\mathrm{pre}}) \leq S(\eta \|\rho_{\theta^0})$. Thus, any method that finds such a $\chi^{\mathrm{pre}}$ guarantees Eq.~\eqref{eq:pretrainrel_app}.
\end{proof}

While conclusive, the above proof does not provide us with a method to find such a $\chi^{\mathrm{pre}}$, i.e., it is agnostic to the specific pre-training method. As a constructive example, let us consider minimizing $\chi^{\mathrm{pre}}$ with noiseless gradient descent on a subset of $\tilde{m}$ parameters. This means we update the subset parameters as $\chi^{t}=\chi^{t-1}-\gamma \tilde{\nabla}S(\eta \|\rho_{\chi^{t-1}}) $, where $\tilde{\nabla}S(\eta \|\rho_{\chi^{t-1}})$ is the gradient of the subset of parameters, and $\gamma$ the learning rate. Since $S$ is $L$-smooth, we can use the descent Lemma~\ref{lem:descent} to bound the difference in relative entropy of the subset
\begin{align}
\begin{split}
\label{eq:descentlemma}
S(\eta \|\rho_{\chi^{t}}) - S(\eta \|\rho_{\chi^{t-1}}) &\leq \nabla S(\eta \|\rho_{\chi^{t-1}})^T \left(-\tilde{\gamma}\nabla S(\eta \|\rho_{\chi^{t-1}}) \right) + \frac{L}{2} \| -\tilde{\gamma}\nabla S(\eta \|\rho_{\chi^{t-1}}) \|^2 \\
&= -\gamma\left(1-\frac{\gamma L}{2} \right) \|\tilde{\nabla} S(\eta \|\rho_{\chi^{t-1}}) \|^2.
\end{split}
\end{align} 
Setting $\gamma \leq \frac{2}{L}$ we obtain $S(\eta \|\rho_{\chi^{t}}) \leq S(\eta \|\rho_{\chi^{t-1}})$. By recursively applying this inequality we obtain a $\chi^{\mathrm{pre}}$ with $S(\eta \|\rho_{\chi^{\mathrm{pre}}}) \leq S(\eta \|\rho_{\chi^{0}})$, which by our theorem above ensures Eq.~\eqref{eq:pretrainrel_app}. Note that the smoothness $L$ here is the smoothness on the subset of parameters, which can be bounded by $L\leq 2\tilde{m} \max_i\| H_i\|_2^2$.

\subsection*{Pre-training models}
\addcontentsline{toc}{subsection}{Pre-training models}

Here we discuss possible pre-training models and strategies to optimize them. We focus on the models discussed in the main text: 1) a mean-field model, 2) a Gaussian Fermionic model, and 3) nearest-neighbor quantum spin models. The advantage of the first two models is that they can be trained analytically. While for the nearest-neighbor models, this is not possible, they satisfy the locality assumptions in \cite{Anshu_2021, Haah_2024}, and hence have a strongly convex relative entropy.

\subsubsection*{Mean-Field QBM}
\addcontentsline{toc}{subsubsection}{Mean-Field QBM}

We define the mean-field QBM by the parameterized Hamiltonian 
 \begin{equation}
    \mathcal{H}_\theta = \sum_i^n \theta^x_i \sigma_i^x + \theta^y_i \sigma_i^y +  \theta^z_i \sigma_i^z.
\end{equation} Since this Hamiltonian has a simple structure, in which many terms commute, we can find the optimal parameters analytically. First, recall that the QBM expectation values are given by \begin{equation}
    \langle H_i\rangle_{\rho_\theta} = \frac{\partial}{\partial \theta_i} \log{\Tr[e^{\mathcal{H}_\theta}]} =  \frac{\partial}{\partial \theta_i}\log{Z_\theta}. 
\end{equation} 
For the mean-field Hamiltonian, we find 
\begin{align}
    Z_\theta &= \Tr[e^{\sum_{i=1}^n\theta^x_i \sigma_i^x + \theta^y_i \sigma_i^y +  \theta^z_i \sigma_i^z}]
    =\prod_{i=1}^n \Tr[ e^{\theta^x_i \sigma_i^x + \theta^y_i \sigma_i^y + \theta^z_i \sigma_i^z}]
    = \prod_{i=1}^n 2\cosh{\|\theta_i\|_2},
\end{align} 
where we have defined $\|\theta_i\|_2=\sqrt{ {\theta_i^x}^2 + {\theta_i^y}^2 +{\theta_i^z}^2}$. Here we use the commutativity of single qubit operators in the first equality and expand the exponent for the second equality.  We therefore get \begin{equation}
\log{Z_\theta} = \sum^n_{i=1} \log{2\cosh{ \| \theta_i \|_2}}. 
\end{equation} 
From which the derivative follows as 
\begin{equation}
    \frac{\partial}{\partial \theta_i^{x,y,z}} \log{Z_\theta}
    = \frac{\theta_i^{x,y,z}}{\|\theta_i\|_2} \tanh{\|\theta_i\|_2}. 
\end{equation} 
In order to find the optimal QBM parameters for each qubit, $i$, we then solve the three coupled equations,
\begin{equation}
\label{eq:MFcoupled}
    \frac{\theta_i^{x,y,z}}{\|\theta_i\|_2} \tanh{\|\theta_i\|_2} = \langle \sigma_i^{x,y,z}\rangle_\eta,
\end{equation} 
which corresponds to setting the QBM derivative in Eq.~\eqref{eq:RelEntropDeriv} in the main text to zero. From the strict convexity of the relative entropy, we know this has one unique solution provided the target expectation values, $\langle \sigma_i^{x,y,z}\rangle_\eta$ form a consistent set, i.e., it comes from a density matrix. 
We can find the solution by squaring the three equations, and adding them together, giving
\begin{equation}
    \| \theta_i \|_2 = \tanh^{-1} \left(  \sqrt{\langle \sigma_i^{x}\rangle_\eta^2 + \langle \sigma_i^{y}\rangle_\eta^2 + \langle \sigma_i^{z}\rangle_\eta^2  } \right).
\end{equation}
Here we used that the argument of the tanh is always positive. Substituting this into Eq.~\eqref{eq:MFcoupled} we then find the closed-form solution of the QBM parameters
\begin{equation}
\theta_i^{x,y,z} = \langle \sigma_i^{x,y,z}\rangle_\eta \frac{ \tanh^{-1} \left( \sqrt{\langle \sigma_i^{x}\rangle_\eta^2 + \langle \sigma_i^{y}\rangle_\eta^2 + \langle \sigma_i^{z}\rangle_\eta^2  } \right) }{ \sqrt{\langle \sigma_i^{x}\rangle_\eta^2 + \langle \sigma_i^{y}\rangle_\eta^2 + \langle \sigma_i^{z}\rangle_\eta^2  } }.
\end{equation} 
In practice, the optimal parameters for an arbitrary mean-field QBM can be obtained by numerically evaluating this expression for the given target expectation values. 

\subsubsection*{Gaussian Fermionic QBM}
\addcontentsline{toc}{subsubsection}{Gaussian Fermionic QBM}

The Gaussian Fermionic QBM has a parameterized, quadratic, Fermionic Hamiltonian
\begin{equation}
    \mathcal{H}_\theta = \vec{C}^\dagger \tilde{\Theta} \vec{C} \equiv \sum_{i,j}\tilde{\Theta}_{ij} \vec{C}^\dagger_i \vec{C}_j.
\end{equation}
Here, $\vec{C}^\dagger = [c_1^\dagger, \dots, c_n^\dagger, c_1, \dots c_n ]$ is a vector containing $n$ Fermionic mode creation and annihilation operators, which satisfy the Fermionic commutation relations $\{c_i, c_j^\dagger\}=\delta_{i,j}$ and $\{c_i, c_j\}=0$. These Fermionic operators can be expressed as strings of Pauli operators by the Jordan-Wigner transformation. $\tilde{\Theta}$ is the  $2n\times 2n$ dimensional matrix containing the QBM model parameters $\theta$, which can be identified as a Fermionic single-particle Hamiltonian. Note that this matrix needs to be Hermitian, and since terms like $c^\dagger_1 c^\dagger_1$ are zero, it has in total $n^2$ free parameters. 

In order to find the optimal parameters, we use the fact that the single-particle correlation matrix with entries $ [\Gamma_{\rho_\theta}]_{ij} = \langle \vec{C}^\dagger_i \vec{C}_j \rangle_{\rho_\theta}$ contains sufficient information to compute all possible properties of the Gaussian quantum system. This includes all possible observables (via Wick's theorem), entanglement measures, and also sampling from $\rho_\theta$~\cite{Surace_2022}. In particular, the Gaussian Fermionic QBM gradient reduces to the difference in the correlation matrices of the target and the model
\begin{equation}
\frac{\partial S}{\partial \tilde{\Theta}_{ij}} = \langle \vec{C}^\dagger_i \vec{C}_j \rangle_{\rho_\theta} - \langle \vec{C}^\dagger_i \vec{C}_j \rangle_{\eta}.
\end{equation} 
We can solve this by first determining the target expectation values $\langle \vec{C}^\dagger_i \vec{C}_j \rangle_{\eta}$ and setting $\langle \vec{C}^\dagger_i \vec{C}_j \rangle_{\rho_\theta^*} = \langle \vec{C}^\dagger_i \vec{C}_j \rangle_{\eta}$. Then we use the fact that the Hamiltonian of a Gaussian Fermionic system can be written in the eigenbasis of the correlation matrix as 
\begin{equation}
     H_\eta =  \frac{1}{2} W_\eta \sigma^{-1}(\Lambda_\eta) W^\dagger_\eta,
\end{equation} 
where $W_\eta$ and $\Lambda_\eta$ is given by the eigendecomposition $\Gamma_{\eta} = W_\eta \Lambda_\eta W_\eta^\dagger$, and $\sigma^{-1}(X)$ the inverse sigmoid function. Thus, we (numerically) diagonalize $\Gamma_{\eta}$ and set the optimal Gaussian Fermionic QBM Hamiltonian equal to $H_{\theta^*}= \frac{1}{2}W_\eta \sigma^{-1}(\Lambda_\eta) W^\dagger_\eta$. Since the eigen decomposition of a Hermitian matrix is unique, we find one unique solution. This is in agreement with the strict convexity of the quantum relative entropy.

\subsubsection*{Geometrically-Local QBM}
\addcontentsline{toc}{subsubsection}{Geometrically-Local QBM}

The last type of restricted QBM model we discuss is the geometrically-local QBM. We focus on nearest-neighbor models on some $d$-dimensional lattice, for example a one-dimensional chain where each Pauli operator only acts on two neighboring qubits. In full generality, the geometrically-local QBM Hamiltonian is given by 
\begin{equation} 
\label{eq:nnmodel}
    \mathcal{H}_\theta = \sum_{k=x,y,z}\sum_{\langle i, j\rangle}\lambda^k_{ij}\sigma_i^k\sigma^k_j + \sum_i^n \gamma^k_i\sigma^k_i,
\end{equation} 
where the sum is over nearest-neighbor sites $\langle i,j \rangle$ of the lattice with periodic boundary conditions. In the main text, we consider for example a $d=1$ lattice (a ring), and a $d=2$ square lattice.

In order to use these models for pre-training, we train them with exact gradient descent on the relative entropy until a fixed precision is reached. Importantly, as these Hamiltonians only have $m \in \mathcal{O}(n)$ terms and a finite interaction range, Refs.~\cite{Anshu_2021, Haah_2024} show that the quantum relative entropy is strongly convex. Therefore, the optimization is guaranteed to converge quickly to the global optimum, recall Theorem~\ref{appthm:training_alpha}. However, this requires obtaining Gibbs state expectation values of geometrically local Hamiltonians. This can be done with a quantum computer or, potentially, with classical tensor networks~\cite{Kuwahara2021, Alhambra_2021}.

\section*{\MakeUppercase{Supplementary Note 5: Construction of the target state expectation values}}
\addcontentsline{toc}{section}{Supplementary Note 5: Construction of the target state expectation values}

In this appendix, we review how to embed classical data into a target density matrix $\eta$. We will follow the approach in~\cite{Kappen_2020} for quantum spin models. We also show how to extend this formalism to the Fermionic quantum models needed for the pre-training of our Gaussian Fermionic QBM. Lastly, we describe the two different targets used in our numerical simulations.

\subsection*{Classical data encoding}
\addcontentsline{toc}{subsection}{Classical data encoding}

Following~\cite{Kappen_2020}, one way to encode a classical data-set consisting of $N$ bit strings $\{\vec{s^\mu}\in\{0,1\}^n\}_{\mu=1}^M$ into a quantum state is by defining the pure state 
\begin{equation}
    \eta = \dyad{\psi},
\end{equation} 
with
\begin{equation}
    \ket{\psi} = \sum_{\vec{s}\in\{0,1\}^n} \sqrt{q(\vec{s})}\ket{\vec{s}}. 
\end{equation} 
Here $q(\vec{s})=\frac{1}{N}\sum_{\mu=1}^M\delta_{\vec{s}, \vec{s}^\mu}$ is the classical empirical probability for bitstring $\vec{s}$, and $\ket{\vec{s}}$ is a computational basis state indexed by $\vec{s}$. The $q(\vec{s})$ can be found by counting the bitstrings in the data set $\{\vec{s^\mu}\}$. From $\ket{\psi}$ one can compute expectation values such as 
\begin{equation} 
    \bra{\psi}\sigma^z_i\ket{\psi} = \sum_{\vec{s}\in\{0,1\}^n}q(\vec{s})\vec{s}_i
\end{equation} 
for the Pauli spin operator $\sigma^z_i$. This can be efficiently computed classically for a polynomially sized data set, i.e., for polynomially many $\vec{s^\mu}$. Computing such expectation values from $\eta$ is possible for all $1-$ and $2-$local Pauli operators as shown in Ref.~\cite{Kappen_2020}. 

We now show that we can generalize this encoding to Fermionic QBMs, i.e., the terms in the Hamiltonian ansatz consist of Fermionic creation $c_i^\dagger$ and annihilation operators $c_i$ . We define $\ket{\vec{s}}$ to be equal to the Fermionic Fock basis. This is analogous to the computational basis in the spin-picture (by the Jordan-Wigner transformation), but the bit-strings $\{\vec{s^\mu}\in\{0,1\}^n\}_{\mu=1}^M$ in the data set should now be interpreted as occupation-number vectors of Fermions. Note that the occupation-number basis is defined by the eigenstates of the Fermionic number operator $\sum_i c^\dagger_i c_i$. 

The creation and annihilation operators act on the Fock-basis states as follows 
\begin{equation} 
    c^\dagger_i \ket{\vec{s}} = (1-\vec{s}_i)\ket{\vec{s}+\vec{\delta_i}}, \hspace{1cm} c_i \ket{\vec{s}} = \vec{s}_i\ket{\vec{s}-\vec{\delta_i}},
\end{equation} 
where $\vec{\delta_i}$ is the unit bit-string with a $1$ at position $i$ and zeros everywhere else. With these relations, we can derive the required expectation values for the target $\eta$ to train the (Gaussian) Fermionic QBM   
\begin{align}
\begin{split}
    \bra{\psi} c^\dagger_i c_i \ket{\psi} &= \sum_{\vec{s}\in\{0,1\}^n}q(\vec{s})\vec{s}_i, \\
    \bra{\psi} c^\dagger_i c_j \ket{\psi} &= \sum_{\vec{s}\in\{0,1\}^n}\sqrt{q(\vec{s})q(F_i F_j\vec{s})}(1-\vec{s}_i)\vec{s}_j,
    \\
    \bra{\psi} c^\dagger_i c^\dagger_j \ket{\psi} &= \sum_{\vec{s}\in\{0,1\}^n}\sqrt{q(\vec{s})q(F_i F_j\vec{s})}(1-\vec{s}_i)(1-\vec{s}_j), \hspace{1cm} i\neq j
    \\
    \bra{\psi} c_i c_j \ket{\psi} &= \sum_{\vec{s}\in\{0,1\}^n}\sqrt{q(\vec{s})q(F_i F_j\vec{s})}\vec{s}_i\vec{s}_j, \hspace{1cm} i\neq j
    \\
    \bra{\psi} c_i\ket{\psi} &= \sum_{\vec{s}\in\{0,1\}^n}\sqrt{q(\vec{s})q(F_i \vec{s})}\vec{s}_i
    \\
    \bra{\psi} c_i^\dagger\ket{\psi} &= \sum_{\vec{s}\in\{0,1\}^n}\sqrt{q(\vec{s})q(F_i \vec{s})}(1-\vec{s}_i),
\end{split}
\end{align} 
where $F_i$ flips the Fermion occupation number (from occupied to unoccupied and vice versa) of index $i$ in the vector~$\vec{s}$. 

\subsection*{Data used for the numerical simulations}
\addcontentsline{toc}{subsection}{Data used for the numerical simulations}

For the numerical simulations, we use two different targets $\eta$: a target constructed from a quantum source, and a classical data set embedded into $\eta$ using the encoding above. For the quantum source we use the XXZ model Hamiltonian 
\begin{equation}
\label{eq:xxz}
    \mathcal{H}_{\mathrm{XXZ}} =\sum_{i=1}^{n-1}J\left(\sigma^{x}_i\sigma^{x}_{i+1}+\sigma^y_i\sigma^y_{i+1}\right)+\Delta \sigma_i^z\sigma^z_{i+1} + \sum_{i=1}^n h_z \sigma_i^z.
\end{equation} 
Here $J$ and $\Delta$ are parameters describing the Heisenberg interactions between the quantum spins on a one-dimensional lattice, and $h_z$ is the strength of an external magnetic field. We set $\eta=\frac{e^{\mathcal{H}_{\mathrm{XXZ}}}}{Z}$ with $J=-0.5$, $\Delta=-0.7$, and $h_z =- 0.8$, and compute the expectation values $\langle H_i\rangle_\eta$ classically. This is intractable in general, but our aim is to replicate the scenario in which the expectation values are measured experimentally. For example, from a state prepared on a quantum device. 

For the classical source, we use the classical salamander retina data set given in Ref.~\cite{Tkacik2014}. This data set consists of bit-string data recordings of different features of the response of cells in the salamander retina. We select the first $8$ features and trim the data to the first 10 data recordings. We then construct the expectation values $\langle H_i \rangle_\eta$ from the procedure outlined above.

\end{document}